\newtheorem{theorem}{Theorem}
\newtheorem{lemma}[theorem]{Lemma}
\newtheorem{example}[theorem]{Example}
\newcommand{\tr}{{\mathrm{Tr}}}
\newcommand{\gf}{{\mathrm{GF}}}
\newcommand{\wt}{{\mathtt{wt}}}
\newcommand{\cP}{{\mathcal{P}}}
\newcommand{\cB}{{\mathcal{B}}}
\newcommand{\C}{{\mathcal{C}}}
\newcommand{\bc}{{\mathbf{c}}}
\newcommand{\bD}{{\mathbb{D}}}
\begin{document}

\title{Shortened linear codes from APN and PN functions \thanks{The research of C. Xiang  was supported by the National Natural Science Foundation of
China under grant number 11701187. The research of C. Tang was supported by National Natural Science Foundation
of China under grant number 11871058 and China West Normal University (14E013, CXTD2014-4 and the Meritocracy Research Funds).
The research of C. Ding  was supported by the Hong Kong Research Grants Council, Project No. 16301020. }
}

\author{
Can Xiang, \thanks{C. Xiang is with the College of Mathematics and Informatics, South China Agricultural University, Guangzhou, Guangdong 510642, China (email:cxiangcxiang@hotmail.com). }
Chunming Tang \thanks{C. Tang  is with School of Mathematics  and Information,
China West Normal University, Nanchong, Sichuan 637002, China, and also with  the Department of Computer Science
                                                  and Engineering, The Hong Kong University of Science and Technology,
                                                  Clear Water Bay, Kowloon, Hong Kong, China (email: tangchunmingmath@163.com).}
and Cunsheng Ding \thanks{C. Ding is with the Department of Computer Science
                                                  and Engineering, The Hong Kong University of Science and Technology,
                                                  Clear Water Bay, Kowloon, Hong Kong, China (email: cding@ust.hk).}
}

\maketitle

\begin{abstract}
Linear codes generated by component functions of perfect nonlinear (PN) and almost perfect nonlinear (APN) functions
and the first-order Reed-Muller codes
have been an object of intensive study in coding theory.
The objective of this paper is to investigate some binary shortened codes of two families of linear codes from APN functions
and some $p$-ary shortened codes associated with PN functions.
The weight distributions of these shortened codes and the parameters of their duals are determined.
The parameters of these binary codes and $p$-ary codes are flexible.
Many of the codes presented in this paper are optimal or almost optimal.
The results of this paper show that the shortening technique is very promising for constructing good codes.

\end{abstract}

\begin{IEEEkeywords}
Linear code, \and shortened code, \and PN function, \and APN function, \and $t$-design
\end{IEEEkeywords}

\IEEEpeerreviewmaketitle

\section{Introduction}\label{Sec-introduct}

Let $\gf(q)$ denote the finite field with $q=p^m$ elements, where $p$ is a prime and $m$ is a positive integer. A $[v,\, k,\,d]$ linear code $\C$ over $\gf(q)$ is a $k$-dimensional subspace of $\gf(q)^v$ with minimum (Hamming) distance $d$.
Let $A_i$ denote the number of codewords with Hamming weight $i$ in a code
$\C$ of length $v$. The weight enumerator of $\C$ is defined by
$
1+A_1z+A_2z^2+ \cdots + A_v z^v.
$
The sequence $(1,A_1,\ldots,A_v)$ is called the weight distribution of $\C$ and is an important research topic in coding theory,
as it contains crucial information about the error correcting capability of the code. Thus the study of the weight distribution
has attracted much attention in coding theory and much work focuses on the determination of
the weight distributions of linear codes (see, for example, \cite{ding2018,Ding16,DingDing2,Ding15,sihem2020,sihem2017,Tangit2016,TXF2017,zhou20132,zhou20131}).
Denote by $\C^\bot$  and $(A_0^{\perp}, A_1^{\perp}, \dots, A_\nu^{\perp})$ the dual code of a linear code $\C$ and its weight distribution, respectively.
The \emph{Pless power moments} \cite{HP10}, i.e.,
\begin{align}\label{eq:PPM}
 \sum_{i=0}^\nu  i^t A_i= \sum_{i=0}^t (-1)^i A_i^{\perp}   \left [  \sum_{j=i}^t   j ! S(t,j) q^{k-j} (q-1)^{j-i} \binom{\nu-i}{\nu -j}  \right ],
 \end{align}
play an important role in calculating
the weight distributions of linear codes, where $A_0=1$, $0\le t \le \nu$ and $S(t,j)=\frac{1}{j!} \sum_{i=0}^j  (-1)^{j-i} \binom{j}{i} i^t$.
A code $\C$ is said to be a $t$-weight code  if the number of nonzero
$A_i$ in the sequence $(A_1, A_2, \cdots, A_v)$ is equal to $t$.  A  $[v,k,d]$ code over $\gf(q)$ is said to be \emph{distance-optimal} if no $[v,k,d']$ code over $\gf(q)$ with $d'>d$ exists, \emph{dimension-optimal} if no $[v,k',d]$ code over $\gf(q)$ with $k'>k$ exists, and \emph{length-optimal} if no $[v',k,d]$ code over $\gf(q)$ with
$v'< v$ exists. A linear code is said to be optimal if it is distance-optimal, or dimension-optimal, or length-optimal, or meets a bound
for linear codes.

Let $\C$ be a $[\nu,k,d]$ linear code over $\gf(q)$ and $T$ a set of $t$ coordinate positions in $\C$.
We use $\mathcal  C^T$ to denote the code obtained by puncturing $\mathcal  C$  on
$T$, which is called the  \emph{punctured code}  of $\mathcal C$ on $T$.
Let $\mathcal C(T)$ be the set of codewords of $\mathcal C$  which are
$0$ on $T$.
We now puncture $\mathcal C(T)$ on $T$, and obtain a linear code $\mathcal C_{T}$, which is called the \emph{shortened code} of $\mathcal C$ on $T$.
The following lemma plays an important role in determining the parameters of the punctured and shortened codes of $\mathcal{C}$.

\begin{lemma}\cite[Theorem 1.5.7]{HP10}\label{lem:C-S-P}
Let $\C$ be a $[\nu,k,d]$ linear code over $\gf(q)$ and  $d^{\perp}$  the minimum distance of $\mathcal  C^{\perp}$.
Let $T$ be any set of $t$ coordinate positions. Then the following hold:
\begin{itemize}
  \item $\left ( \mathcal C_{T} \right )^{\perp} = \left ( \mathcal C^{\perp}  \right)^T$ and $\left ( \mathcal C^{T} \right )^{\perp} = \left ( \mathcal C^{\perp}  \right)_T$.
  \item If $t<\min \{d, d^{\perp} \}$, then the codes $\mathcal C_{T}$ and $\mathcal C^T$ have  dimension  $k-t$  and $k$, respectively.
\end{itemize}
\end{lemma}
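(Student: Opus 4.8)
The plan is to establish the two duality identities first and then read off the dimension formulas from them. Throughout, for a vector $\bx \in \gf(q)^\nu$ I write $\bx|_{\overline{T}}$ and $\bx|_T$ for the restrictions of $\bx$ to the coordinates outside and inside $T$, so that puncturing on $T$ is the map $\bx \mapsto \bx|_{\overline{T}}$ and $\C(T) = \{\bx \in \C : \bx|_T = \bzero\}$. I first prove $(\C_T)^{\perp} = (\C^{\perp})^T$, both viewed as subspaces of $\gf(q)^{\nu-t}$. One inclusion is immediate: if $\by \in \C^{\perp}$ and $\bx|_{\overline{T}} \in \C_T$ with $\bx \in \C(T)$, then since $\bx|_T = \bzero$ the inner product $\langle \by, \bx \rangle$ collapses to $\langle \by|_{\overline{T}}, \bx|_{\overline{T}} \rangle$, which vanishes because $\bx \in \C$; hence $(\C^{\perp})^T \subseteq (\C_T)^{\perp}$.

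Next I would upgrade this inclusion to an equality by matching dimensions. The crucial observation concerns the restriction map $\rho_T \colon \C \to \gf(q)^t$, $\bx \mapsto \bx|_T$: its image $\rho_T(\C)$ and the space $S$ of codewords of $\C^{\perp}$ supported on $T$ are orthogonal complements of one another inside $\gf(q)^t$, because a vector $\bz \in \gf(q)^t$ satisfies $\langle \bz, \bx|_T \rangle = 0$ for all $\bx \in \C$ exactly when its zero-extension $\tilde{\bz}$ (zero outside $T$) lies in $\C^{\perp}$. Thus $\dim \rho_T(\C) = t - \dim S$. Combining this with $\dim \C_T = \dim \C(T) = k - \dim \rho_T(\C)$ (the first equality because restriction to $\overline{T}$ is injective on $\C(T)$) and with $\dim (\C^{\perp})^T = (\nu - k) - \dim S$ shows that $\dim (\C_T)^{\perp} = (\nu - t) - \dim \C_T = \nu - k - \dim S = \dim (\C^{\perp})^T$, so the inclusion above is forced to be an equality. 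The companion identity $(\C^T)^{\perp} = (\C^{\perp})_T$ then follows formally by applying the first identity to $\C^{\perp}$ in place of $\C$ and taking duals, using $(\C^{\perp})^{\perp} = \C$.

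For the second bullet I would read the dimensions directly off the identities just proved, together with the elementary fact that puncturing a code $\mathcal{D}$ on $T$ preserves its dimension precisely when no nonzero codeword of $\mathcal{D}$ is supported on $T$. When $t < d$ no nonzero codeword of $\C$ has weight $\le t$, so $\dim \C^T = \dim \C = k$; when $t < d^{\perp}$ the same statement applied to $\C^{\perp}$ gives $\dim (\C^{\perp})^T = \nu - k$, whence $\dim \C_T = (\nu - t) - \dim (\C^{\perp})^T = k - t$ via the first identity. The main obstacle is the orthogonal-complement computation that equates $\dim \rho_T(\C)$ with $t - \dim S$; this is the one genuinely nontrivial step, and once it is in place both the duality and the dimension counts are short pieces of linear-algebra bookkeeping.
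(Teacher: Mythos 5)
Your proof is correct, and every step checks out: the inclusion $(\C^{\perp})^T \subseteq (\C_T)^{\perp}$, the key orthogonal-complement identity $\rho_T(\C)^{\perp} = S$ inside $\gf(q)^t$ (which is sound because $\langle \bz, \bx|_T\rangle = \langle \tilde{\bz}, \bx\rangle$ for the zero-extension $\tilde{\bz}$), the rank--nullity bookkeeping giving $\dim \C_T = k - t + \dim S$ and $\dim (\C^{\perp})^T = (\nu-k) - \dim S$, the formal deduction of $(\C^T)^{\perp} = (\C^{\perp})_T$ by substituting $\C^{\perp}$ for $\C$ and using $(\C^{\perp})^{\perp} = \C$, and the dimension claims in the second bullet (note that you in fact prove something slightly finer: $t < d$ alone gives $\dim \C^T = k$, and $t < d^{\perp}$ alone gives $\dim \C_T = k - t$). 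One point of comparison is moot, however: the paper does not prove this lemma at all --- it is quoted verbatim from Huffman and Pless \cite[Theorem 1.5.7]{HP10} --- so there is no in-paper argument to measure yours against. Your argument is essentially the standard textbook one (an easy inclusion upgraded to equality by a dimension count), packaged cleanly around the single nontrivial observation that the restriction image $\rho_T(\C)$ and the $T$-supported part of $\C^{\perp}$ are mutual annihilators in $\gf(q)^t$; as a self-contained replacement for the citation it is complete and would compile into the paper without any gap.
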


The shortening and puncturing techniques are two important approaches to constructing new linear codes. Very recently, Tang et al. obtained some ternary linear codes with few weights by shortening and puncturing a class of ternary codes in \cite{Tangdcc2019}. Afterwards, they  presented a general theory for punctured and shortened codes
of linear codes supporting t-designs and generalized the Assmus-Mattson theorem in \cite{Tangit2019}.
Liu et al. studied some shortened linear codes over finite fields in \cite{LDT20}.
However, till now not much work about shortened codes has been done and
it is in general hard to determine the weight distributions of shortened codes. Motivated by these facts,
we investigate some shortened codes of linear codes from  almost perfect nonlinear (APN) and  perfect nonlinear (PN) functions,  and determine their parameters in this paper. Many of these shortened codes are optimal or almost optimal.

The rest of this paper is arranged as follows. Section \ref{sec-pre} introduces some notation and results
related to group characters, Gauss sums, $t$-designs and linear codes from APN and PN functions. Section \ref{sec-general} gives some general results about shortened codes. Section \ref{sec-APN} investigates
some shortened codes of binary linear codes from APN functions. Section \ref{sec-PN} studies some shortened codes of two classes of special linear codes from PN functions. Section  \ref{sec-summary}
concludes this paper and makes concluding remarks.

\section{Preliminaries}\label{sec-pre}

In this section, we briefly recall some results on group characters, Gauss sums, $t$-designs, and linear codes from APN and PN functions. These results will be used later in this paper. We begin this section by fixing some notation
throughout this paper.

\begin{itemize}
\item $p$ is a prime and $p^*=(-1)^{(p-1)/2}p$ for odd prime $p$.
  \item $\zeta_p=e^{\frac{2\pi \sqrt{-1}}{p}}$ is the primitive $p$-th
root of unity.
\item $q$ is a power of $p$.
\item $\gf(q)^*=\gf(q)\setminus \{0\}$.
\item $\tr_{q/p}$ is the trace function from $\gf(q)$ to $\gf(p)$.
\item $\textup{SQ}$ and $\textup{N\textup{SQ}}$ denote the set of all  squares and nonsquares in $\gf(p)^{*}$, respectively.
\item $\eta$ and $\bar{\eta}$ are the quadratic characters of $\gf(q)^{*}$ and  $\gf(p)^{*}$, repsectively. We extend these quadratic characters
by letting $\eta(0)=0$ and $\bar{\eta}(0)=0$.
\end{itemize}

\subsection{Group characters and Gauss sums}

An additive character of $\gf(q)$ is a nonzero function $\chi$
from $\gf(q)$ to the set of nonzero complex numbers such that
$\chi(x+y)=\chi(x) \chi(y)$ for any pair $(x, y) \in \gf(q)^2$.
For each $b\in \gf(q)$, the function
\begin{eqnarray}\label{dfn-add}
\chi_b(x)=\zeta_p^{\tr_{q/p}(bx)}
\end{eqnarray}
defines an additive character of $\gf(q)$. When $b=0$,
$\chi_0(x)=1 \mbox{ for all } x\in\gf(q),
$
and $\chi_0$ is called the {\em trivial additive character} of
$\gf(q)$. The character $\chi_1$ in (\ref{dfn-add}) is called the
{\em canonical additive character} of $\gf(q)$.
It is well known that every additive character of $\gf(q)$ can be
written as $\chi_b(x)=\chi_1(bx)$ \cite[Theorem 5.7]{LN}. The orthogonality  relation of additive characters is given by
$$\sum_{x\in \gf(q)}\chi_1(ax)=\left\{
\begin{array}{rl}
q    &   \mbox{ for }a=0,\\
0    &   \mbox{ for }a\in \gf(q)^*.
\end{array} \right. $$

The Gauss sum $G(\eta, \chi_1)$ over $\gf(q)$ is defined by
\begin{eqnarray}
G(\eta, \chi_1)=\sum_{x \in \gf(q)^*} \eta(x) \chi_1(x) = \sum_{x \in \gf(q)} \eta(x) \chi_1(x)
\end{eqnarray}
and
the Gauss sum $G(\bar{\eta}, \bar{\chi}_1)$ over $\gf(p)$ is defined by
\begin{eqnarray}
G(\bar{\eta}, \bar{\chi}_1)=\sum_{x \in \gf(p)^*} \bar{\eta}(x) \bar{\chi}_1(x)
= \sum_{x \in \gf(p)} \bar{\eta}(x) \bar{\chi}_1(x),
\end{eqnarray}
where $\bar{\chi}_1$ is the canonical additive character of $\gf(p)$.

The following four lemmas are proved in \cite[Theorems 5.15, 5.33, Corollary 5.35]{LN} and \cite[Lemma 7]{DingDing2}, respectively.

\begin{lemma}\cite{LN} \label{lem-32A1}
Let $q=p^m$ and $p$ be an odd prime. Then
\begin{eqnarray*}G(\eta,\chi_1)&=&(-1)^{m-1}(\sqrt{-1})^{(\frac{p-1}{2})^2m}\sqrt{q}\\
 &=&\left\{
\begin{array}{lll}
(-1)^{m-1}\sqrt{q}    &   \mbox{ for }p\equiv 1\pmod{4},\\
(-1)^{m-1}(\sqrt{-1})^{m}\sqrt{q}    &   \mbox{ for }p\equiv 3\pmod{4}.
\end{array} \right.
\end{eqnarray*}
and
$$
G(\bar{\eta}, \bar{\chi}_1)= \sqrt{-1}^{(\frac{p-1}{2})^2 } \sqrt{p}=\sqrt{p*}.
$$
\end{lemma}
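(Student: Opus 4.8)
The plan is to split the statement into two essentially independent tasks: first evaluate the quadratic Gauss sum $G(\bar{\eta},\bar{\chi}_1)$ over the prime field $\gf(p)$, including the precise determination of its argument, and then lift this value to $\gf(q)$ by the Davenport--Hasse relation. The magnitude is cheap in both cases: a standard computation using the orthogonality relation for additive characters gives $G(\bar{\eta},\bar{\chi}_1)\overline{G(\bar{\eta},\bar{\chi}_1)}=p$, hence $|G(\bar{\eta},\bar{\chi}_1)|=\sqrt{p}$, and similarly $|G(\eta,\chi_1)|=\sqrt{q}$. All the real content lies in fixing the correct fourth root of unity.

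For the prime field I would first rewrite $G(\bar{\eta},\bar{\chi}_1)$ as a pure quadratic sum. Writing the number of solutions of $y=x^2$ as $1+\bar{\eta}(y)$ for $y\ne 0$, one checks that
$$\sum_{x\in\gf(p)}\zeta_p^{x^2}=1+\sum_{y\in\gf(p)^*}\bigl(1+\bar{\eta}(y)\bigr)\zeta_p^{y}=G(\bar{\eta},\bar{\chi}_1),$$
so it suffices to evaluate $g:=\sum_{x\in\gf(p)}\zeta_p^{x^2}$. Determining its value exactly (equivalently, distinguishing $\pm\sqrt{p}$ when $p\equiv1\pmod4$ and $\pm\sqrt{-1}\,\sqrt{p}$ when $p\equiv3\pmod4$) is the \textbf{main obstacle} and is exactly the classical theorem of Gauss. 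I would prove it by Schur's eigenvalue method: consider the $p\times p$ discrete Fourier matrix $F=(\zeta_p^{jk})$, observe that $\operatorname{tr}(F)=g$, and use $F^{2}=pP$ (with $P$ the order-two permutation matrix $k\mapsto -k$) together with $F^{4}=p^2 I$ to conclude that the eigenvalues of $F$ lie in $\{\pm\sqrt{p},\pm\sqrt{-1}\,\sqrt{p}\}$. Counting the multiplicities of $\pm1$ as eigenvalues of $P$ (it has $1+(p-1)/2$ fixed/symmetric directions and $(p-1)/2$ anti-symmetric ones) splits these four candidates into the real and imaginary cases, and the remaining sign bookkeeping pins down the trace, yielding $g=\sqrt{p^*}$. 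Comparing with the stated closed form amounts to checking the exponent parity, namely $((p-1)/2)^2\equiv0\pmod4$ for $p\equiv1\pmod4$ and $\equiv1\pmod4$ for $p\equiv3\pmod4$, so that $(\sqrt{-1})^{((p-1)/2)^2}\sqrt{p}=\sqrt{p^*}$.

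For the extension field I would invoke the Davenport--Hasse lifting relation, which for the lifts $\chi'=\chi\circ N_{q/p}$ and $\psi'=\psi\circ\tr_{q/p}$ of characters on $\gf(p)$ reads $G(\chi',\psi')=(-1)^{m-1}G(\chi,\psi)^m$. The key observation is that the relevant characters on $\gf(q)$ are exactly such lifts: from $N_{q/p}(x)^{(p-1)/2}=x^{(q-1)/2}$ one gets $\eta=\bar{\eta}\circ N_{q/p}$, while $\chi_1=\bar{\chi}_1\circ\tr_{q/p}$ holds by definition of $\chi_1$. Hence
$$G(\eta,\chi_1)=(-1)^{m-1}G(\bar{\eta},\bar{\chi}_1)^m=(-1)^{m-1}\bigl((\sqrt{-1})^{((p-1)/2)^2}\sqrt{p}\bigr)^m=(-1)^{m-1}(\sqrt{-1})^{((p-1)/2)^2m}\sqrt{q},$$
which is the claimed formula. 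The final case distinction then follows by reducing the exponent $((p-1)/2)^2m$ modulo $4$ as above: it is $\equiv0$ for $p\equiv1\pmod4$, giving $(-1)^{m-1}\sqrt{q}$, and $\equiv m$ for $p\equiv3\pmod4$, giving $(-1)^{m-1}(\sqrt{-1})^{m}\sqrt{q}$.
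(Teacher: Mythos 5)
Your proposal is mathematically sound, but the comparison here is of an unusual kind: the paper gives \emph{no} proof of this lemma at all --- it is quoted as a known result from Lidl and Niederreiter (the text preceding it says the lemma is proved in \cite[Theorem 5.15]{LN}). So the only meaningful comparison is with that cited source, and there your overall architecture coincides with it: the standard proof of this statement evaluates the quadratic Gauss sum over the prime field $\gf(p)$ first and then lifts to $\gf(q)=\gf(p^m)$ by the Davenport--Hasse relation $G(\chi',\psi')=(-1)^{m-1}G(\chi,\psi)^m$, exactly as you do. Your supporting details are all correct: the identity $\sum_{x\in\gf(p)}\zeta_p^{x^2}=G(\bar{\eta},\bar{\chi}_1)$, the verification that $\eta=\bar{\eta}\circ \Norm_{q/p}$ (via $\Norm_{q/p}(x)^{(p-1)/2}=x^{(q-1)/2}$) and $\chi_1=\bar{\chi}_1\circ \tr_{q/p}$ so that Davenport--Hasse applies, and the reduction of the exponent $\left(\frac{p-1}{2}\right)^2 m$ modulo $4$ that produces the two displayed cases. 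The one place where your sketch genuinely underplays the work is the prime-field sign determination. Schur's eigenvalue method, as you set it up, correctly yields the multiplicities $a+b=(p+1)/2$ and $c+d=(p-1)/2$ for the eigenvalues $\pm\sqrt{p}$ and $\pm\sqrt{-1}\sqrt{p}$ of the matrix $F=(\zeta_p^{jk})$, and the parity argument with $|\mathrm{tr}(F)|=\sqrt{p}$ does split the real and imaginary cases; but what you call ``the remaining sign bookkeeping'' is the actual crux of Schur's proof: one must evaluate $\det F$ as a Vandermonde product, track its argument, and compare it with the product of the eigenvalues $p^{p/2}(-1)^b\,(\sqrt{-1})^{\,c-d}$ to decide between $+\sqrt{p^*}$ and $-\sqrt{p^*}$. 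That computation is classical and certainly succeeds, so there is no gap in the method, but in a complete write-up it is the bulk of the argument and cannot be compressed into half a sentence; alternatively you could simply cite Gauss's sign theorem at that point, which is precisely what the paper implicitly does by citing \cite{LN} for the whole lemma.
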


\begin{lemma}\cite{LN} \label{lem-32A2}
Let $\chi$ be a nontrivial additive character of $\gf(q)$ with $q$ odd, and let
$f(x)=a_2x^2+a_1x+a_0 \in \gf(q)[x]$ with $a_2 \ne 0$. Then
$$
\sum_{x \in \gf(q)} \chi(f(x)) = \chi(a_0-a_1^2(4a_2)^{-1}) \eta(a_2) G(\eta, \chi).
$$
\end{lemma}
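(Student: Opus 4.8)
The plan is to reduce the sum to a pure quadratic Gauss sum by completing the square, and then to evaluate that quadratic sum by a square-root counting argument. First I would use $a_2 \ne 0$ together with the invertibility of $2$ in $\gf(q)$ (since $q$ is odd) to write
$$
f(x) = a_2\left(x + \frac{a_1}{2a_2}\right)^2 + \left(a_0 - \frac{a_1^2}{4a_2}\right).
$$
The multiplicativity $\chi(u+v)=\chi(u)\chi(v)$ then factors out the constant term, and the substitution $y = x + a_1/(2a_2)$, which is a bijection of $\gf(q)$, gives
$$
\sum_{x \in \gf(q)} \chi(f(x)) = \chi\left(a_0 - \frac{a_1^2}{4a_2}\right) \sum_{y \in \gf(q)} \chi(a_2 y^2).
$$
It then remains to prove that $\sum_{y \in \gf(q)} \chi(a_2 y^2) = \eta(a_2) G(\eta, \chi)$.

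Second, I would evaluate the quadratic sum by grouping the values $y$ according to $z = y^2$. The number of $y \in \gf(q)$ satisfying $y^2 = z$ equals $1 + \eta(z)$ for every $z \in \gf(q)$: this count is $1$ when $z = 0$ (here the convention $\eta(0)=0$ is crucial), it is $2$ when $z$ is a nonzero square, and it is $0$ when $z$ is a nonsquare. Substituting this count yields
$$
\sum_{y \in \gf(q)} \chi(a_2 y^2) = \sum_{z \in \gf(q)} (1 + \eta(z)) \chi(a_2 z) = \sum_{z \in \gf(q)} \chi(a_2 z) + \sum_{z \in \gf(q)} \eta(z) \chi(a_2 z).
$$
The first of these two sums vanishes by the orthogonality relation for additive characters, because $a_2 \ne 0$ makes $z \mapsto \chi(a_2 z)$ a nontrivial additive character of $\gf(q)$.

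Third, for the remaining sum I would substitute $w = a_2 z$ and use the multiplicativity of $\eta$ together with $\eta(a_2^{-1}) = \eta(a_2)$ (which holds since $\eta$ takes values in $\{\pm 1\}$), obtaining $\sum_{z} \eta(z) \chi(a_2 z) = \eta(a_2^{-1}) \sum_{w} \eta(w) \chi(w) = \eta(a_2) G(\eta, \chi)$. Combining the three steps gives the claimed formula. I do not expect a genuine obstacle, as this is a standard completion-of-the-square evaluation; the only points demanding care are the correct accounting of the $z=0$ term in the square-root count, where the convention $\eta(0)=0$ is what makes the single uniform expression $1+\eta(z)$ valid, and the verification that the change of variables $w = a_2 z$ transforms the character sum into $G(\eta,\chi)$ up to the factor $\eta(a_2)$.
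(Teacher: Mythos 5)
Your proof is correct and complete. The paper does not prove this lemma itself---it is quoted as a known result from \cite{LN} (Theorem 5.33 of Lidl--Niederreiter)---and your argument, completing the square and then evaluating $\sum_{y}\chi(a_2y^2)$ via the root-count identity $\#\{y: y^2=z\}=1+\eta(z)$ together with orthogonality of additive characters and $\eta(a_2^{-1})=\eta(a_2)$, is precisely the standard textbook proof of that cited result, with the delicate points (the $\eta(0)=0$ convention and the nontriviality of $z\mapsto\chi(a_2z)$) handled correctly.
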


\begin{lemma}\cite{LN} \label{lem-charactersum-evenq}
Let $\chi_b$ be a nontrivial additive character of $\gf(q)$ with $q$ even and  $f(x)=a_2x^2+a_1x+a_0\in \gf(q)[x]$, where $b\in \gf(q)^*$.  Then
$$\sum_{x\in \gf(q)}\chi_b(f(x))=\left\{\begin{array}{ll}
\chi_b(a_0)q    &   \mbox{ if }a_2=ba_{1}^{2},\\
0    &   \mbox{ otherwise. }
\end{array} \right.$$
\end{lemma}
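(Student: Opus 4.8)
The plan is to evaluate the sum directly by exploiting the fact that in characteristic $2$ the Frobenius map $y \mapsto y^2$ is a trace-preserving automorphism of $\gf(q)$; this lets us collapse the quadratic term into a linear one and then invoke the orthogonality relation for additive characters recorded above. Since $q$ is even we have $p=2$, so $\chi_b(x) = \chi_1(bx) = \zeta_2^{\tr_{q/2}(bx)} = (-1)^{\tr_{q/2}(bx)}$. First I would use the multiplicativity $\chi_b(u+v) = \chi_b(u)\chi_b(v)$ to factor off the constant term, writing
$$\sum_{x \in \gf(q)} \chi_b(f(x)) = \chi_b(a_0) \sum_{x \in \gf(q)} \chi_b(a_2 x^2 + a_1 x).$$

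The heart of the argument is the identity $\tr_{q/2}(y^2) = \tr_{q/2}(y)$, valid because $\tr_{q/2}(y) = \sum_{i=0}^{m-1} y^{2^i}$ and squaring merely cyclically permutes these conjugates (using $y^{2^m}=y$). Introduce the unique $c \in \gf(q)$ with $c^2 = b a_2$, which exists since Frobenius is a bijection. Then
$$\tr_{q/2}\!\left(b a_2 x^2\right) = \tr_{q/2}\!\left((cx)^2\right) = \tr_{q/2}(cx),$$
so that $\chi_b(a_2 x^2 + a_1 x) = (-1)^{\tr_{q/2}((c + b a_1)x)} = \chi_1\!\left((c + b a_1)x\right)$. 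With the quadratic term linearized, the inner sum becomes $\sum_{x \in \gf(q)} \chi_1((c + b a_1)x)$, to which the orthogonality relation applies: it equals $q$ when $c + b a_1 = 0$ and $0$ otherwise.

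Finally I would translate the vanishing condition back into a condition on the coefficients. In characteristic $2$, $c + b a_1 = 0$ is equivalent to $c = b a_1$; squaring gives $c^2 = b^2 a_1^2$, i.e. $b a_2 = b^2 a_1^2$, and dividing by $b \neq 0$ yields exactly $a_2 = b a_1^2$. Combining the factored constant $\chi_b(a_0)$ with the two cases for the inner sum then produces the claimed formula.

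The main obstacle — really the only nontrivial idea — is recognizing that one cannot complete the square here as in the odd-characteristic Lemma \ref{lem-32A2}, since $4a_2 = 0$ in characteristic $2$; the replacement is the Frobenius trace-invariance $\tr_{q/2}(y^2) = \tr_{q/2}(y)$, which performs the linearization in its place. Everything after that is bookkeeping with the orthogonality relation.
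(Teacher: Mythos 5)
Your proof is correct. Note that the paper itself gives no proof of this lemma --- it is quoted directly from Lidl--Niederreiter \cite[Corollary 5.35]{LN} --- so there is nothing internal to compare against; your argument is essentially the standard one used there. Every step checks out: the factoring of $\chi_b(a_0)$, the identity $\tr_{q/2}(y^2)=\tr_{q/2}(y)$ (since squaring cyclically permutes the conjugates $y^{2^i}$), the introduction of the unique $c$ with $c^2=ba_2$ to linearize $\tr_{q/2}(ba_2x^2)=\tr_{q/2}(cx)$, the orthogonality relation, and the translation $c+ba_1=0 \iff ba_2=b^2a_1^2 \iff a_2=ba_1^2$ via injectivity of the Frobenius. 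The edge case $a_2=0$ (so $c=0$) is also handled consistently by your equivalences. Your closing remark correctly identifies why the odd-characteristic technique of Lemma \ref{lem-32A2} (completing the square) is unavailable and what replaces it; an equivalent way to phrase the same idea is that $x\mapsto a_2x^2+a_1x$ is additive in characteristic $2$, so $x\mapsto\chi_b(a_2x^2+a_1x)$ is itself an additive character, whose sum is $q$ or $0$ according to whether it is trivial --- but this is the same computation, not a different proof.
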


\begin{lemma} \cite{DingDing2} \label{lem-bothcharac}
Let $p$ be an odd prime.
If $m \ge 2$ is even, then $\eta(x)=1$ for each $x \in \gf(p)^*$.
If $m \ge 1$ is odd, then  $\eta(x)=\bar{\eta}(x)$ for each $x \in \gf(p)$.
\end{lemma}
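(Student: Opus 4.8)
The plan is to reduce everything to the multiplicative description of the two quadratic characters. Recall that for $y \in \gf(q)^*$ one has $\eta(y) = y^{(q-1)/2}$ and for $y \in \gf(p)^*$ one has $\bar{\eta}(y) = y^{(p-1)/2}$, where these powers are read as the values $\pm 1$. Since $\gf(p)^* \subseteq \gf(q)^*$, the restriction of $\eta$ to $\gf(p)^*$ is a multiplicative character of order dividing two, and it is completely determined once the exponent $(q-1)/2$ is reduced modulo the order $p-1$ of the group $\gf(p)^*$. The whole proof is therefore an exponent computation, and the engine driving it is the factorization $q-1 = p^m-1 = (p-1)S$, where $S = \sum_{i=0}^{m-1} p^i = p^{m-1} + \cdots + p + 1$.

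The single observation I would isolate is that $S$ is a sum of $m$ odd summands (each $p^i$ is odd because $p$ is odd), so $S \equiv m \pmod 2$; the parity of $S$ coincides with the parity of $m$. This cleanly splits the argument into the two announced cases. In the even case $m \ge 2$, the integer $S$ is even, so $(q-1)/2 = (p-1)(S/2)$ is a multiple of $p-1$; for any $x \in \gf(p)^*$ Fermat's little theorem gives $x^{p-1}=1$, whence $\eta(x) = x^{(q-1)/2} = (x^{p-1})^{S/2} = 1$, which is the first assertion. In the odd case $m \ge 1$, the integer $S$ is odd, say $S = 2t+1$, so that $(q-1)/2 = (p-1)t + (p-1)/2$; using $x^{p-1}=1$ again for $x \in \gf(p)^*$ yields $\eta(x) = x^{(q-1)/2} = x^{(p-1)/2} = \bar{\eta}(x)$, and the value at $x=0$ matches by the convention $\eta(0)=\bar{\eta}(0)=0$. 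This gives the second assertion.

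I do not expect a serious obstacle here: the statement is a standard fact about when prime-field elements are squares in an extension, and the entire content is the factorization $p^m-1=(p-1)S$ combined with the parity count $S \equiv m \pmod 2$. The only point demanding care is the exponent bookkeeping, namely that reducing $(q-1)/2$ modulo $p-1$ is legitimate exactly because $x$ lies in the prime field and hence satisfies $x^{p-1}=1$. This is what lets the extension-field character $\eta$ collapse onto $\gf(p)$, and it is the step where one must be sure not to confuse the two moduli $q-1$ and $p-1$.
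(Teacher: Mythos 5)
Your proof is correct: Euler's criterion $\eta(x)=x^{(q-1)/2}$, the factorization $p^m-1=(p-1)\sum_{i=0}^{m-1}p^i$, Fermat's little theorem, and the parity observation $\sum_{i=0}^{m-1}p^i\equiv m \pmod 2$ settle both cases, with $x=0$ in the odd case handled by the convention $\eta(0)=\bar{\eta}(0)=0$. Note that the paper states this lemma without proof, citing \cite[Lemma 7]{DingDing2}; your argument is the standard one used for that result, so it is essentially the same approach as the source.
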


Let $e$ be a positive integer and $(a, b)\in \gf(q)^2$, define the exponential sum
\begin{equation}\label{def S}
S_e (a, b)=\sum\limits_{x\in \gf(q)}\chi_1\left(a
x^{p^{e}+1}+b x\right).
\end{equation}
Then we have the following five known results.

\begin{lemma}\cite{Coult}\label{expsum}
Let $e$ be a positive integer and $m$ be even with $\gcd(m,e)=1$. Let $p=2$, $q=2^m$ and $a\in \gf(q)^*$. Then
\begin{eqnarray*}
S_e(a,0)=\left\{
\begin{array}{l}
(-1)^{\frac{m}{2}} 2^{\frac{m}{2}}       ~~~~~~\mbox{ if $a \ne \alpha^{3t}$ for any $t$,} \\
-(-1)^{\frac{m}{2}} 2^{\frac{m}{2}+1}      ~\mbox{ if $a = \alpha^{3t}$ for some $t$,}
\end{array}
\right.
\end{eqnarray*}
where $\alpha$ is a generator of $\gf(q)^*$.
\end{lemma}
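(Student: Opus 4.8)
Since $\chi_1(u)=(-1)^{\tr_{q/2}(u)}\in\{+1,-1\}$ in characteristic $2$, the sum $S_e(a,0)=\sum_{x\in\gf(q)}\chi_1(ax^{2^e+1})$ is a real number, so the plan is to pin down its absolute value and its sign by two separate arguments: the magnitude via the classical squaring trick, and the sign by reading $x\mapsto\tr_{q/2}(ax^{2^e+1})$ as a quadratic form over $\gf(2)$.

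First I would substitute $x\mapsto x+z$ and expand $(x+z)^{2^e+1}=x^{2^e+1}+x^{2^e}z+xz^{2^e}+z^{2^e+1}$ to write
\[
S_e(a,0)^2=\sum_{z\in\gf(q)}\chi_1\!\left(az^{2^e+1}\right)\sum_{x\in\gf(q)}\chi_1\!\left(a\,x^{2^e}z+a\,x\,z^{2^e}\right).
\]
The inner sum is an additive character summed over the $\gf(2)$-linear map $x\mapsto ax^{2^e}z+axz^{2^e}$, hence equals $q$ if this map has trace identically $0$ and $0$ otherwise; pushing the trace onto $x$ shows the former happens exactly when $a^{2^e}z^{2^{2e}}=az$. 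For $z\neq0$ this reads $z^{2^{2e}-1}=a^{1-2^e}$, and since $m$ even and $\gcd(m,e)=1$ force $e$ odd and $\gcd(2e,m)=2$, one has $\gcd(2^{2e}-1,q-1)=2^{2}-1=3$; so there are either $0$ such $z$ (when $a$ is not a cube) or $3$ of them (when $a=\alpha^{3t}$). Each such $z$ satisfies $(az^{2^e+1})^{2^e-1}=1$, and because $\gcd(2^e-1,q-1)=1$ this gives $az^{2^e+1}=1$, so $\chi_1(az^{2^e+1})=(-1)^{\tr_{q/2}(1)}=1$ as $m$ is even. Collecting contributions yields $S_e(a,0)^2=q$ or $4q$, i.e. $|S_e(a,0)|=2^{m/2}$ or $2^{m/2+1}$, matching the stated magnitudes.

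For the sign I would set $Q_a(x)=\tr_{q/2}(ax^{2^e+1})$ and observe that its polar form $B_a(x,y)=\tr_{q/2}(a(x^{2^e}y+xy^{2^e}))$ has radical $R_a$ of $\gf(2)$-dimension $0$ or $2$ according to the same cube/non-cube dichotomy, and that $Q_a$ vanishes on $R_a$ by the computation above. The classification of $\gf(2)$-quadratic forms then gives $S_e(a,0)=\varepsilon_a\,2^{(m+\dim R_a)/2}$ with $\varepsilon_a=\pm1$ the type of the nondegenerate part. Replacing $a$ by $ac^{2^e+1}$ amounts to the substitution $x\mapsto cx$, so $\varepsilon_a$ depends only on the coset of $a$ modulo $(2^e+1)$-th powers, which (again because $\gcd(2^e+1,q-1)=3$) is exactly the coset modulo cubes; hence only two signs $\varepsilon_{\mathrm{cube}}$ and $\varepsilon_{\mathrm{non}}$ remain to be found. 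The first Pless power moment gives $\sum_{a\in\gf(q)}S_e(a,0)=q$, whence $\sum_{a\neq0}S_e(a,0)=0$, and counting $(q-1)/3$ cubes against $2(q-1)/3$ non-cubes forces $\varepsilon_{\mathrm{cube}}=-\varepsilon_{\mathrm{non}}$.

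The hard part is the last step: fixing the \emph{absolute} sign, equivalently the Arf invariant of the nondegenerate part of $Q_a$. The moment identity only ties the two signs together, so one still needs one genuine evaluation — for instance showing $\varepsilon_{\mathrm{non}}=(-1)^{m/2}$ for a single non-cube representative, either by a careful hyperbolic-versus-elliptic rank-and-type analysis of $Q_a$ over $\gf(2)$ or by a Gauss-sum/Stickelberger argument (the route of \cite{Coult}) tracking the sign through a cubic twist $\sum_{a}S_e(a,0)\psi(a)$ with $\psi$ of order $3$. Once that single sign is fixed, the relation $\varepsilon_{\mathrm{cube}}=-\varepsilon_{\mathrm{non}}$ delivers the remaining case and completes the evaluation.
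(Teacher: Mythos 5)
Your magnitude argument is correct and complete: the squaring substitution, the identification of the radical with the solutions of $a^{2^e}z^{2^{2e}}=az$, the gcd facts (evenness of $m$ plus $\gcd(m,e)=1$ force $e$ odd, $\gcd(2e,m)=2$, hence $\gcd(2^{2e}-1,q-1)=3$ and $\gcd(2^e-1,q-1)=1$), and the observation that every nonzero radical element satisfies $az^{2^e+1}=1$ with $\chi_1(1)=1$ for even $m$, all check out, giving $S_e(a,0)^2=4q$ when $a$ is a cube and $S_e(a,0)^2=q$ otherwise. The moment identity $\sum_{a\neq 0}S_e(a,0)=0$ is also correct, and in fact it does more than you claim: writing one sign per coset, it reads $2\varepsilon_{\mathrm{cube}}+\varepsilon_1+\varepsilon_2=0$ with all signs in $\{\pm 1\}$, which forces $\varepsilon_1=\varepsilon_2=-\varepsilon_{\mathrm{cube}}$, so the coset-invariance argument you invoke is not even needed. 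For context, the paper offers no proof to compare against: this lemma is quoted verbatim from \cite{Coult}.

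The genuine gap is exactly the step you flag yourself: the absolute sign $\varepsilon_{\mathrm{non}}=(-1)^{m/2}$ is never established. That sign is the entire content of the lemma beyond a routine magnitude computation --- it is what the citation to \cite{Coult} exists for --- and as written your argument is equally consistent with the evaluation in which every sign is flipped, so deferring it to ``a careful hyperbolic-versus-elliptic analysis'' or ``a Gauss-sum/Stickelberger argument'' leaves the proof incomplete. A concrete way to close it inside your framework: since $\gcd(2^e+1,q-1)=3=\gcd(3,q-1)$, both $x\mapsto x^{2^e+1}$ and $x\mapsto x^3$ are $3$-to-$1$ onto the nonzero cubes, so $S_e(a,0)=\sum_{v\in\gf(q)}\chi_1(av^3)=\bar{\psi}(a)G(\psi,\chi_1)+\psi(a)G(\bar{\psi},\chi_1)$ for a multiplicative character $\psi$ of order $3$. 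The Davenport--Hasse lifting relation reduces $G(\psi,\chi_1)$ over $\gf(2^m)$, $m=2s$, to the cubic Gauss sum over $\gf(4)$, which a two-line computation shows equals $2$; hence $G(\psi,\chi_1)=G(\bar{\psi},\chi_1)=(-1)^{s-1}2^{s}$, and since $\psi(a)+\bar{\psi}(a)$ equals $2$ for cubes and $-1$ for non-cubes, one gets $S_e(a,0)=-(-1)^{m/2}2^{m/2+1}$ and $(-1)^{m/2}2^{m/2}$ respectively. Without some such evaluation (or the rank/Arf analysis actually carried out in \cite{Coult}), the lemma is not proved.
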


\begin{lemma}\cite{Cao2016}\label{expsum1}
Let $e,h$ be positive integers and $m$ be even with $\gcd(m,e)=1$. Let $p=2$, $q=2^m$ and $a\in \gf(q)^*$. Then
\begin{eqnarray*}
\sum_{b\in \gf(q)^*} \left( S_e(a,b)\right)^h =\left\{
\begin{array}{ll}
(2^m-1) 2^{\frac{m}{2}\cdot h}               &   \mbox{ if $h$ is even and $a \ne \alpha^{3t}$ for any $t$,} \\
(2^{m-2}-1) 2^{(\frac{m}{2}+1) \cdot h}      &   \mbox{ if $h$ is even and $a = \alpha^{3t}$ for some $t$,}
\end{array}
\right.
\end{eqnarray*}
where $\alpha$ is a generator of $\gf(q)^*$.
\end{lemma}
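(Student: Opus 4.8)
The plan is to read $S_e(a,b)$, for a fixed $a\in\gf(q)^*$ and varying $b$, as the Walsh transform of a quadratic Boolean function and then exploit the rigidity of such spectra. Writing $\tr=\tr_{q/2}$ and recalling $p=2$, the key observation is that $f_a(x)=\tr\!\left(a x^{2^e+1}\right)$ is a quadratic form over $\gf(2)$: since $(x+y)^{2^e+1}=x^{2^e+1}+x^{2^e}y+xy^{2^e}+y^{2^e+1}$, the map $(x,y)\mapsto \tr\!\left(a(x^{2^e}y+xy^{2^e})\right)$ is its associated symmetric bilinear form. Under this identification $S_e(a,b)=\sum_{x\in\gf(q)}(-1)^{f_a(x)+\tr(bx)}$ is precisely the Walsh coefficient of $f_a$ at $b$, so the quantity we must evaluate is a power moment of the Walsh spectrum of $f_a$.

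First I would invoke the standard fact that the Walsh spectrum of a quadratic form on $\gf(2)^m$ is flat: if the radical of its bilinear form has dimension $\ell$ (so the rank is the even number $m-\ell$), then every Walsh coefficient lies in $\{0,\pm 2^{(m+\ell)/2}\}$. Combining this with Parseval's identity $\sum_{b\in\gf(q)}S_e(a,b)^2=2^{2m}$, the number of $b$ with $S_e(a,b)\neq 0$ is forced to be $2^{2m}/2^{m+\ell}=2^{m-\ell}$. Thus the entire distribution of $S_e(a,b)$ is governed by the single integer $\ell$.

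Next I would pin down $\ell$ using Lemma \ref{expsum} with no further computation. Because all nonzero Walsh values share the common magnitude $2^{(m+\ell)/2}$, and because Lemma \ref{expsum} guarantees $S_e(a,0)\neq 0$, the number $|S_e(a,0)|$ must itself equal $2^{(m+\ell)/2}$. Lemma \ref{expsum} gives $|S_e(a,0)|=2^{m/2}$ when $a\neq\alpha^{3t}$ for any $t$, forcing $\ell=0$ (hence $2^{m}$ nonzero coefficients), and $|S_e(a,0)|=2^{m/2+1}$ when $a=\alpha^{3t}$, forcing $\ell=2$ (hence $2^{m-2}$ nonzero coefficients). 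Finally, since $h$ is even, $S_e(a,b)^h=|S_e(a,b)|^h$, so the sum over all of $\gf(q)$ collapses to a count, $\sum_{b\in\gf(q)}S_e(a,b)^h=2^{m-\ell}\cdot 2^{(m+\ell)h/2}$. Subtracting the $b=0$ term, which is nonzero in both cases, gives $\sum_{b\in\gf(q)^*}S_e(a,b)^h=(2^{m-\ell}-1)\,2^{(m+\ell)h/2}$, and substituting $\ell=0$ and $\ell=2$ reproduces the two stated expressions.

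The main obstacle is the structural input in the middle step, namely that a quadratic form on $\gf(2)^m$ has a flat Walsh spectrum whose nonzero magnitude is determined solely by the radical dimension; this is exactly what lets the single datum $S_e(a,0)$ supplied by Lemma \ref{expsum} determine the whole distribution. Verifying that $x^{2^e+1}$ is genuinely of Dembowski--Ostrom (quadratic) type, and checking that the hypotheses $m$ even and $\gcd(m,e)=1$ are precisely what make the two radical dimensions $\ell\in\{0,2\}$ the only possibilities, is the part that requires care; everything after that is bookkeeping with Parseval and the even exponent $h$.
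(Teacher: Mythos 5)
Your proof is correct, but note that the paper never proves this lemma: it is imported verbatim from \cite{Cao2016}, so there is no internal argument to compare against. The cited source (whose subject is precisely diagonal equations) obtains such power moments by tying $\sum_b S_e(a,b)^h$ to counts of solutions of diagonal equations over $\gf(q)$ and evaluating the resulting character sums; your route is genuinely different and more structural. Every step checks out: $\tr_{q/2}(ax^{2^e+1})$ is quadratic with polarization $\tr_{q/2}(a(x^{2^e}y+xy^{2^e}))$; the flatness of the Walsh spectrum of a quadratic Boolean function (all nonzero values of magnitude $2^{(m+\ell)/2}$, where $\ell$ is the radical dimension) plus Parseval forces exactly $2^{m-\ell}$ nonzero coefficients; and Lemma \ref{expsum} does double duty, both guaranteeing $S_e(a,0)\neq 0$ and pinning $\ell=0$ in the non-cube case and $\ell=2$ in the cube case, after which the even exponent $h$ kills all signs and the subtraction of the $b=0$ term gives $(2^{m-\ell}-1)\,2^{(m+\ell)h/2}$, matching both displayed cases. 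If you want the argument fully self-contained, the one "standard fact" you invoke is a three-line computation: substituting $y=x+z$ in $S_e(a,b)^2=\sum_{x,y}(-1)^{f_a(x)+f_a(y)+\tr_{q/2}(b(x+y))}$ gives
\begin{equation*}
S_e(a,b)^2 \;=\; 2^m \sum_{z\in V}(-1)^{f_a(z)+\tr_{q/2}(bz)},
\end{equation*}
where $V$ is the radical, on which $f_a$ is linear, so the right-hand side is $2^{m+\ell}$ or $0$. What your approach buys is that the entire family of moment identities, for every even $h$ at once, follows from the single datum $S_e(a,0)$ together with general quadratic-form theory, with no equation counting; what it costs is that it is specific to the quadratic (Dembowski--Ostrom) nature of $x^{2^e+1}$, whereas the counting method of \cite{Cao2016} extends to exponents where no such bilinear structure is available.
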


\begin{lemma}\cite{coulter2002} \label{lem-psum}
Let $p$ be an odd prime, $q=p^m$, and $e$ be any positive integer such that $m/\gcd(m,e)$ is odd. Suppose $a\in \gf(q)^* $ and $b \in \gf(q)^*$. Let $x_{a,b}$ be the unique solution of the equation
$$
a^{p^e}x^{p^{2e}}+a x+ b^{p^e}=0.
$$
Then
\begin{align*}
S_e (a, b)=
\left\{
  \begin{array}{ll}
    (-1)^{m-1} \sqrt{q} \eta(-a) \chi_1(-a x_{a,b}^{p^e+1} ),        & \mbox{if}~ p\equiv 1 ~mod~4, \\
    (-1)^{m-1} \sqrt{-1}^{3m} \sqrt{q} \eta(-a) \chi_1(-a x_{a,b}^{p^e+1} ), & \mbox{if}~ p\equiv 3 ~mod~4. \\
  \end{array}
\right.
\end{align*}
\end{lemma}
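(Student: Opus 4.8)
The plan is to evaluate $S_e(a,b)$ by completing the square so as to strip off the linear term $bx$, thereby reducing the problem to the pure Weil sum $S_e(a,0)=\sum_{x\in\gf(q)}\chi_1(ax^{p^e+1})$, and then to evaluate that pure sum as the Gauss sum of a nondegenerate quadratic form over $\gf(p)$. First I would substitute $x=x'+x_{a,b}$ and expand $a(x'+x_{a,b})^{p^e+1}+b(x'+x_{a,b})$ using $(x'+x_{a,b})^{p^e+1}=x'^{p^e+1}+x'^{p^e}x_{a,b}+x'x_{a,b}^{p^e}+x_{a,b}^{p^e+1}$. The terms that are $\gf(p)$-linear in $x'$ contribute $\tr_{q/p}\big[(a^{p^{m-e}}x_{a,b}^{p^{m-e}}+ax_{a,b}^{p^e}+b)\,x'\big]$ to the exponent, after moving a Frobenius power through the trace via $\tr_{q/p}(z)=\tr_{q/p}(z^{p^{m-e}})$. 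Raising the bracketed coefficient to the $p^e$-th power turns the condition that it vanish into exactly $a^{p^e}x_{a,b}^{p^{2e}}+ax_{a,b}+b^{p^e}=0$, i.e. $x_{a,b}$ is the prescribed solution, so the linear term in $x'$ disappears and $S_e(a,b)=\chi_1(ax_{a,b}^{p^e+1}+bx_{a,b})\,S_e(a,0)$. A short computation that uses the defining equation to eliminate $b$, together with Frobenius-invariance $\tr_{q/p}(z)=\tr_{q/p}(z^{p^e})$, shows $\chi_1(ax_{a,b}^{p^e+1}+bx_{a,b})=\chi_1(-ax_{a,b}^{p^e+1})$, giving $S_e(a,b)=\chi_1(-ax_{a,b}^{p^e+1})\,S_e(a,0)$.

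Before this is legitimate I must show the displayed equation has a \emph{unique} solution $x_{a,b}$; equivalently, that the linearized map $L(x)=a^{p^e}x^{p^{2e}}+ax$ is a bijection of $\gf(q)$. This is exactly where the hypothesis that $m/\gcd(m,e)$ is odd enters. Writing $d=\gcd(m,e)$, one checks $\gcd(2e,m)=d$ under this hypothesis, so both $x\mapsto x^{p^{2e}-1}$ and $a\mapsto a^{p^e-1}$ have image equal to the unique subgroup $H\le\gf(q)^*$ of order $(p^m-1)/(p^d-1)$. A nonzero kernel element would force $-a^{1-p^e}\in H$, hence $-1\in H$; but $(p^m-1)/(p^d-1)\equiv m/d\equiv 1\pmod 2$ makes $|H|$ odd, so $-1\notin H$ and $L$ has trivial kernel. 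Thus $L$ is a permutation and $x_{a,b}$ exists and is unique.

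It remains to evaluate $S_e(a,0)$, which is the main obstacle. The quantity $Q(x)=\tr_{q/p}(ax^{p^e+1})$ is a quadratic form on the $m$-dimensional $\gf(p)$-space $\gf(q)$; its polar form is $B(x,y)=\tr_{q/p}\big[(ax^{p^e}+a^{p^{m-e}}x^{p^{m-e}})y\big]$, whose radical is again $\ker L$, so $Q$ is nondegenerate of rank $m$ by the previous paragraph. For such a form the Gauss sum $\sum_{x}\zeta_p^{Q(x)}$ equals $\bar{\eta}(\Delta)\,G(\bar{\eta},\bar{\chi}_1)^{m}$, where $\Delta$ is the discriminant of $Q$; the factor $G(\bar{\eta},\bar{\chi}_1)^m=(\sqrt{p^*})^m$ supplies the magnitude $\sqrt{q}$ and, through Lemma \ref{lem-32A1}, the fourth-root-of-unity factors $(-1)^{m-1}$ for $p\equiv1\bmod4$ and $(-1)^{m-1}\sqrt{-1}^{3m}$ for $p\equiv3\bmod4$. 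The delicate step is to compute the discriminant $\Delta$ and to show $\bar{\eta}(\Delta)=\eta(-a)$; I expect to diagonalize $Q$ by descending to $\gf(p^d)$ (using $\gcd(2e,m)=d$) and to compare $\eta$ with $\bar{\eta}$ via Lemma \ref{lem-bothcharac}, reducing $S_e(a,0)$ to a single standard Gauss sum of the type handled by Lemma \ref{lem-32A2}. Assembling $S_e(a,b)=\chi_1(-ax_{a,b}^{p^e+1})\,S_e(a,0)$ then yields the two displayed cases according to $p\bmod 4$. The principal difficulty throughout is this discriminant and phase bookkeeping for $S_e(a,0)$; the completing-the-square reduction and the uniqueness of $x_{a,b}$ are comparatively routine once the subgroup argument is in place.
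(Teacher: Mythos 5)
Your first two steps are correct and check out in detail: the substitution $x=x'+x_{a,b}$, the Frobenius manipulation showing the linear term vanishes exactly when $x_{a,b}$ satisfies the displayed equation, the identity $\chi_1(ax_{a,b}^{p^e+1}+bx_{a,b})=\chi_1(-ax_{a,b}^{p^e+1})$, and the odd-order-subgroup argument that $L(x)=a^{p^e}x^{p^{2e}}+ax$ is a permutation. (The paper offers no proof of this lemma --- it is quoted from Coulter --- and these two steps are precisely the routine half of the standard argument.) The genuine gap is in your third part, the evaluation of $S_e(a,0)$, which carries all the analytic content and which you leave as a plan; worse, both intermediate identities your plan commits to are false. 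First, $G(\bar{\eta},\bar{\chi}_1)^m=(\sqrt{p^*})^m$ does \emph{not} supply the factor $(-1)^{m-1}$: for $p\equiv 1\pmod 4$ it equals $p^{m/2}=\sqrt{q}$ exactly. The sign $(-1)^{m-1}$ in Lemma \ref{lem-32A1} belongs to $G(\eta,\chi_1)$, the Gauss sum over $\gf(q)$; the discrepancy $G(\eta,\chi_1)=(-1)^{m-1}G(\bar{\eta},\bar{\chi}_1)^m$ is the Hasse--Davenport sign, which your factorization silently drops. Second, the discriminant identity you intend to prove, $\bar{\eta}(\Delta)=\eta(-a)$, is also false; the correct value is $\bar{\eta}(\Delta)=(-1)^{m-1}\eta(a)$. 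To see this, take $a=1$ and $e=m$ (so $m/\gcd(m,e)=1$ is odd and $x^{p^e+1}=x^2$ on $\gf(q)$): then $\sum_{x}\zeta_p^{Q(x)}=G(\eta,\chi_1)=(-1)^{m-1}(\sqrt{p^*})^m$, so $\bar{\eta}(\Delta)=(-1)^{m-1}$, whereas $\eta(-1)=1$ for every $p\equiv 1\pmod 4$, including even $m$. Your two errors cancel in the product (which is why the plan appears to land on the stated formula), but neither claim can be established as written, so the proof would fail at exactly the step you flag as the crux.

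There is also a much shorter finish that uses only lemmas already in the paper and avoids discriminants entirely. For odd $p$, the hypothesis that $m/\gcd(m,e)$ is odd is equivalent to $\gcd(p^e+1,\,q-1)=2=\gcd(2,\,q-1)$, so the power maps $x\mapsto x^{p^e+1}$ and $x\mapsto x^2$ on $\gf(q)$ have the same image with the same multiplicities; hence $S_e(a,0)=\sum_{x\in\gf(q)}\chi_1(ax^2)=\eta(a)G(\eta,\chi_1)$ by Lemma \ref{lem-32A2}. Combining with your reduction $S_e(a,b)=\chi_1(-ax_{a,b}^{p^e+1})S_e(a,0)$ and Lemma \ref{lem-32A1} gives $S_e(a,b)=(-1)^{m-1}\sqrt{q}\,\eta(a)\chi_1(-ax_{a,b}^{p^e+1})$ for $p\equiv 1\pmod 4$ and $(-1)^{m-1}(\sqrt{-1})^{m}\sqrt{q}\,\eta(a)\chi_1(-ax_{a,b}^{p^e+1})$ for $p\equiv 3\pmod 4$, which agrees with the stated formula since $\eta(-1)=1$ in the first case, while in the second $\eta(-1)=(-1)^m$ and $(\sqrt{-1})^{3m}=(-1)^m(\sqrt{-1})^m$. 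This replaces the discriminant and diagonalization bookkeeping by the single elementary fact about $\gcd(p^e+1,p^m-1)$, which is in any case the real reason the hypothesis ``$m/\gcd(m,e)$ odd'' appears in the lemma.
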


\begin{lemma}\cite{YCD2006}\label{lem-sumxp}
Let the notation and assumptions be the same as those in the previous lemma. Write $\Delta=\sum_{c\in \gf(p)^*}S_e (\ ac,\ bc )$.
Then we have the following results.
\begin{itemize}
  \item If $m$ is odd, then
  \begin{align*}
\Delta=
\left\{
  \begin{array}{ll}
    0,        & \mbox{if $\tr_{q/p}(a(x_{a,b})^{p^e+1})=0$}, \\
    \eta(a)\eta( \tr_{q/p}(a(x_{a,b})^{p^e+1}) )\sqrt{q} \sqrt{p^*}, & \mbox{if $p\equiv 1 ~mod~4$ and $\tr_{q/p}(a(x_{a,b})^{p^e+1})\neq 0$ }, \\
    \eta(a)\eta( \tr_{q/p}(a(x_{a,b})^{p^e+1}) )\sqrt{-1}^{3m}\sqrt{q} \sqrt{p^*}, & \mbox{if $p\equiv 3 ~mod~4$ and $\tr_{q/p}(a(x_{a,b})^{p^e+1})\neq 0$ }.
  \end{array}
\right.
\end{align*}

  \item  If $m$ is even, then
  \begin{align*}
\Delta=
\left\{
  \begin{array}{ll}
    -(p-1)\eta(a)\sqrt{q},        & \mbox{if $p\equiv 1 ~mod~4$ and $\tr_{q/p}(a(x_{a,b})^{p^e+1})=0$}, \\
    \eta(a)\sqrt{q},        & \mbox{if $p\equiv 1 ~mod~4$ and $\tr_{q/p}(a(x_{a,b})^{p^e+1})\neq 0$}, \\
    -\sqrt{-1}^m (p-1)\eta(a)\sqrt{q},        & \mbox{if $p\equiv 3 ~mod~4$ and $\tr_{q/p}(a(x_{a,b})^{p^e+1})=0$}, \\
    \sqrt{-1}^m \eta(a)\sqrt{q},        & \mbox{if $p\equiv 3 ~mod~4$ and $\tr_{q/p}(a(x_{a,b})^{p^e+1})\neq 0$}. \\
  \end{array}
\right.
\end{align*}
\end{itemize}
\end{lemma}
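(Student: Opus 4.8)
The plan is to derive $\Delta$ directly from the closed form for a single exponential sum given in Lemma~\ref{lem-psum}, after first reducing the defining equation for $x_{ac,bc}$ to that for $x_{a,b}$. Since $c\in\gf(p)^*$ we have $c^{p^e}=c$, so the equation $(ac)^{p^e}x^{p^{2e}}+(ac)x+(bc)^{p^e}=0$ defining $x_{ac,bc}$ becomes $c\bigl(a^{p^e}x^{p^{2e}}+ax+b^{p^e}\bigr)=0$; dividing by $c$ shows it coincides with the equation for $x_{a,b}$. By the uniqueness asserted in Lemma~\ref{lem-psum}, $x_{ac,bc}=x_{a,b}$ for every $c\in\gf(p)^*$, so the argument of the character in Lemma~\ref{lem-psum} is governed by the single quantity $T:=\tr_{q/p}\!\bigl(a\,x_{a,b}^{p^e+1}\bigr)\in\gf(p)$.

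Next I would substitute the formula of Lemma~\ref{lem-psum} for each summand $S_e(ac,bc)$. Writing the common prefactor as $(-1)^{m-1}\kappa\sqrt{q}$, where $\kappa=1$ for $p\equiv1\pmod4$ and $\kappa=\sqrt{-1}^{3m}$ for $p\equiv3\pmod4$, and using $\eta(-ac)=\eta(-a)\eta(c)$ together with $\chi_1\!\bigl(-ac\,x_{a,b}^{p^e+1}\bigr)=\bar\chi_1(-cT)$ (pulling the scalar $c\in\gf(p)$ out of the trace), the sum collapses to
\begin{equation*}
\Delta=(-1)^{m-1}\kappa\sqrt{q}\,\eta(-a)\sum_{c\in\gf(p)^*}\eta(c)\,\bar\chi_1(-cT).
\end{equation*}
Everything now rests on evaluating the inner character sum, which is the step I expect to be the crux.

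To evaluate $\sum_{c\in\gf(p)^*}\eta(c)\bar\chi_1(-cT)$ I would invoke Lemma~\ref{lem-bothcharac} to replace $\eta(c)$ according to the parity of $m$, and then split on whether $T=0$. When $m$ is odd, $\eta(c)=\bar\eta(c)$: for $T=0$ the sum is $\sum_{c}\bar\eta(c)=0$, giving $\Delta=0$, while for $T\neq0$ the substitution $u=-cT$ turns it into $\bar\eta(-1)\bar\eta(T)\,G(\bar\eta,\bar\chi_1)=\bar\eta(-1)\bar\eta(T)\sqrt{p^*}$ by Lemma~\ref{lem-32A1}. When $m$ is even, $\eta(c)=1$, so the sum reduces to $\sum_{c}\bar\chi_1(-cT)$, which equals $p-1$ if $T=0$ and $-1$ if $T\neq0$ by the orthogonality relation.

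Finally I would reassemble the constants and match each case against the claimed formula. The delicate bookkeeping lies entirely in the prefactor. For $m$ odd one uses $\eta(-1)=\bar\eta(-1)$, so that $\eta(-a)\bar\eta(-1)=\eta(a)$ and $\bar\eta(T)=\eta(T)$, which recovers $\eta(a)\eta(T)\sqrt{q}\sqrt{p^*}$ up to the factor $\kappa$. For $m$ even one uses $\eta(-1)=1$ (hence $\eta(-a)=\eta(a)$), $(-1)^{m-1}=-1$, and the identity $\sqrt{-1}^{3m}=\sqrt{-1}^{m}$ valid for even $m$, which is exactly what converts the $p\equiv3\pmod4$ prefactor into the stated $\sqrt{-1}^{m}$. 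Tracking these sign and root-of-unity normalizations consistently across the four $m$/$p$ combinations is the main obstacle; the character-sum evaluations themselves are routine once Lemmas~\ref{lem-32A1} and~\ref{lem-bothcharac} are in hand.
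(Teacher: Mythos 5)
Your proof is correct. Note that the paper itself gives no argument for this lemma --- it is quoted verbatim from the cited reference [YCD2006] --- so there is no in-paper proof to compare against; your derivation (reduce $x_{ac,bc}=x_{a,b}$ via $c^{p^e}=c$, substitute Coulter's evaluation from Lemma~\ref{lem-psum}, and collapse $\Delta$ to the $\gf(p)$-character sum $\sum_{c\in\gf(p)^*}\eta(c)\bar\chi_1(-cT)$ handled by Lemmas~\ref{lem-32A1} and~\ref{lem-bothcharac}) is precisely the standard argument used in that source, and your sign bookkeeping, including the identity $\sqrt{-1}^{3m}=\sqrt{-1}^{m}$ for even $m$ and the cancellation $\eta(-a)\bar\eta(-1)=\eta(a)$ for odd $m$, checks out in all six cases.
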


\begin{lemma}\cite{YCD2006}\label{lem-NN0}
Let $p$ be an odd prime, $m$ and $e$ be positive integers such that $m/\gcd(m,e)$ is odd. Let $q=p^m$. Define
\begin{eqnarray*}
\hat{N}_0(a,b)=\sharp\{x\in \gf(q):\tr_{q/p}(ax^{p^e+1}+bx)=0\}.
\end{eqnarray*}
Then we have the following results.
\begin{itemize}
  \item If $a=0$ and $b=0$, then $\hat{N}_0(a,b)=q$.
  \item If $a=0$ and $b\neq 0$, then $\hat{N}_0(a,b)=p^{m-1}$.
  \item If $a\neq 0$ and $b=0$, then
\begin{align*}
 \hat{N}_0(a,b)=
\left\{
  \begin{array}{ll}
     p^{m-1}                                                             &  \mbox{if $m$ is odd}, \\
     \frac{1}{p} \left ( q-(p-1)\eta(a)\sqrt{q} \right),                               & \mbox{if $m$ is even and $p\equiv 1 ~mod~4$}, \\
      \frac{1}{p}\left( q-\sqrt{-1}^m (p-1)\eta(a)\sqrt{q} \right),        & \mbox{if $m$ is even and $p\equiv 3 ~mod~4$.}
  \end{array}
\right.
\end{align*}
  \item If $a\neq 0$ and $b\neq 0$, then $\hat{N}_0(a,b)=\frac{1}{p}(q+\Delta)$, where $\Delta$ was given in Lemma \ref{lem-sumxp}.
\end{itemize}
\end{lemma}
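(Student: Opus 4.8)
The plan is to reduce the counting of $\hat{N}_0(a,b)$ to the exponential sums $S_e$ via the orthogonality relation of additive characters, and then to read off the four cases from the already-established evaluations of $S_e$ and $\Delta$. First I would write the indicator of the event $\tr_{q/p}(ax^{p^e+1}+bx)=0$ as an average of additive characters. Using $\chi_1(cz)=\zeta_p^{c\,\tr_{q/p}(z)}$ for $c\in\gf(p)$ and the orthogonality relation $\sum_{c\in\gf(p)}\zeta_p^{ct}=p$ or $0$ according as $t=0$ or $t\neq 0$, I obtain
\begin{align*}
\hat{N}_0(a,b)=\frac{1}{p}\sum_{x\in\gf(q)}\sum_{c\in\gf(p)}\chi_1\!\left(c(ax^{p^e+1}+bx)\right)=\frac{1}{p}\left(q+\sum_{c\in\gf(p)^*}S_e(ca,cb)\right),
\end{align*}
where the $c=0$ term contributes $q$ and the definition (\ref{def S}) of $S_e$ has been used with argument $(ca,cb)$. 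This identity holds for every $(a,b)\in\gf(q)^2$, so everything reduces to evaluating $\Delta:=\sum_{c\in\gf(p)^*}S_e(ca,cb)$.

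The two degenerate cases are then immediate. If $a=b=0$ then $S_e(0,0)=q$, hence $\Delta=(p-1)q$ and $\hat{N}_0=q$. If $a=0$ and $b\neq 0$ then $S_e(0,cb)=\sum_{x\in\gf(q)}\chi_1(cbx)=0$ for every $c\neq 0$ by orthogonality, so $\Delta=0$ and $\hat{N}_0=p^{m-1}$. The case $a\neq 0,\,b\neq 0$ is exactly $\hat{N}_0=\frac{1}{p}(q+\Delta)$ with $\Delta$ supplied verbatim by Lemma \ref{lem-sumxp}, so nothing further is needed there.

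The remaining work, and the main obstacle, is the case $a\neq 0,\,b=0$, where I must evaluate the Weil sum $S_e(a,0)=\sum_{x\in\gf(q)}\chi_1(ax^{p^e+1})$. Here I would exploit the hypothesis that $m/\gcd(m,e)$ is odd: for odd $p$ this forces the number-theoretic identity $\gcd(p^e+1,\,q-1)=2$, so the map $x\mapsto x^{p^e+1}$ is two-to-one from $\gf(q)^*$ onto the set of nonzero squares of $\gf(q)$ (the unique subgroup of index $2$). Writing the indicator of a nonzero square as $\tfrac12(1+\eta(v))$ and using $\sum_{v\in\gf(q)^*}\chi_1(av)=-1$ together with $\sum_{v\in\gf(q)^*}\eta(v)\chi_1(av)=\eta(a)G(\eta,\chi_1)$, I get the clean formula $S_e(a,0)=\eta(a)G(\eta,\chi_1)$, whence $\Delta=\eta(a)G(\eta,\chi_1)\sum_{c\in\gf(p)^*}\eta(c)$.

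Finally I would evaluate $\sum_{c\in\gf(p)^*}\eta(c)$ by Lemma \ref{lem-bothcharac}: when $m$ is odd, $\eta(c)=\bar{\eta}(c)$ and the sum vanishes, giving $\Delta=0$ and $\hat{N}_0=p^{m-1}$; when $m$ is even, $\eta(c)=1$ for all $c\in\gf(p)^*$, so the sum equals $p-1$ and $\Delta=(p-1)\eta(a)G(\eta,\chi_1)$. Substituting the value of $G(\eta,\chi_1)$ from Lemma \ref{lem-32A1} (which, since $m$ is even, equals $-\sqrt{q}$ for $p\equiv 1\pmod 4$ and $-\sqrt{-1}^{m}\sqrt{q}$ for $p\equiv 3\pmod 4$) into $\hat{N}_0=\frac{1}{p}(q+\Delta)$ reproduces the two even-$m$ expressions in the statement. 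The only genuinely nontrivial ingredient is the gcd identity $\gcd(p^e+1,q-1)=2$ and the resulting two-to-one description of $x\mapsto x^{p^e+1}$; once that is in hand, the Weil sum and all four cases follow mechanically.
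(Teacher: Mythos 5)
Your proof is correct. Note that the paper itself contains no proof of this lemma: it is quoted directly from \cite{YCD2006}, so the only comparison available is with the standard derivation in that reference, and your argument is exactly that route --- character orthogonality to get $\hat{N}_0(a,b)=\frac{1}{p}\bigl(q+\sum_{c\in \gf(p)^*}S_e(ca,cb)\bigr)$, the two degenerate cases by direct evaluation, the case $ab\neq 0$ delegated verbatim to Lemma \ref{lem-sumxp}, and a self-contained evaluation of $S_e(a,0)$ for the case $b=0$ (which is genuinely needed, since Lemma \ref{lem-sumxp} assumes $b\neq 0$). The one ingredient you invoke that lies outside the paper's stated lemmas is the identity $\gcd(p^e+1,q-1)=2$ when $p$ is odd and $m/\gcd(m,e)$ is odd; this is a standard and correct fact, and it yields $S_e(a,0)=\eta(a)G(\eta,\chi_1)$, which matches Coulter's known evaluation of this Weil sum. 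From there your use of Lemma \ref{lem-bothcharac} to split $\sum_{c\in\gf(p)^*}\eta(c)$ into the odd-$m$ case (sum $=0$, giving $p^{m-1}$) and the even-$m$ case (sum $=p-1$), followed by substitution of $G(\eta,\chi_1)$ from Lemma \ref{lem-32A1}, reproduces all the stated formulas exactly; the proof is complete.
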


\subsection{Combinatorial $t$-designs and related results}

Let $k$, $t$ and $v$ be positive integers with $1\leq t \leq k \leq v$. Let $\cP$ be a set of $v \ge 1$  elements, and let $\cB$ be a set of $k$-subsets of $\cP$. The \emph{incidence structure} $\bD = (\cP, \cB)$ is said to be a $t$-$(v, k, \lambda)$ {\em design\index{design}} if every $t$-subset of $\cP$ is contained in exactly $\lambda$ elements of $\cB$.
The elements of $\cP$ are called points, and those of $\cB$ are referred to as blocks.
We usually use $b$ to denote the number of blocks in $\cB$.  A $t$-design is called {\em simple\index{simple}} if $\cB$ has no repeated blocks. A $t$-design is called symmetric if $v = b$ and trivial if $k = t$ or $k = v$.
When  $t \geq 2$ and $\lambda=1$, a $t$-design is called a {\em Steiner system\index{Steiner system}} and traditionally denoted by $S(t,k, v)$.

Linear codes and $t$-designs are companions.
A $t$-design $\mathbb  D=(\mathcal P, \mathcal B)$ induces a linear code over GF($p$) for
any prime $p$.
Let $\mathcal P=\{p_1, \dots, p_{\nu}\}$. For any block $B\in \mathcal B$,   the \emph{characteristic vector} of $B$ is defined
by the vector $\mathbf{c}_{B}
=(c_1, \dots, c_{\nu})\in \{0,1\}^{\nu}$, where
\begin{align*}
c_i=
\left\{
  \begin{array}{ll}
    1, & \text{if}~ p_i \in B, \\
    0, & \text{if}~ p_i \not \in B.
  \end{array}
\right.
\end{align*}
For a prime $p$, a \emph{linear code} $\mathsf{C}_{p}(\mathbb D)$ over the prime field $\mathrm{GF}(p)$ from the design $\mathbb D$ is spanned by the characteristic vectors of the blocks of $\mathbb B$, which is
 the subspace $\mathrm{Span}\{\mathbf{c}_{B}: B\in \mathcal B\}$ of the vector space $\mathrm{GF}(p)^{\nu}$.
Linear codes $\mathsf{C}_{p}(\mathbb D)$ from designs $\mathbb D$ have been studied and documented in the literature
 (see, for example,
\cite{AK92,Ding15,Ton98,Ton07}).

On the other hand,  a linear code $\C$ may induce a $t$-design under certain conditions, which is formed by
 supports of  codewords of a fixed Hamming weight in $\C$.
 Let
$\mathcal P(\mathcal C)$  be the set of the coordinate positions of $\C$, where $\# \mathcal P(\mathcal C)=v$ is the length of $\mathcal C$.
For a codeword $\mathbf c =(c_i)_{i\in \mathcal P(\mathcal C)}$ in $\C$, the \emph{support} of  $\mathbf c$
is defined by
\begin{align*}
\mathrm{Supp}(\mathbf c) = \{i: c_i \neq 0, i \in \mathcal P(\mathcal C)\}.
\end{align*}
Let $\mathcal B_{w}(\mathcal C)
=\{\mathrm{Supp}(\mathbf c): wt(\mathbf{c})=w
~\text{and}~\mathbf{c}\in \mathcal{C}\}$. For some special $\mathcal C$, $\left (\mathcal P(\mathcal C),  \mathcal B_{w}(\mathcal C) \right)$
is a $t$-design.
In this way, many $t$-designs are derived  from linear codes (see, for example, \cite{AK92,Ding18dcc,Ding18jcd,DLX17,HKM04,HMT05,KM00,MT04,Ton98,Ton07}). A major approach to constructing $t$-designs from linear codes is the use of linear codes with $t$-homogeneous or $t$-transitive automorphism groups (see \cite[Theorem 4.18]{ding2018}).
Another major approach to constructing $t$-designs from codes is the use of the
Assmus-Mattson Theorem \cite{AM74,HP10}. The following Assmus-Mattson Theorem for constructing simple $t$-designs was developed in  \cite{AM69}.

\begin{theorem}\label{thm-AMTheorem}
Let $\mathcal C$ be a linear code over $\mathrm{GF}(q)$ with length $\nu$ and minimum weight $d$.
Let $d^{\perp}$ denote the minimum weight of the dual code  $\mathcal C^{\perp}$ of $\mathcal C$. Let  $t~(1\le t <\min \{d, d^{\perp}\})$  be an integer such that there are at most $d^{\perp}-t$ weights of $\mathcal C$ in the range $\{1,2, \ldots, \nu-t\}$.
Then the following hold:
\begin{itemize}
\item $(\mathcal P(\mathcal C), \mathcal B_{k}(\mathcal C))$ is a simple $t$-design provided that
      $A_k \neq 0$ and $d \leq k \leq w$, where $w$ is defined to be the largest integer
      satisfying $w \leq \nu$ and
      $$
      w-\left\lfloor \frac{w+q-2}{q-1} \right\rfloor <d.
      $$
\item $(\mathcal P(\mathcal C^{\perp}), \mathcal B_{k}(\mathcal C^{\perp}))$ is a simple $t$-design provided that
      $A_k^\perp  \neq 0$ and $d^\perp \leq k \leq w^\perp$, where $w^\perp$ is defined to be the largest integer
      satisfying $w^\perp \leq \nu$ and
      $$
      w^\perp-\left\lfloor \frac{w^\perp+q-2}{q-1} \right\rfloor <d^\perp.
      $$
\end{itemize}
\end{theorem}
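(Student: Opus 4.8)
\emph{Overview and reduction to a weight-distribution statement.} The plan is to convert both $t$-design conclusions into the assertion that certain shortened codes have a weight distribution that does not depend on the chosen coordinate positions. I will use the standard combinatorial criterion: a family of equal-size blocks on a point set forms a $t$-design if and only if, for every $i$ with $0\le i\le t$ and every $i$-subset $T'$ of points, the number of blocks disjoint from $T'$ is the same for all $T'$. For $\mathcal C$ and a coordinate set $T'$, the weight-$w$ codewords whose support avoids $T'$ are exactly the weight-$w$ codewords of the shortened code $\mathcal C_{T'}$, so it suffices to prove that the weight distribution of $\mathcal C_{T'}$ is independent of $T'$ for all $|T'|\le t$ (and the analogue for $\mathcal C^\perp$). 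I first note that the hypothesis propagates downward: if there are at most $d^\perp-t$ weights of $\mathcal C$ in $\{1,\dots,\nu-t\}$, then for $t'\le t$ there are at most $(d^\perp-t)+(t-t')=d^\perp-t'$ weights in $\{1,\dots,\nu-t'\}$, since the interval $(\nu-t,\nu-t']$ contains only $t-t'$ integers. Thus the level-$t'$ version of the hypothesis holds for every $t'\le t$.

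\emph{Structural data from Lemma \ref{lem:C-S-P}.} Fix $T'$ with $|T'|=t'\le t$. Because $t<\min\{d,d^\perp\}$, Lemma \ref{lem:C-S-P} gives that $\mathcal C_{T'}$ has length $\nu-t'$ and dimension $k-t'$, with dual $(\mathcal C_{T'})^\perp=(\mathcal C^\perp)^{T'}$ of dimension $\nu-k$. Two facts are crucial. First, every nonzero codeword of $\mathcal C_{T'}$ lifts to a nonzero codeword of $\mathcal C$ of equal weight whose support lies in the complement of $T'$; hence all nonzero weights of $\mathcal C_{T'}$ belong to the fixed set $W=\{w_1<\cdots<w_{s'}\}$ of weights of $\mathcal C$ in $\{1,\dots,\nu-t'\}$, where $s'\le d^\perp-t'$ and, importantly, $W$ is independent of $T'$. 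Second, since $\mathcal C^\perp$ has minimum weight $d^\perp$ and $t'<d^\perp$, puncturing lowers weights by at most $t'$, so $(\mathcal C^\perp)^{T'}$ has minimum weight at least $d^\perp-t'$; consequently $A_i^\perp(\mathcal C_{T'})=0$ for $1\le i\le d^\perp-t'-1$.

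\emph{Pless power moments and the Vandermonde step.} I then apply the power moments \eqref{eq:PPM} to $\mathcal C_{T'}$, taking the moment exponent (the ``$t$'' of \eqref{eq:PPM}, renamed $r$ to avoid clashing with the design parameter) equal to $r=0,1,\dots,s'-1$. For each such $r\le s'-1\le d^\perp-t'-1$, the right-hand side of \eqref{eq:PPM} involves only $A_i^\perp(\mathcal C_{T'})$ with $0\le i\le r$; all of these vanish except $A_0^\perp=1$, so the right-hand side is an explicit expression in $q$, $\nu-t'$, $k-t'$, $r$ alone, and is therefore independent of $T'$. The left-hand sides read $\sum_{j=1}^{s'} w_j^{\,r}\,A_{w_j}(\mathcal C_{T'})$ (with the $r=0$ moment fixing the total), giving an $s'\times s'$ Vandermonde system in the unknown multiplicities $A_{w_j}(\mathcal C_{T'})$ with distinct nodes $w_1,\dots,w_{s'}$. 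It has a unique solution, and since the data are $T'$-independent so is the weight distribution of $\mathcal C_{T'}$. Running this for all $t'\le t$ and invoking the criterion of the first paragraph yields the first bullet.

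\emph{Duality for $\mathcal C^\perp$, the weight caps, and the main obstacle.} For the second bullet I would transport the result through the MacWilliams transform: the $T'$-independence of $W_{\mathcal C_{T'}}$ forces $W_{(\mathcal C^\perp)^{T'}}$ to be $T'$-independent as well. Because $t'<d^\perp$, puncturing $\mathcal C^\perp$ on $T'$ is a bijection, so $A_j\big((\mathcal C^\perp)^{T'}\big)=\sum_{\ell}N(j+\ell,\ell;T')$, where $N(w,\ell;T')$ counts weight-$w$ codewords of $\mathcal C^\perp$ meeting $T'$ in exactly $\ell$ positions; the $T'$-independence of the punctured enumerators is thus a system of linear relations among these refined counts, which I would untangle by M\"obius inversion over the subset lattice of $T$ to show that $N(k,t;T)$, the number of weight-$k$ codewords of $\mathcal C^\perp$ whose support contains $T$, is independent of $T$. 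Finally, the caps $k\le w$ and $k\le w^\perp$ enter exactly here: the defining inequalities $w-\lfloor (w+q-2)/(q-1)\rfloor<d$ and $w^\perp-\lfloor (w^\perp+q-2)/(q-1)\rfloor<d^\perp$ guarantee that two codewords of weight at most $w$ (respectively $w^\perp$) with identical support must be scalar multiples, so the $T$-independent codeword counts are genuine block counts and the designs are simple. I expect the genuine difficulty to be this last transfer: the hypothesis bounds only the weights of $\mathcal C$, so the clean Vandermonde argument applies directly to the few-weight code $\mathcal C$ but not to the possibly many-weight dual, and making the M\"obius-inversion step rigorous while correctly handling the codeword-versus-support bookkeeping for $q>2$ is where the real work lies.
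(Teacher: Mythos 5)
First, a point of order: the paper itself gives no proof of Theorem~\ref{thm-AMTheorem}. It is the classical Assmus--Mattson theorem, quoted verbatim from \cite{AM69} (see also \cite{HP10}), so there is no in-paper argument to compare yours against; I can only measure your proposal against the standard proof in those references, which is in fact the proof you are reconstructing.

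Your treatment of the first bullet is correct and complete, and it is essentially the classical argument: the reduction of the design property to $T'$-independence of the weight distribution of $\mathcal C_{T'}$ at \emph{every} level $t'\le t$ (your disjointness criterion, which is valid precisely because you require it at all levels $0\le t'\le t$ and not just at level $t$); the observation that the hypothesis ``at most $d^{\perp}-t$ weights in $\{1,\dots,\nu-t\}$'' propagates to every $t'\le t$; the identification via Lemma~\ref{lem:C-S-P} of $(\mathcal C_{T'})^{\perp}$ with $(\mathcal C^{\perp})^{T'}$, whose minimum weight is at least $d^{\perp}-t'$; the resulting vanishing of the low dual coefficients in \eqref{eq:PPM}, which turns the first $s'\le d^{\perp}-t'$ power moments into an invertible Vandermonde system with $T'$-independent right-hand sides; and the role of the cap $k\le w$, which forces codewords of equal weight $\le w$ and equal support to be proportional, so that codeword counts are exactly $q-1$ times block counts.

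The gap is in the second bullet, and you flag it yourself: you never prove that the number of weight-$k$ codewords of $\mathcal C^{\perp}$ whose support contains a given $t$-set $T$ is independent of $T$; you only say you ``would untangle by M\"obius inversion.'' The untangling does work, but it requires one further idea beyond what you wrote: an upward induction on the weight, anchored at the minimum distance of the dual. Concretely, for $|T'|=t'\le t$ let $F_{T'}(u)$ denote the number of weight-$u$ codewords of $\mathcal C^{\perp}$ whose support contains $T'$. Expanding each coefficient $A_j\bigl((\mathcal C^{\perp})^{T'}\bigr)$ --- which your MacWilliams step shows is $T'$-independent --- according to the intersection of supports with $T'$, and inverting over the subset lattice of $T'$, yields an identity in which $F_{T'}(j+t')$ appears with coefficient $1$, and all remaining terms are either $F_{T'}(j')$ with $j'<j+t'$ or $F_{S}(\cdot)$ with $S\subsetneq T'$. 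Induction on the level $t'$ makes the latter terms constant; induction on $j$, with base case $F_{T'}(u)=0$ for all $u<d^{\perp}$, then shows $F_{T'}(j+t')$ is the same for all choices of $T'$. Taking $t'=t$ and $k=j+t\le w^{\perp}$, the cap $w^{\perp}$ converts these codeword counts into block counts exactly as in the first bullet, and the dual designs follow. So your outline is the right one --- it is the Huffman--Pless proof --- but as written the dual half is a plan rather than a proof, and the induction just described is the missing piece that makes it rigorous.
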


We will need the following results about the punctured and shortened codes of $\C$ documented  in  \cite[Lemma 3.1,Theorem 3.2]{Tangit2019} .

\begin{lemma}\cite{Tangit2019}\label{lem:P:k:k+t}
Let $\mathcal C$ be a  linear code of length $\nu$ and minimum distance $d$ over $\mathrm{GF}(q)$ and  $d^{\perp}$  the minimum distance of $\mathcal  C^{\perp}$.
 Let  $t$ and $k$ be two positive integers with  $0< t <\min \{d, d^{\perp}\}$ and $1\le k\le  \nu-t$.
Let $T$ be  a  set of $t$ coordinate positions in $\mathcal  C$.
Suppose that $\left ( \mathcal P(\mathcal C) , \mathcal B_i(\mathcal C) \right )$ is a $t$-design for all $i$ with $k\le i \le k+t$.
Then
$$A_k(\mathcal C^T) =\sum_{i=0}^t  \frac{\binom{\nu-t}{k} \binom{k+i}{t} \binom{t}{i} }{\binom{\nu-t}{k-t+i} \binom{\nu}{t}} A_{k+i}(\mathcal C).$$
\end{lemma}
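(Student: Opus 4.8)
The plan is to count the weight-$k$ codewords of $\mathcal{C}^{T}$ directly on $\mathcal{C}$, and to invoke the hypothesis that the supports of fixed weight form $t$-designs only in order to remove the dependence on the particular set $T$. First I would reduce to a counting problem on $\mathcal{C}$ itself. Since $t<\min\{d,d^{\perp}\}\le d$, puncturing on $T$ is injective on $\mathcal{C}$ (a nonzero codeword vanishing off $T$ would have weight at most $t<d$), so it is a bijection from $\mathcal{C}$ onto $\mathcal{C}^{T}$, and a codeword $\mathbf{c}$ of weight $k+i$ in $\mathcal{C}$ is sent to a weight-$k$ codeword of $\mathcal{C}^{T}$ exactly when $|\mathrm{Supp}(\mathbf{c})\cap T|=i$. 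Grouping by the full weight $w=k+i$ gives
\begin{equation*}
A_k(\mathcal{C}^{T})=\sum_{i=0}^{t} N_i(T),\qquad N_i(T):=\#\{\mathbf{c}\in\mathcal{C}:\ \mathrm{wt}(\mathbf{c})=k+i,\ |\mathrm{Supp}(\mathbf{c})\cap T|=i\}.
\end{equation*}
The weights $k,k+1,\dots,k+t$ appearing here are precisely those for which $(\mathcal{P}(\mathcal{C}),\mathcal{B}_{i}(\mathcal{C}))$ is assumed to be a $t$-design, which is what makes the hypothesis usable.

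The key step is to show that $A_k(\mathcal{C}^{T})$ is independent of the choice of $T$. For each relevant weight $w$, the blocks $\mathcal{B}_{w}(\mathcal{C})$ form a $t$-design, hence an $s$-design for every $0\le s\le t$, so the number of blocks through a fixed $s$-subset is a constant $\lambda_{s}$ not depending on that subset. Double counting the pairs (block, $s$-subset of $T$ contained in it) then yields, for the intersection numbers $b_{j}(T):=\#\{B\in\mathcal{B}_{w}(\mathcal{C}):|B\cap T|=j\}$, the triangular system $\sum_{j}\binom{j}{s}b_{j}(T)=\binom{t}{s}\lambda_{s}$ for $0\le s\le t$. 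Because the matrix $\bigl(\binom{j}{s}\bigr)$ is invertible, each $b_{j}(T)$ is determined by the $\lambda_{s}$ alone and is therefore independent of $T$; passing from supports to codewords shows that each $N_i(T)$, and hence $A_k(\mathcal{C}^{T})$, is independent of $T$. I expect this $T$-independence to be the main obstacle: the delicate point is the passage from the support design to counts of codewords, since the number of codewords carrying a prescribed support must be accounted for uniformly, and this is exactly where the $t$-design hypothesis is genuinely needed.

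Once $T$-independence is established, I would evaluate the common value by averaging over all $\binom{\nu}{t}$ choices of $T$. Writing $A_k(\mathcal{C}^{T})=\#\{\mathbf{c}\in\mathcal{C}:|\mathrm{Supp}(\mathbf{c})\setminus T|=k\}$, summing over $T$, and interchanging the order of summation, a codeword of weight $k+i$ satisfies $|\mathrm{Supp}(\mathbf{c})\setminus T|=k$ for exactly $\binom{k+i}{i}\binom{\nu-k-i}{t-i}$ of the $t$-sets $T$ (choose the $i$ support points lying in $T$, then the remaining $t-i$ points of $T$ from outside the support), so
\begin{equation*}
A_k(\mathcal{C}^{T})=\frac{1}{\binom{\nu}{t}}\sum_{i=0}^{t}\binom{k+i}{i}\binom{\nu-k-i}{t-i}A_{k+i}(\mathcal{C}).
\end{equation*}
It then remains only to verify the elementary identity $\binom{k+i}{i}\binom{\nu-k-i}{t-i}=\binom{\nu-t}{k}\binom{k+i}{t}\binom{t}{i}\big/\binom{\nu-t}{k-t+i}$, which follows immediately by expanding both sides into factorials and cancelling the common factor $(k-t+i)!$; this recasts the displayed sum into the stated form and finishes the argument. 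Apart from the $T$-independence of the second paragraph, every step is bookkeeping together with a single factorial identity.
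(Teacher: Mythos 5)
First, a caveat on the comparison itself: this paper does not prove Lemma~\ref{lem:P:k:k+t} --- it quotes it from \cite[Lemma 3.1]{Tangit2019} --- so there is no in-paper proof to measure you against, and your proposal must be judged on its own merits. Its skeleton is the natural one and is sound: since $t<d$, puncturing on $T$ is injective on $\mathcal C$, so $A_k(\mathcal C^T)=\sum_{i=0}^t N_i(T)$ where $N_i(T)$ counts weight-$(k+i)$ codewords whose support meets $T$ in exactly $i$ positions; the intersection-number system $\sum_j \binom{j}{s}b_j(T)=\binom{t}{s}\lambda_s$ with its unitriangular coefficient matrix correctly shows the block counts $b_j(T)$ are independent of $T$; the averaging count $\binom{k+i}{i}\binom{\nu-k-i}{t-i}$ is right; and the closing identity $\binom{k+i}{i}\binom{\nu-k-i}{t-i}=\binom{\nu-t}{k}\binom{k+i}{t}\binom{t}{i}\big/\binom{\nu-t}{k-t+i}$ checks out by expanding factorials.

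The gap is precisely the step you flag and then assert rather than prove: ``passing from supports to codewords.'' The hypothesis is that the \emph{set} $\mathcal B_{k+i}(\mathcal C)$ of supports is a $t$-design; it says nothing about how many codewords sit on each support. Writing $m(B)$ for the number of weight-$(k+i)$ codewords with support exactly $B$, you have $N_i(T)=\sum_{B:\,|B\cap T|=i} m(B)$, and $T$-independence of the $b_i(T)$ yields $T$-independence of $N_i(T)$ only if $m(B)$ is constant over $B\in\mathcal B_{k+i}(\mathcal C)$. Your closing remark that ``this is exactly where the $t$-design hypothesis is genuinely needed'' is misleading: the simple-design hypothesis does not deliver uniform multiplicities, and nothing else in your argument supplies them. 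For $q=2$ the issue vanishes ($m\equiv 1$, supports biject with codewords), and every application of this lemma in the present paper is to binary codes, so your proof is complete in the cases that matter here. For $q>2$, constant multiplicity does hold when $(k+i)-\lceil (k+i)/(q-1)\rceil<d$ (two non-proportional codewords of equal weight and equal support would produce, by pigeonhole on the coordinate ratios, a nonzero codeword of weight below $d$; hence $m\equiv q-1$), which is exactly the weight regime of the Assmus--Mattson theorem --- but the lemma as stated allows any $k\le \nu-t$. To close the argument you should either prove or assume this uniformity (e.g., restrict to that weight range, or to $q=2$), or reformulate the design hypothesis as a statement about codewords counted with multiplicity.
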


\begin{theorem}\cite{Tangit2019} \label{thm:sct-code}
Let $\mathcal C$ be a $[\nu, \bar{k}, d]$ linear code  over $\mathrm{GF}(q)$ and  $d^{\perp}$ be the minimum distance of $\mathcal  C^{\perp}$.
 Let  $t$ be a positive integer with  $0< t <\min \{d, d^{\perp}\}$.
Let $T$ be  a  set of $t$ coordinate positions in $\mathcal  C$.
Suppose that $\left ( \mathcal P(\mathcal C) , \mathcal B_i(\mathcal C) \right )$ is a $t$-design for any $i$ with $d \le i \le \nu-t$.
Then the shortened code $\mathcal C_T$ is a linear code of length $\nu-t$ and dimension $\bar{k}-t$. The weight distribution
$\left ( A_k(\mathcal C_T) \right )_{k=0}^{\nu-t}$ of $\mathcal C_T$ is independent of the specific choice of the elements
in $T$. Specifically,
$$A_k(\mathcal C_T) =\frac{ \binom{k}{t} \binom{\nu-t}{k}}{ \binom{\nu }{t} \binom{\nu-t}{k-t}}A_k(\mathcal C).$$
\end{theorem}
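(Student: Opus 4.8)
The plan is to dispose of the length and dimension immediately and then concentrate on the weight enumerator, which I will split into an exact counting identity and an independence statement. For the length and dimension I would simply apply Lemma~\ref{lem:C-S-P}: since $|T|=t$ and $0<t<\min\{d,d^{\perp}\}$, its second bullet gives at once that $\mathcal{C}_T$ has length $\nu-t$ and dimension $\bar{k}-t$, and the design hypothesis plays no role here.

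The key reinterpretation for the weight distribution is that, because $\mathcal{C}_T$ is obtained from the subcode $\mathcal{C}(T)$ of codewords vanishing on $T$ by deleting the $t$ (identically zero) coordinates in $T$, and deleting identically zero coordinates preserves Hamming weight, we have that $A_k(\mathcal{C}_T)$ equals the number of $\mathbf{c}\in\mathcal{C}$ with $wt(\mathbf{c})=k$ and $\mathrm{Supp}(\mathbf{c})\cap T=\emptyset$. I would first record the exact identity obtained by double counting the pairs $(\mathbf{c},T')$ with $wt(\mathbf{c})=k$ and $\mathrm{Supp}(\mathbf{c})\cap T'=\emptyset$, where $T'$ runs over all $t$-subsets of $\mathcal{P}(\mathcal{C})$: counting by $\mathbf{c}$ first gives $A_k(\mathcal{C})\binom{\nu-k}{t}$, while counting by $T'$ first gives $\sum_{T'}A_k(\mathcal{C}_{T'})$. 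This step uses no hypothesis and already fixes the \emph{average} of $A_k(\mathcal{C}_{T'})$ over all $T'$.

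The heart of the argument is to upgrade this average to an equality for each individual $T$, that is, to prove that $A_k(\mathcal{C}_T)$ does not depend on $T$; this is precisely where the $t$-design hypothesis enters. A $t$-design is an $s$-design for every $s\le t$, so every $s$-subset of $\mathcal{P}(\mathcal{C})$ is contained in a number of blocks of $\mathcal{B}_k(\mathcal{C})$ that depends only on $s$. Passing from supports to codewords (granting a uniformity of multiplicities addressed below), let $N_s$ denote the number of weight-$k$ codewords whose support contains a fixed $s$-set; the design property makes $N_s$ independent of the chosen $s$-set. An inclusion--exclusion over the subsets of $T$ then gives
\[
A_k(\mathcal{C}_T)=\sum_{i=0}^{t}(-1)^{i}\binom{t}{i}N_i,
\]
whose right-hand side is visibly independent of $T$. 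Feeding this independence back into the double-counting identity yields $A_k(\mathcal{C}_T)=\binom{\nu-k}{t}A_k(\mathcal{C})/\binom{\nu}{t}$, and the elementary identity $\binom{\nu-k}{t}/\binom{\nu}{t}=\binom{k}{t}\binom{\nu-t}{k}/\bigl(\binom{\nu}{t}\binom{\nu-t}{k-t}\bigr)$ rewrites this in the asserted form.

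I expect the main obstacle to be exactly the passage from supports to codewords in the previous paragraph: the hypothesis concerns the simple design $\mathcal{B}_k(\mathcal{C})$ of \emph{supports}, whereas $A_k(\mathcal{C})$ counts \emph{codewords}, and over $\gf(q)$ a single support may be carried by several codewords. In the clean case $k=d$, any two minimum-weight codewords with the same support are scalar multiples, so every block carries exactly $q-1$ codewords and the $N_s$ are genuinely constant; for $k>d$ one must argue that the design hypothesis, imposed simultaneously at all weights $d\le i\le\nu-t$, forces this uniformity (equivalently, that the \emph{multiset} of supports, and not merely its underlying set, is a $t$-design). Once that uniformity is in hand, the remaining inclusion--exclusion bookkeeping and the binomial identity are routine. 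A shortcut that bypasses the multiplicity issue is available whenever $\mathcal{C}$ has a $t$-transitive or $t$-homogeneous automorphism group, as the APN and PN code families of the later sections do: such a group permutes the $t$-subsets transitively while preserving weights, so the independence of $A_k(\mathcal{C}_T)$ from $T$ is automatic, and only the double-counting evaluation remains.
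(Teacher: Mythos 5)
This paper never proves Theorem~\ref{thm:sct-code} itself; the result is quoted from \cite{Tangit2019}, so your proposal can only be measured against the proof in that source. Your outline is in fact the same argument: Lemma~\ref{lem:C-S-P} disposes of the length and dimension; $A_k(\mathcal C_T)$ is identified with the number of weight-$k$ codewords of $\mathcal C$ whose support misses $T$; inclusion--exclusion over the subsets of $T$, combined with the fact that a $t$-design is an $s$-design for every $s\le t$, gives independence of $T$; and the double count $\sum_{T'} A_k(\mathcal C_{T'})=\binom{\nu-k}{t}A_k(\mathcal C)$ then pins down the common value. Your closing identity $\binom{\nu-k}{t}/\binom{\nu}{t}=\binom{k}{t}\binom{\nu-t}{k}/\bigl(\binom{\nu}{t}\binom{\nu-t}{k-t}\bigr)$ is correct, so the reduction and all the bookkeeping are fine.

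The one genuine gap is exactly the step you flag and then leave open: passing from supports to codewords. Under the set-theoretic reading of $\mathcal B_i(\mathcal C)$ given in this paper's preliminaries (distinct supports, no multiplicities), the design hypothesis controls the number of \emph{supports} containing a given $s$-set, whereas your $N_s$ counts \emph{codewords}; for $q>2$ and $k>d$, different weight-$k$ supports may carry different numbers of codewords, and nothing in the simple-design property, even imposed simultaneously at every weight $d\le i\le \nu-t$, visibly rules this out. Your remark that ``one must argue that the design hypothesis forces this uniformity'' is therefore a statement of the problem, not a solution, and the $t$-transitivity shortcut you offer is unavailable because no group action is among the hypotheses. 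What closes the gap in \cite{Tangit2019} is not an argument but a convention: there $\mathcal B_i(\mathcal C)$ is a multiset in which each support occurs with multiplicity proportional to the number of codewords carrying it, so the $t$-design hypothesis is precisely the assertion that your $N_s$ is well defined and independent of the chosen $s$-set, after which the rest of your argument runs verbatim. With that reading (or when $q=2$, where supports and codewords are in bijection, which covers all the binary applications in this paper), your proof is complete and coincides with the source's; with the literal set reading, the central uniformity claim remains unproved.
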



\subsection{Linear codes from APN and PN functions}

Let $m, \tilde{m}$ be two positive integers with $m \geq  \tilde{m}$  and $F$ be a mapping from $\gf(p^{m})$ to  $\gf(p^{\tilde{m}})$. Define
\[\delta_F=\max\{\delta_F(a,b): a \in \gf(p^{m})^*, b \in \gf(p^{\tilde{m}})\},\]
where $\delta_F(a,b)=\#\{x \in \gf(p^{m}) : F(x+a)-F(x)=b\}$, $a\in\gf(p^{m})$ and $b\in \gf(p^{\tilde{m}}) $.
The function $F(x)$ is called PN function  if $\delta_F=p^{m-\tilde{m}}$, and it is called APN function  if $m=\tilde{m}$ and $\delta_F=2$.
From the above definition one immediately sees that $F(x)$ is PN if and only if $F(x+a)-F(x)$ is balanced for each $a\in \gf(p^{m})^{*}$. Currently, all known PN and APN functions over $\gf(p^{m})$ are summarized in  \cite{Budaghyan-Helleseth,car2016,car2015,Coulter-Henderson-Hu,Coulter-Matthews,Dembowski-Ostrom,ding2018,Ding-Yuan,Zha-Kyureghyan-Wang}.
It is known that PN and APN functions are very important functions for constructing linear codes with good parameters  (see, for
example, \cite{YCD2005,sihem2019,WLZ2020,YCD2006}).

Let $q=p^m$ and let $\C$ denote the linear code of length $q$ defined by
\begin{align}\label{eq:cf}
\C=\left\{\left( \tr_{q/p}(af(x)+bx+c) \right)_{x \in \gf(q)}:a,b,c\in \gf(q) \right\},
\end{align}
where $f(x)$ is a polynomial over $\gf(q)$.
Then we can regard $\gf(q)$ as  the set of the coordinate positions $\mathcal{P}(\mathcal C)$ of $\mathcal C$.
It is known that $\C$ has dimension $2m+1$  and the weight distribution in Table \ref{tab-cf} when $p=2$, $m\geq 5$ is odd and $f(x)=x^{s}$ is an APN function, where $s$ takes
 the following values \cite{ding2018}.

\begin{itemize}
  \item $s=2^e+1$, where $\gcd(e,m) = 1$ and $e$ is a positive integer.
  \item $s=2^{2e}-2^e+1$, where $e$ is a positive integer and $\gcd(e,m) = 1$.
  \item $s=2^{(m-1)/2}+3$.
  \item $s=2^{(m-1)/2}+2^{(m-1)/4}-1$, where $m \equiv 1 ~(~mod ~4~)$.
  \item $s=2^{(m-1)/2}+2^{(3m-1)/4}-1$, where $m \equiv 3 ~(~mod~4~)$.
\end{itemize}

When $f(x)=x^{2^e+1}$, $p=2$ and $m\geq 4$ is even with $\gcd(e,m) = 1$,  the code $\C$  defined in (\ref{eq:cf}) has dimension $2m+1$  and the weight distribution in Table \ref{tab-cf1} \cite{ding2018}.

\begin{table}[ht]
\begin{center}
\caption{The weight distribution of $\C$ for $m$ odd }\label{tab-cf}
\begin{tabular}{cc} \hline
Weight  &  Multiplicity   \\ \hline
$0$          &  $1$ \\
$2^{m-1}-2^{(m-1)/2}$    & $ (2^m-1)2^{m-1}$ \\
$2^{m-1}$    & $ (2^m-1)(2^{m+1}-2^{m}+2)$ \\
$2^{m-1}+2^{(m-1)/2}$    & $ (2^m-1)2^{m-1}$ \\
$2^{m}$  & $1$ \\
\hline
\end{tabular}
\end{center}
\end{table}

\begin{table}[ht]
\begin{center}
\caption{The weight distribution of $\C$ for $m$ even }\label{tab-cf1}
\begin{tabular}{cc} \hline
Weight  &  Multiplicity   \\ \hline
$0$          &  $1$ \\
$2^{m-1}-2^{m/2}$    & $ (2^m-1)2^{m-2}/3$ \\
$2^{m-1}-2^{(m-2)/2}$    & $ (2^m-1)2^{m+1}/3$ \\
$2^{m-1}$    & $ 2(2^m-1)(2^{m-2}+1)$ \\
$2^{m-1}+2^{(m-2)/2}$    & $ (2^m-1)2^{m+1}/3$ \\
$2^{m-1}+2^{m/2}$    & $ (2^m-1)2^{m-2}/3$ \\
$2^{m}$  & $1$ \\
\hline
\end{tabular}
\end{center}
\end{table}

It is known that the code $\C$ defined in (\ref{eq:cf}) has dimension $2m+1$ and a few weights when $p$ is an odd prime and $f(x)=x^{s}$ is a PN function.
If $s$ takes
  the following values \cite{ding2018,LQ2009}
\begin{itemize}
  \item $s=2$,
  \item $s=p^e+1$, where $m/\gcd(m,e)$ is odd,
  \item $s=(3^e+1)/2$, where $p=3$, $e$ is odd and $\gcd(m,e)=1$,
\end{itemize}
then $f(x)=x^s$ is a PN and also planar function, $\mathrm{Tr}_{q/p}(\beta f(x)))$ is a weakly regular bent function \cite{GHHK2012,HHKWX2009} for any $\beta\in
\gf(q)^*$ ,  and the code $\C$ defined in (\ref{eq:cf}) has four or six weights  \cite{LQ2009}.

Let $f(x)$ be a function from $\gf(q)$
to $\gf(p)$, the  Walsh transform of $f$ at a point $\beta \in \gf(q)$ is defined by
$$
\mathcal{W}_{f}(\beta)=
\sum_{x\in\gf(q)}\zeta_p^{f(x)
-\tr_{q/p}(\beta x)}.
$$
The function $f(x)$ is said to be a  $p$-ary bent function, if $|\mathcal{W}_f(\beta)|=p^{\frac{m}{2}}$ for any $\beta\in \mathbb{F}_q$.
A  bent function $f(x)$ is weakly regular if
there exists a complex $u$ with unit magnitude
satisfying
$\mathcal{W}_f(\beta)=up^{\frac{m}{2}}\zeta_p^{f^*(\beta)}$
for some function $f^*(x)$.
Such function $f^*(x)$ is called the dual of
$f(x)$. A weakly regular bent function
$f(x)$ satisfies
\begin{equation*}
\mathcal{W}_f(\beta)=\varepsilon \sqrt{p^*}^{m} \zeta_p^{f^*(\beta)},
\end{equation*}
where $\varepsilon =\pm 1$ is called the sign of
the Walsh Transform of $f(x)$.
Let $\mathcal{RF}$ be the set of
$p$-ary weakly regular bent functions with the following two properties:
\begin{itemize}
  \item $f(0)=0$; and
  \item $f(ax)=a^hf(x)$ for any $a\in \gf(p)^*$ and $x\in \gf(q)$, where $h$ is a positive even integer
with $\gcd(h-1,p-1)=1$.
\end{itemize}

We will need the following results about $p$-ary weakly regular bent functions in  \cite{Tangit2016}.

\begin{lemma}\cite{Tangit2016} \label{2lem11}
Let $\beta\in \gf(q)^*$ and
$f(x)\in \mathcal{RF}$ with   $\mathcal{W}_f(0)=\varepsilon\sqrt{p^*}^{m}$.  Define
$$
N_{f,\beta}=\#\{x\in \gf(q):
f(x)=0 ~\textrm{and}~ \tr_{q/p}(\beta x)=0\}.
$$
If $f^*(\beta)=0$, then
$$
N_{f,\beta}=
\left\{
  \begin{array}{ll}
    p^{m-2}+\varepsilon \bar{\eta}^{m/2}(-1)(p-1)p^{(m-2)/2}, & \hbox{if $m$ is even;} \\
    p^{m-2},                                                 & \hbox{if $m$ is odd.}
  \end{array}
\right.
$$
\end{lemma}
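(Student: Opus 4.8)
The plan is to reduce $N_{f,\beta}$ to a sum of Walsh‑transform values of the scaled functions $yf$ (for $y\in\gf(p)^*$) via the standard two‑character indicator trick, and then to exploit the weakly regular bent structure together with the hypothesis $f^*(\beta)=0$. First I would write the two conditions $f(x)=0$ and $\tr_{q/p}(\beta x)=0$ as normalized additive‑character sums over $\gf(p)$, obtaining
$$N_{f,\beta}=\frac{1}{p^2}\sum_{y\in\gf(p)}\sum_{z\in\gf(p)}\sum_{x\in\gf(q)}\zeta_p^{yf(x)+z\tr_{q/p}(\beta x)}.$$
I would then split the sum over $(y,z)$ into four parts. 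The term $y=z=0$ contributes $q=p^m$. For $y=0$ and $z\ne0$ the inner sum is $\sum_{x}\chi_1(z\beta x)=0$ by the orthogonality relation, since $z\beta\ne0$ (this is where $\beta\ne0$ is used). The two remaining parts are $\sum_{y\ne0}\mathcal{W}_{yf}(0)$ and $\sum_{y\ne0}\sum_{z\ne0}\mathcal{W}_{yf}(-z\beta)$, where $\mathcal{W}_{yf}(\gamma)=\sum_x\zeta_p^{yf(x)-\tr_{q/p}(\gamma x)}$.

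The core step is a scaling law for $f\in\mathcal{RF}$: for every $y\in\gf(p)^*$,
$$\mathcal{W}_{yf}(\gamma)=\bar{\eta}(y)^m\,\varepsilon\,\sqrt{p^*}^{\,m}\,\zeta_p^{y^{-1}f^*(\gamma)}.$$
I would establish this from the homogeneity $f(ax)=a^hf(x)$ and the identity $\eta(y)=\bar{\eta}(y)^m$ on $\gf(p)^*$ (which is exactly Lemma \ref{lem-bothcharac} together with $\bar{\eta}(y)=\pm1$); the prototype $f(x)=\tr_{q/p}(x^2)$, evaluated by completing the square through Lemma \ref{lem-32A2}, already exhibits both the quadratic factor $\bar{\eta}(y)^m$ and the phase $y^{-1}f^*(\gamma)$. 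The hypothesis $f^*(\beta)=0$ now enters decisively: since the dual $f^*$ is again homogeneous, $f^*(-z\beta)=(-z)^{h^*}f^*(\beta)=0$ for all $z\in\gf(p)^*$, and $f^*(0)=0$ because $\mathcal{W}_f(0)=\varepsilon\sqrt{p^*}^{\,m}$ carries no $\zeta_p$‑phase. Hence in both surviving parts the phase $\zeta_p^{y^{-1}f^*(\cdot)}$ collapses to $1$, giving $\mathcal{W}_{yf}(0)=\mathcal{W}_{yf}(-z\beta)=\bar{\eta}(y)^m\varepsilon\sqrt{p^*}^{\,m}$.

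It then remains to assemble the pieces. The two parts equal $\varepsilon\sqrt{p^*}^{\,m}\sum_{y\ne0}\bar{\eta}(y)^m$ and $(p-1)\varepsilon\sqrt{p^*}^{\,m}\sum_{y\ne0}\bar{\eta}(y)^m$, so everything hinges on $\sum_{y\in\gf(p)^*}\bar{\eta}(y)^m$. For $m$ odd this is $\sum_{y}\bar{\eta}(y)=0$, leaving $N_{f,\beta}=p^{m-2}$; for $m$ even it equals $p-1$, and using $\sqrt{p^*}^{\,m}=\bar{\eta}(-1)^{m/2}p^{m/2}$ the two parts sum to $\varepsilon\bar{\eta}(-1)^{m/2}p^{m/2}(p-1)p$, whence dividing by $p^2$ and adding the $p^{m-2}$ from the first term yields $p^{m-2}+\varepsilon\bar{\eta}^{m/2}(-1)(p-1)p^{(m-2)/2}$, as claimed.

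I expect the only genuine obstacle to be a fully self‑contained justification of the scaling law for arbitrary $y\in\gf(p)^*$. The difficulty is that $y$ need not be an $h$‑th power, so one cannot simply substitute $x\mapsto ax$ to absorb $y$ into $f$ and reduce $\mathcal{W}_{yf}$ to $\mathcal{W}_f$; this is precisely where the weakly regular bent structure of $f$ and of its dual — in particular the homogeneity of $f^*$ and the character evaluation $\eta(y)=\bar{\eta}(y)^m$ — must be invoked rather than obtained by naive substitution. Once the scaling law and $f^*(\beta)=0$ are in hand, the rest is a routine evaluation of quadratic‑character sums.
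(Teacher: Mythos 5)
The paper itself offers no proof of Lemma \ref{2lem11}: it is quoted verbatim from \cite{Tangit2016}, so there is no internal argument to compare against, and your proposal must be judged on its own; its skeleton is in fact the standard one used in that reference. The skeleton is correct: the double character expansion of $N_{f,\beta}$, the vanishing of the block $y=0$, $z\neq 0$ by orthogonality (this is where $\beta\neq 0$ enters), the reduction of the two remaining blocks to $\sum_{y\neq 0}\mathcal{W}_{yf}(0)$ and $\sum_{y\neq 0}\sum_{z\neq 0}\mathcal{W}_{yf}(-z\beta)$, the observation that $f^*(0)=0$ is forced by $\mathcal{W}_f(0)=\varepsilon\sqrt{p^*}^{\,m}$, and the closing evaluation ($\sum_{y\in\gf(p)^*}\bar{\eta}(y)^m$ equals $0$ for $m$ odd and $p-1$ for $m$ even, together with $\sqrt{p^*}^{\,m}=\bar{\eta}^{m/2}(-1)p^{m/2}$) reproduce exactly the claimed formulas.

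The gap you flag, however, is genuine, and the route you sketch for the scaling law would not close it: homogeneity of $f$ plus Lemma \ref{lem-bothcharac}, with $\tr_{q/p}(x^2)$ as a prototype, is evidence, not a proof, and as you yourself note the substitution $x\mapsto ax$ cannot absorb $y$ because $h$ even forces $\gcd(h,p-1)\ge 2$. The missing idea is Galois conjugation: apply the automorphism $\sigma_y$ of $\mathbb{Q}(\zeta_p)$ determined by $\sigma_y(\zeta_p)=\zeta_p^{y}$ to the identity $\mathcal{W}_f(\gamma_0)=\varepsilon\sqrt{p^*}^{\,m}\zeta_p^{f^*(\gamma_0)}$. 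Termwise, $\sigma_y\left(\mathcal{W}_f(\gamma_0)\right)=\mathcal{W}_{yf}(y\gamma_0)$, and since $\sqrt{p^*}$ is the Gauss sum $G(\bar{\eta},\bar{\chi}_1)$ one gets $\sigma_y\left(\sqrt{p^*}\right)=\bar{\eta}(y)\sqrt{p^*}$; hence $\mathcal{W}_{yf}(\gamma)=\bar{\eta}(y)^m\varepsilon\sqrt{p^*}^{\,m}\zeta_p^{yf^*(y^{-1}\gamma)}$, with no homogeneity needed for the prefactor. (Your stated phase $y^{-1}f^*(\gamma)$ is wrong in general; it should be $yf^*(y^{-1}\gamma)=y^{1-l}f^*(\gamma)$ with $(l-1)(h-1)\equiv 1 \pmod{p-1}$, but this is immaterial here since the phase vanishes.) Homogeneity enters only at the second point you also assert without proof, namely that $f^*$ vanishes on the whole line $\gf(p)^*\beta$; this follows from the same Galois identity combined with $f(ax)=a^hf(x)$: taking $s$ odd with $s(h-1)\equiv 1\pmod{p-1}$, the substitution $x\mapsto a^{s}x$ gives $\mathcal{W}_f(a\gamma)=\sigma_{a^{s+1}}\left(\mathcal{W}_f(\gamma)\right)$, whence $f^*(a\gamma)=a^{s+1}f^*(\gamma)$ for all $a\in\gf(p)^*$. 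With these two ingredients supplied, your argument is complete and correct.
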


\begin{lemma} \cite{Tangit2016} \label{2lem14}
Let $\beta\in \gf(q)^*$ and
$f(x)\in \mathcal{RF}$ with   $\mathcal{W}_f(0)=\varepsilon\sqrt{p^*}^{m}$. Let
$$
N_{sq,\beta}=\#\{x\in \gf(q):
f(x)\in  \textrm{SQ} ~\textrm{and}~ \tr_{q/p}(
\beta x)=0 \},
$$
and
$$
N_{nsq,\beta}=\#\{x\in \gf(q):
f(x)\in \textrm{N\textrm{SQ}} ~\textrm{and}~ \tr_{q/p}(
\beta x)=0 \}.
$$
We have the following results.
\begin{itemize}
  \item If $m$ is even and $f^*(\beta)=0$, then
$$
N_{sq,\beta}=N_{nsq,\beta}=\frac{p-1}{2}\left(p^{m-2}-\varepsilon \bar{\eta}^{m/2}(-1)p^{(m-2)/2}\right).
$$
  \item  If $m$ is odd and $f^*(\beta)=0$, then
$$
N_{sq,\beta}=\frac{p-1}{2}\left( p^{m-2}+\varepsilon \sqrt{p*}^{m-1} \right)
$$
and
$$
N_{nsq,\beta}= \frac{p-1}{2}\left( p^{m-2}-\varepsilon \sqrt{p*}^{m-1} \right).
$$
\end{itemize}

\end{lemma}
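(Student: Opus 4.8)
The plan is to determine $N_{sq,\beta}$ and $N_{nsq,\beta}$ by setting up two linear equations in these unknowns and solving the resulting $2\times 2$ system. The first equation is a counting identity. Since $\beta\in\gf(q)^*$, the linear form $x\mapsto\tr_{q/p}(\beta x)$ is balanced, so exactly $p^{m-1}$ elements $x$ satisfy $\tr_{q/p}(\beta x)=0$. As $f$ maps into $\gf(p)$, these $x$ split according to whether $f(x)$ is zero, a nonzero square, or a nonsquare, giving $N_{f,\beta}+N_{sq,\beta}+N_{nsq,\beta}=p^{m-1}$, where $N_{f,\beta}$ is already evaluated in Lemma \ref{2lem11}. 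This fixes the sum $N_{sq,\beta}+N_{nsq,\beta}$. The second equation comes from the quadratic character: since $\bar{\eta}(f(x))$ equals $1$, $-1$, $0$ according as $f(x)$ is a nonzero square, a nonsquare, or zero, I would write
\[
N_{sq,\beta}-N_{nsq,\beta}=\sum_{x\in\gf(q),\ \tr_{q/p}(\beta x)=0}\bar{\eta}(f(x))=:S .
\]

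The task is then to evaluate the character sum $S$. First I would detect the condition $\tr_{q/p}(\beta x)=0$ by additive-character orthogonality, and I would linearise the quadratic character through the Gauss-sum inversion $\bar{\eta}(z)=G(\bar{\eta},\bar{\chi}_1)^{-1}\sum_{c\in\gf(p)^*}\bar{\eta}(c)\,\zeta_p^{cz}$, which is valid for all $z\in\gf(p)$, including $z=0$. Substituting $z=f(x)$ and interchanging the order of summation turns $S$ into a $\gf(p)$-linear combination of the exponential sums $\sum_{x\in\gf(q)}\zeta_p^{cf(x)+\tr_{q/p}(y\beta x)}$ with $c\in\gf(p)^*$ and $y\in\gf(p)$.

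The heart of the argument is to evaluate these inner sums. Each is, up to the sign of its argument, a Walsh-transform value of $f$: using the homogeneity $f(ax)=a^hf(x)$ of a function in $\mathcal{RF}$ together with the weakly-regular-bent identity $\mathcal{W}_f(\gamma)=\varepsilon\sqrt{p^*}^{\,m}\zeta_p^{f^*(\gamma)}$, one gets $\sum_{x}\zeta_p^{cf(x)+\tr_{q/p}(y\beta x)}=\varepsilon\,\bar{\eta}^{\,m}(c)\,\sqrt{p^*}^{\,m}\zeta_p^{f^*(\gamma_{c,y})}$, where the argument $\gamma_{c,y}$ is a scalar multiple of $\beta$. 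This is exactly where the hypothesis $f^*(\beta)=0$ is used: by the homogeneity of the dual, $f^*$ vanishes on the whole line $\gf(p)^*\beta$, so every phase collapses to $1$ and each inner sum equals $\varepsilon\,\bar{\eta}^{\,m}(c)\,\sqrt{p^*}^{\,m}$, independent of $y$. The $y$-average then becomes trivial and $S$ reduces to $\varepsilon\,\sqrt{p^*}^{\,m}\,G(\bar{\eta},\bar{\chi}_1)^{-1}\sum_{c\in\gf(p)^*}\bar{\eta}^{\,m+1}(c)$. The last sum is governed by the parity of $m$: for $m$ even, $\bar{\eta}^{\,m+1}(c)=\bar{\eta}(c)$ and the sum over $\gf(p)^*$ vanishes, giving $S=0$ and hence $N_{sq,\beta}=N_{nsq,\beta}$; for $m$ odd, $\bar{\eta}^{\,m+1}(c)=1$ on $\gf(p)^*$, the sum equals $p-1$, and with $G(\bar{\eta},\bar{\chi}_1)=\sqrt{p^*}$ from Lemma \ref{lem-32A1} one gets $S=\varepsilon(p-1)\sqrt{p^*}^{\,m-1}$. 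Combining $S$ with the total from the first equation and the value of $N_{f,\beta}$ from Lemma \ref{2lem11}, a short $2\times 2$ solve yields the four displayed expressions, the even-$m$ case carrying the factor $\bar{\eta}^{m/2}(-1)$ inherited from $N_{f,\beta}$.

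I expect the main obstacle to be the uniform evaluation of $\sum_{x}\zeta_p^{cf(x)+\tr_{q/p}(y\beta x)}$ for \emph{every} $c\in\gf(p)^*$, not merely for $c$ an $h$-th power, where it follows at once from the substitution $x\mapsto a^{-1}x$ with $a^h=c$ and the weakly-regular-bent identity. Handling the remaining multiplicative cosets is where the full defining structure of $\mathcal{RF}$ enters—in particular the condition that $h$ is even with $\gcd(h-1,p-1)=1$, which guarantees that $f$ admits a dual $f^*$ with a compatible homogeneity and pins down the factor $\bar{\eta}^{\,m}(c)$. Once that evaluation is in hand, the remaining steps are the elementary bookkeeping sketched above.
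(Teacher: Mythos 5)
This lemma is not proved in the paper at all: it is quoted from \cite{Tangit2016}, so there is no in-paper proof to compare against, and your argument has to be judged on its own merits. On those merits it is correct. The two linear equations are right: $N_{f,\beta}+N_{sq,\beta}+N_{nsq,\beta}=p^{m-1}$ since $\tr_{q/p}(\beta x)$ is balanced for $\beta\neq 0$, and $N_{sq,\beta}-N_{nsq,\beta}=S$ with $S=\sum_{\tr_{q/p}(\beta x)=0}\bar{\eta}(f(x))$; the Gauss-sum inversion you invoke is indeed valid at $z=0$ because both sides vanish there. The crux you correctly isolate is the evaluation of $\sum_{x}\zeta_p^{cf(x)+\tr_{q/p}(y\beta x)}$ for all $c\in\gf(p)^*$: writing it as $\sigma_a\bigl(\mathcal{W}_f(-y\beta)\bigr)$ for the Galois automorphism $\sigma_a:\zeta_p\mapsto\zeta_p^a$ with $a^{h-1}=c^{-1}$ (possible since $\gcd(h-1,p-1)=1$) gives the factor $\varepsilon\,\bar{\eta}^m(a)\sqrt{p^*}^{\,m}$ via $\sigma_a(\sqrt{p^*})=\bar{\eta}(a)\sqrt{p^*}$, and $\bar{\eta}(a)=\bar{\eta}(c)$ precisely because $h-1$ is odd; the same conjugation argument yields the dual homogeneity $f^*(a\gamma)=b^hf^*(\gamma)$ (with $b^{h-1}=a$, and $\bar\eta(b^h)=1$ since $h$ is even), so $f^*$ vanishes on all of $\gf(p)^*\beta$, while the $y=0$ term is covered directly by the hypothesis $\mathcal{W}_f(0)=\varepsilon\sqrt{p^*}^{\,m}$. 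From there your parity discussion of $\sum_c\bar{\eta}^{m+1}(c)$, the value $G(\bar{\eta},\bar{\chi}_1)=\sqrt{p^*}$, and the $2\times 2$ solve reproduce all four displayed formulas, including the $\bar{\eta}^{m/2}(-1)$ factor inherited from Lemma \ref{2lem11} in the even case. Your route is essentially the character-sum machinery of the cited source; the one mild streamlining is that inserting $\bar{\eta}(f(x))$ produces the difference $N_{sq,\beta}-N_{nsq,\beta}$ in a single sum, rather than counting each fiber $\#\{x: f(x)=j,\ \tr_{q/p}(\beta x)=0\}$ separately and summing over $j\in\textup{SQ}$ and $j\in\textup{NSQ}$ as is done there.
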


\section{Shortened binary linear codes with special weight distributions}\label{sec-general}

In this section, we give some general results on  the shortened codes of linear codes with the weight distributions in Tables \ref{tab-cf} and \ref{tab-cf1}.

Let $T$  be a set of $t$ coordinate positions in $\C$ (i.e., $T$ is a $t$-subset of $\mathcal P(\C)$). Define
$$\Lambda _{T,w}(\C)=\{\mathrm{Supp}(\mathbf c):~\mathbf{c}\in \C,~wt(\mathbf{c})=w, ~and~T\subseteq \mathrm{Supp}(\mathbf c) \}. $$
and $\lambda _{T,w}(\C) = \# \Lambda _{T,w}(\C)$.

We will consider some shortened code $\C_T$ of $\C$ for the case $m \geq 4$ and $t \geq 1$ .

\subsection{Shortened linear codes holding $t$-designs}

Let $p=2$ and $q=2^m$. Notice that if a binary code $\C$ has length $2^m$  and the weight distribution in Table \ref{tab-cf} (resp. Table \ref{tab-cf1}), then the code $\C$ holds $3$-designs (resp. $2$-designs) (\cite{ding2018,DT2020ccds}). The following two theorems are easily derived from Theorem \ref{thm:sct-code}, Tables \ref{tab-cf} and \ref{tab-cf1}, and we omit their proofs.

\begin{theorem}\label{main-design1}
Let $m\geq 5$ be odd, and $\C$ be a binary linear code with length $2^m$  and the weight distribution in Table \ref{tab-cf}. Let  $T$ be a $t$-subset of $\mathcal P(\C)$.
We have the following results.
\begin{itemize}
  \item   If $t=1$, then the shortened code $\C_{T}$ is a $[2^{m}-1, 2m,   2^{m-1} - 2^{(m -1)/2}]$ binary linear code with the weight distribution in Table \ref{tab-des11}.
  \item   If $t=2$, then the shortened code $\C_{T}$ is a $[2^{m}-2, 2m-1, 2^{m-1} - 2^{(m -1)/2}]$ binary linear code with the weight distribution in Table \ref{tab-des12}.
  \item   If $t=3$, then the shortened code $\C_{T}$ is a $[2^{m}-3, 2m-2, 2^{m-1} - 2^{(m -1)/2}]$ binary linear code with the weight distribution in Table \ref{tab-des13}.
\end{itemize}
\end{theorem}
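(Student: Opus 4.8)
The plan is to apply Theorem~\ref{thm:sct-code} directly, since the authors themselves note that the proof is omitted precisely because it follows from that theorem together with the design-holding property recorded just above. First I would verify the hypotheses of Theorem~\ref{thm:sct-code}. The code $\C$ here has the weight distribution in Table~\ref{tab-cf}, so its minimum distance is $d = 2^{m-1} - 2^{(m-1)/2}$ and, by the stated fact that $\C$ holds $3$-designs for $m$ odd, the pair $(\mathcal P(\C), \mathcal B_i(\C))$ is a $3$-design for every weight $i$ appearing in the range $d \le i \le \nu - t$. The dimension is $\bar k = 2m+1$ and $\nu = 2^m$. I would confirm that $t \in \{1,2,3\}$ satisfies $0 < t < \min\{d, d^{\perp}\}$; since $d$ grows like $2^{m-1}$ and $d^{\perp}$ is at least $4$ (indeed the dual of these APN codes is well known to have minimum distance $\geq 4$), the condition $t \le 3 < \min\{d,d^{\perp}\}$ holds for $m \ge 5$. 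This justifies invoking the theorem for each of $t=1,2,3$.

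Next I would read off the structural conclusions: Theorem~\ref{thm:sct-code} guarantees that $\C_T$ has length $\nu - t = 2^m - t$ and dimension $\bar k - t = 2m+1-t$, which matches the claimed lengths $2^m-1,\,2^m-2,\,2^m-3$ and dimensions $2m,\,2m-1,\,2m-2$ respectively. The theorem also gives that the weight distribution is independent of the choice of $T$, which is what lets us speak of ``the'' shortened code. The remaining task is purely computational: substitute the multiplicities from Table~\ref{tab-cf} into the closed formula
\[
A_k(\mathcal C_T) = \frac{\binom{k}{t}\binom{\nu-t}{k}}{\binom{\nu}{t}\binom{\nu-t}{k-t}} A_k(\mathcal C)
\]
for each nonzero weight $k \in \{2^{m-1}-2^{(m-1)/2},\ 2^{m-1},\ 2^{m-1}+2^{(m-1)/2},\ 2^m\}$ and each $t \in \{1,2,3\}$, thereby producing the entries of Tables~\ref{tab-des11}, \ref{tab-des12}, and \ref{tab-des13}. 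One should also track whether any weight becomes impossible after shortening; in particular the all-one-support weight $2^m$ has multiplicity $1$ in $\C$, and the factor $\binom{k}{t}/\binom{\nu}{t}$ with $k=\nu$ evaluates the contribution of that codeword, so I would check that the full-weight codeword survives (it does, since its support contains every position and hence contains $T$).

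Finally I would confirm the claimed minimum distance of each $\C_T$. The formula shows that $A_k(\mathcal C_T)=0$ whenever $\binom{k}{t}=0$, i.e.\ for $k<t$, but the smallest nonzero weight of $\C$ is $2^{m-1}-2^{(m-1)/2}$, which is far larger than $t\le 3$; moreover the coefficient $\binom{k}{t}\binom{\nu-t}{k}/(\binom{\nu}{t}\binom{\nu-t}{k-t})$ is strictly positive for $k=2^{m-1}-2^{(m-1)/2}$, so that weight is genuinely attained in $\C_T$ and is its minimum distance, matching the stated $[2^m-t,\,2m+1-t,\,2^{m-1}-2^{(m-1)/2}]$ parameters. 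I do not anticipate a genuine obstacle here: the entire argument is a verification that the three named design-holding conditions are met and then a mechanical substitution, which is exactly why the authors write ``we omit their proofs.'' The only point requiring a small amount of care is the bookkeeping of the binomial coefficients when simplifying the multiplicities into the clean forms displayed in the three tables, but this is routine algebra rather than a conceptual difficulty.
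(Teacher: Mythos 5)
Your overall route is exactly the paper's: invoke the fact that $\C$ holds $3$-designs (hence $t$-designs for all $t\le 3$), apply Theorem~\ref{thm:sct-code} to get the length $2^m-t$, the dimension $2m+1-t$, and the closed formula for $A_k(\C_T)$, then substitute the multiplicities of Table~\ref{tab-cf}. However, one step of your plan is genuinely wrong: the full-weight codeword does \emph{not} survive shortening. The shortened code $\C_T$ is built from the subcode $\C(T)$ of codewords that are \emph{zero} on $T$; the codeword of weight $2^m$ is nonzero at every coordinate, so it belongs to no $\C(T)$. What you describe --- a codeword whose support contains $T$ surviving with its $T$-coordinates deleted --- is puncturing, not shortening. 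The formula itself agrees with this: its coefficient simplifies to $\binom{\nu-k}{t}/\binom{\nu}{t}$, which vanishes at $k=\nu$ (equivalently, $\binom{\nu-t}{k}=0$ when $k=\nu$), so $A_{2^m}(\C_T)=0$. Consistency also forces it: the entries of Tables~\ref{tab-des11}, \ref{tab-des12} and \ref{tab-des13} sum to exactly $2^{2m+1-t}$ precisely because the weight-$2^m$ codeword is lost; including it would give $2^{2m+1-t}+1$ codewords, impossible for a linear code. Had you carried out your ``check'' as stated, you would have inserted a spurious row and corrupted all three tables.

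A second, smaller point of rigor: the hypothesis $t<\min\{d,d^{\perp}\}$ for $t=3$ needs $d^{\perp}\ge 4$, which you justify by appeal to folklore about duals of APN codes. But the theorem is stated for \emph{any} binary code with the weight distribution of Table~\ref{tab-cf}; no APN function appears in the hypotheses. The correct justification is internal to the data: the weight distribution of $\C$ together with the first several Pless power moments in (\ref{eq:PPM}) forces $A_i(\C^{\perp})=0$ for $1\le i\le 5$ and $A_6(\C^{\perp})>0$, i.e. $d^{\perp}=6$ (this is Lemma~\ref{lem-cf} of the paper, and the same fact underlies the cited $3$-design property via the Assmus--Mattson theorem). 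With these two repairs, your argument is precisely the proof the authors omitted.
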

\begin{table}[ht]
\begin{center}
\caption{The weight distribution of $\C_T$ for $m$ odd and $t =1$ }\label{tab-des11}
\begin{tabular}{cc} \hline
Weight  &  Multiplicity   \\ \hline
$0$          &  $1$ \\  [2mm]
$2^{m-1}-2^{(m-1)/2}$    & $ 2^{(m-5)/2} (2^m-1) (2 + 2^{(1 + m)/2})$  \\ [2mm]
$2^{m-1}$    & $-1 + 2^{m-1} + 2^{2m-1}$ \\ [2mm]
$2^{m-1}+2^{(m-1)/2}$    & $ 2^{(m-5)/2} (2^m-1) (-2 + 2^{(1 + m)/2})$ \\ [2mm]
\hline
\end{tabular}
\end{center}
\end{table}

\begin{table}[ht]
\begin{center}
\caption{The weight distribution of $\C_T$ for $m$ odd and $t =2$ }\label{tab-des12}
\begin{tabular}{cc} \hline
Weight  &  Multiplicity   \\ \hline
$0$          &  $1$ \\  [2mm]
$2^{m-1}-2^{(m-1)/2}$    & $ 2^{(m-7)/2} (-4 + 2^{2 + m} + 2^{(1 + 3 m)/2})$  \\ [2mm]
$2^{m-1}$    & $-1 +2^{2m-2} $ \\ [2mm]
$2^{m-1}+2^{(m-1)/2}$    & $ 2^{(m-7)/2} (4 - 2^{2 + m} + 2^{(1 + 3 m)/2})$ \\ [2mm]
\hline
\end{tabular}
\end{center}
\end{table}

\begin{table}[ht]
\begin{center}
\caption{The weight distribution of $\C_T$ for $m$ odd and $t =3$}\label{tab-des13}
\begin{tabular}{cc} \hline
Weight  &  Multiplicity   \\ \hline
$0$          &  $1$ \\  [2mm]
$2^{m-1}-2^{(m-1)/2}$    & $ -2^{( m-3)/2} + 3\cdot 2^{(3m-7)/2} + 2^{m-3} + 2^{2m-4}$  \\ [2mm]
$2^{m-1}$    & $(-1 + 2^{m-2}) (1+2^{ m-1})$ \\ [2mm]
$2^{m-1}+2^{(m-1)/2}$    & $ 2^{( m-3)/2} - 3\cdot 2^{(3m-7)/2} + 2^{m-3} + 2^{2m-4}$  \\ [2mm]
\hline
\end{tabular}
\end{center}
\end{table}

\begin{example}\label{exa-des11}
Let $m=5$ and $T$ be a $1$-subset of $\mathcal P(\C)$. Then the shortened code $\C_{T}$  in Theorem \ref{main-design1} is a $[31,10,12]$ binary linear code with the weight enumerator $1+310z^{12}+527z^{16}+186z^{20}$. The code $\C_{T}$ is optimal. The dual code of $\C_{T}$ has parameters $[31,21,5]$ and is optimal according to the tables of best known codes maintained at http://www.codetables.de.
\end{example}

\begin{example}\label{exa-des12}
Let $m=5$ and $T$ be a $2$-subset of $\mathcal P(\C)$. Then the shortened code $\C_{T}$  in Theorem \ref{main-design1} is a $[30,9,12]$ linear code with the weight enumerator $1+190z^{12}+255z^{16}+66z^{20}$. The code $\C_{T}$ is optimal. The dual code of $\C_{T}$ has parameters $[30,21,4]$ and is optimal according to the tables of best known codes   maintained at http://www.codetables.de.
\end{example}

\begin{example}\label{exa-des13}
Let $m=5$ and $T$ be a $3$-subset of $\mathcal P(\C)$. Then the shortened code $\C_{T}$  in Theorem \ref{main-design1} is a $[29,8,12]$ binary linear code with the weight enumerator $1+114z^{12}+119z^{16}+22z^{20}$. The code $\C_{T}$ is optimal. The dual code of $\C_{T}$ has parameters $[29,21,3]$ and is almost optimal according to the tables of best known codes   maintained at http://www.codetables.de.
\end{example}

\begin{theorem}\label{main-design2}
Let $m\geq 4 $ be even, and $\C$ be a binary linear code with length $2^m$  and the weight distribution in Table \ref{tab-cf1}. Let  $T$ be a $t$-subset of $\mathcal P(\C)$.
We have the following results.
\begin{itemize}
  \item   If $t=1$, then the shortened code $\C_{T}$ is a $[2^{m}-1, 2m,   2^{m-1} - 2^{m/2}]$ binary linear code with the weight distribution in Table \ref{tab-des21}.
  \item   If $t=2$, then the shortened code $\C_{T}$ is a $[2^{m}-2, 2m-1, 2^{m-1} - 2^{m/2}]$ binary linear code with the weight distribution in Table \ref{tab-des22}.
\end{itemize}
\end{theorem}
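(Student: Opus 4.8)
The plan is to obtain Theorem \ref{main-design2} as a direct application of Theorem \ref{thm:sct-code}, mirroring the reasoning behind Theorem \ref{main-design1}. The starting point is the fact recorded above that a binary code $\C$ of length $2^m$ with the weight distribution of Table \ref{tab-cf1} holds $2$-designs; that is, $(\cP(\C), \cB_i(\C))$ is a $2$-design for every weight $i$ occurring in $\C$, and hence also a $1$-design. I would first record the invariants of $\C$: it has length $\nu = 2^m$, dimension $\bar k = 2m+1$, and minimum distance $d = 2^{m-1} - 2^{m/2}$, the least nonzero weight in Table \ref{tab-cf1}. These are the quantities to be fed into the shortening formula for $t \in \{1,2\}$.

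Before invoking Theorem \ref{thm:sct-code} I must verify its hypothesis $0 < t < \min\{d, d^\perp\}$. For $m \geq 4$ even one has $d = 2^{m-1} - 2^{m/2} \geq 4$, so $d > t$ for $t \le 2$. For the dual distance I would argue $d^\perp \geq 3$ directly: every codeword of $\C$ has the form $(\tr_{q/p}(af(x)+bx+c))_{x}$, so a dual word $u$ must satisfy $\sum_x u_x = 0$ (from the $c$-term) and $\sum_x u_x\,x = 0$ (from the $b$-term). A weight-$1$ word violates the first relation and a weight-$2$ word supported on $\{x_0,x_1\}$ forces $x_0 = x_1$ via the second, so $d^\perp \geq 3$. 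Thus $t < \min\{d, d^\perp\}$ for both $t=1$ and $t=2$, and the $t$-design hypothesis of Theorem \ref{thm:sct-code} holds throughout the range $d \le i \le \nu - t$. Lemma \ref{lem:C-S-P} then gives at once that $\C_T$ has length $\nu - t = 2^m - t$ and dimension $\bar k - t = 2m+1-t$, yielding the stated parameters $[2^m-1,\,2m,\,\cdot]$ and $[2^m-2,\,2m-1,\,\cdot]$.

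With the hypotheses secured, I would apply the explicit formula
\[
A_k(\C_T) = \frac{\binom{k}{t}\binom{\nu - t}{k}}{\binom{\nu}{t}\binom{\nu-t}{k-t}}\, A_k(\C)
\]
to each nonzero weight $k$ of Table \ref{tab-cf1}. The binomial ratio collapses cleanly: for $t=1$ it equals $(\nu-k)/\nu$, and for $t=2$ it equals $(\nu-k)(\nu-k-1)/(\nu(\nu-1))$, with $\nu = 2^m$. The five intermediate weights $2^{m-1}\pm 2^{m/2}$, $2^{m-1}\pm 2^{(m-2)/2}$, and $2^{m-1}$ then survive with the multiplicities displayed in Tables \ref{tab-des21} and \ref{tab-des22}, whereas the all-ones weight $k=2^m$ drops out, since $\binom{\nu-t}{2^m}=\binom{2^m-t}{2^m}=0$ forces $A_{2^m}(\C_T)=0$. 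The minimum distance $2^{m-1}-2^{m/2}$ is then confirmed by checking that its multiplicity remains strictly positive for $m \geq 4$ even.

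The only real labor, and the expected obstacle, is the bookkeeping in this last step: for each of the five weights and each $t\in\{1,2\}$ one must substitute the multiplicity from Table \ref{tab-cf1} into the simplified ratio and verify that the closed form matches the corresponding entry of Tables \ref{tab-des21} and \ref{tab-des22} exactly. Because $\nu=2^m$ is a pure power of two and $t\le 2$, these expressions reduce to elementary rational functions of $2^m$, so the computation is routine though tedious, which is precisely why it is omitted in the statement.
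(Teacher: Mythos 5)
Your overall route is exactly the paper's: the authors state that Theorem \ref{main-design2} (like Theorem \ref{main-design1}) ``is easily derived from Theorem \ref{thm:sct-code}'' together with the $2$-design property of any code having the Table \ref{tab-cf1} weight distribution, and they omit the details. Your reduction of the binomial ratio to $(\nu-k)/\nu$ for $t=1$ and $(\nu-k)(\nu-k-1)/(\nu(\nu-1))$ for $t=2$, the observation that the all-ones weight $k=2^m$ drops out, and the resulting table entries are all correct; I spot-checked several multiplicities against Tables \ref{tab-des21} and \ref{tab-des22} and they match.

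There is, however, one step whose justification does not work at the stated level of generality: your verification that $d^{\perp}\geq 3$. You argue from the representation of codewords as $\left(\tr_{q/p}(af(x)+bx+c)\right)_{x}$, i.e.\ you use the defining form (\ref{eq:cf}). But the hypothesis of Theorem \ref{main-design2} is only that $\C$ is \emph{some} binary linear code of length $2^m$ with the weight distribution of Table \ref{tab-cf1}; no trace representation is assumed, so the relations $\sum_x u_x=0$ and $\sum_x u_x x=0$ for dual words are not available. The fact you need is still true, but the correct derivation is the one the paper gives in Lemma \ref{lem-cf}: the dual weight distribution is determined by the primal one (MacWilliams/Pless power moments), and the first seven power moments applied to Table \ref{tab-cf1} force $A_i(\C^{\perp})=0$ for $1\le i\le 5$ and $A_6(\C^{\perp})>0$, hence $d^{\perp}=6$. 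With that substitution (which in fact gives the stronger bound $d^{\perp}=6>2$ needed for both $t=1$ and $t=2$), your proof is complete and coincides with the paper's intended argument.
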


\begin{table}[ht]
\begin{center}
\caption{The weight distribution of $\C_T$ for $m$  even and $t=1$ }\label{tab-des21}
\begin{tabular}{cc} \hline
Weight  &  Multiplicity   \\ \hline
$0$          &  $1$ \\ [2mm]
$2^{m-1}-2^{m/2}$    & $ 1/3 \cdot 2^{-3 + m/2} (2 + 2^{m/2}) (-1 + 2^m) $ \\ [2mm]
$2^{m-1}-2^{(m-2)/2}$    & $ 1/3 \cdot 2^{m/2} (-1 + 2^{m/2}) (1 + 2^{m/2})^2 $ \\ [2mm]
$2^{m-1}$    & $ (2^m-1) (1+2^{ m-2})$ \\            [2mm]
$2^{m-1}+2^{(m-2)/2}$    & $ 1/3 \cdot 2^{m/2} (-1 + 2^{m/2})^2  (1 + 2^{m/2})   $ \\ [2mm]
$2^{m-1}+2^{m/2}$    &   $ 1/3 \cdot 2^{-3 + m/2} (-2 + 2^{m/2}) (-1 + 2^m)  $ \\ [2mm]
\hline
\end{tabular}
\end{center}
\end{table}

\begin{table}[ht]
\begin{center}
\caption{The weight distribution of $\C_T$ for $t=2$ and $m$ even }\label{tab-des22}
\begin{tabular}{cc} \hline
Weight  &  Multiplicity   \\ \hline
$0$          &  $1$ \\ [2mm]
$2^{m-1}-2^{m/2}$    & $ 1/3 \cdot 2^{m/2-4} (2 + 2^{m/2}) (2^m +2^{1+m/2}-2) $ \\ [2mm]
$2^{m-1}-2^{(m-2)/2}$    & $ 1/3 \cdot 2^{m/2-1} (1 + 2^{m/2}) (2^m +2^{m/2}-2)  $ \\ [2mm]
$2^{m-1}$                 & $ (2^{m-1}-1)(1+2^{ m-2})$ \\            [2mm]
$2^{m-1}+2^{(m-2)/2}$    & $ 1/3 \cdot 2^{m/2-1} (-1 + 2^{m/2}) (2^m -2^{m/2}-2)  $ \\ [2mm]
$2^{m-1}+2^{m/2}$    & $ 1/3 \cdot 2^{m/2-4} (4 + 2^{1 + m/2} + 2^{3 m/2} - 2^{2 + m})$ \\ [2mm]
\hline
\end{tabular}
\end{center}
\end{table}

\begin{example}\label{exa-des21}
Let $m=4$ and $T$ be a $1$-subset of $\mathcal P(\C)$. Then the shortened code $\C_{T}$  in Theorem \ref{main-design2} is a $[15,8,4]$ linear code with the weight enumerator $1+15z^{4}+100z^{6}+75 z^{8}+60 z^{10}+5 z^{12}$. This code $\C_{T}$ is optimal. Its dual $\C_{T}^\perp$ has parameters $[15,7,5]$ and is optimal according to the tables of best known codes   maintained at http://www.codetables.de.
\end{example}


\begin{example}\label{exa-des22}
Let $m=4$ and $T$ be a $2$-subset of $\mathcal P(\C)$. Then the shortened code $\C_{T}$  in Theorem \ref{main-design2} is a $[14,7,4]$ binary linear code with the weight enumerator $1+11z^{4}+60z^{6}+35z^{8}+20 z^{10}+z^{12}$. This code $\C_{T}$ is optimal. Its dual $\C_{T}^\perp$ has parameters $[14,7,4]$ and is optimal according to the tables of best known codes   maintained at http://www.codetables.de.
\end{example}


\subsection{Several general results on shortened codes }

\begin{lemma}\label{lem-cf}
Let $m\geq 5$ be odd (resp., $m\geq 4$ be even), and $\C$ be a binary linear code with the length $2^m$  and the weight distribution in Table \ref{tab-cf} (resp., Table \ref{tab-cf1}). Then the dual code $\C^\bot$ of $\C$ has parameters $[2^m,2^m-2m-1,6]$.
\end{lemma}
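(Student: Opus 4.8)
The plan is to treat the three parameters of $\C^\bot$ separately: the length and dimension are immediate, and the dual distance $d^\bot=6$ is the real content. The length of $\C^\bot$ equals that of $\C$, namely $2^m$, and since $\dim_{\gf(2)}\C=2m+1$ we get $\dim\C^\bot=2^m-(2m+1)=2^m-2m-1$. So everything reduces to proving $d^\bot=6$, and for this I would first describe $\C^\bot$ structurally. The code $\C$ is spanned by the vectors $(\tr_{q/p}(af(x)))_{x\in\gf(q)}$ and $(\tr_{q/p}(bx))_{x\in\gf(q)}$ (as $a,b$ range over $\gf(q)$) together with the all-one vector $\bone$, because $\tr_{q/p}(c)$ is a constant. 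Using nondegeneracy of the trace form, a vector $v=(v_x)_{x\in\gf(q)}\in\gf(2)^{2^m}$ lies in $\C^\bot$ if and only if $\sum_x v_x=0$ in $\gf(2)$, $\sum_x v_x x=0$ in $\gf(q)$, and $\sum_x v_x f(x)=0$ in $\gf(q)$. Equivalently, writing $X=\mathrm{Supp}(v)$, the codewords of $\C^\bot$ correspond to subsets $X\subseteq\gf(q)$ with $|X|$ even and $\sum_{x\in X}x=\sum_{x\in X}x^s=0$.

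For the lower bound $d^\bot\ge 6$ I would rule out small weights using this description. The parity condition forces every nonzero codeword to have even weight, which already excludes weights $1,3,5$; a weight-$2$ word would require distinct $x_1,x_2$ with $x_1+x_2=0$, impossible in characteristic $2$. The decisive case is weight $4$: distinct $x_1,x_2,x_3,x_4$ with $x_1+x_2+x_3+x_4=0$ and $x_1^s+x_2^s+x_3^s+x_4^s=0$. Setting $a:=x_1+x_2=x_3+x_4\neq 0$, one has $x_2=x_1+a$ and $x_4=x_3+a$, and the second equation becomes $D_af(x_1)=D_af(x_3)$, where $D_af(y):=f(y+a)+f(y)$. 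Then $x_1,x_3$ and their partners $x_1+a,x_3+a$ (partners are solutions since $D_af(y+a)=D_af(y)$) are four distinct solutions of $D_af(y)=b$ with $b=D_af(x_1)$, contradicting the APN property of $f$, which allows at most two solutions for $a\neq 0$. Hence no weight-$4$ word exists and $d^\bot\ge 6$. This APN reduction is the conceptual heart of the argument.

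To finish with equality I would combine the above with one power-moment computation. Having established $A_0^{\perp}=1$ and $A_1^{\perp}=A_2^{\perp}=A_3^{\perp}=A_4^{\perp}=A_5^{\perp}=0$, I would substitute these into the Pless power moment \eqref{eq:PPM} with $q=2$, $\nu=2^m$, $k=2m+1$ at $t=6$. All terms with $1\le i\le 5$ vanish, leaving an explicit linear equation whose only unknown is $A_6^{\perp}$, with coefficient $6!\,2^{k-6}=720\cdot 2^{2m-5}$ (from $S(6,6)=1$ and $\binom{\nu-6}{\nu-6}=1$); evaluating the sixth moment $\sum_i i^6A_i$ from Table \ref{tab-cf} when $m$ is odd and from Table \ref{tab-cf1} when $m$ is even yields a positive value of $A_6^{\perp}$. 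Thus a weight-$6$ codeword exists, giving $d^\bot=6$.

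The hard part will be twofold. First, making the weight-$4$ step watertight: I must verify that the two coset pairs $\{x_1,x_1+a\}$ and $\{x_3,x_3+a\}$ are disjoint (they are cosets of the subgroup $\{0,a\}$, hence equal or disjoint, and distinctness of the four points forces disjointness), so that one genuinely obtains four solutions of $D_af(y)=b$. Second, the sixth-moment bookkeeping: computing $\sum_i i^6A_i$ from the given weight distributions and confirming $A_6^{\perp}>0$ in both parity regimes of $m$ is routine but delicate, and is where the bulk of the calculation lies.
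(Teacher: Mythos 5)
Your proposal does not prove the lemma as stated; it proves a special case. The hypothesis of Lemma~\ref{lem-cf} is only that $\C$ is \emph{some} binary linear code of length $2^m$ whose weight distribution is that of Table~\ref{tab-cf} (resp.\ Table~\ref{tab-cf1}); no function $f$ and no defining construction are part of the assumption. Your entire lower-bound argument, however, begins by writing $\C$ as the code (\ref{eq:cf}) spanned by $(\tr_{q/p}(af(x)))_{x}$, $(\tr_{q/p}(bx))_{x}$ and $\bone$, describes $\C^\perp$ by the conditions $\sum_x v_x=0$, $\sum_x v_x x=0$, $\sum_x v_x f(x)=0$, and then invokes the APN property of $f$ to exclude weight-$4$ dual words. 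None of this structure is available under the lemma's hypothesis, so the central step of your proof has nothing to act on. The generality is not a pedantic point: the paper applies Lemma~\ref{lem-cf} to \emph{arbitrary} codes with these weight distributions in Theorem~\ref{main-31}, Lemma~\ref{le-even31} and Theorem~\ref{main-even32} (Section~\ref{sec-general} is deliberately function-free), and only later specializes to APN codes in Section~\ref{sec-APN}.

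The paper's own proof stays entirely at the level of weight distributions: in the Pless power moment (\ref{eq:PPM}) the right-hand side at $t$ involves only $A_0^\perp,\dots,A_t^\perp$, so the moments for $t=0,1,\dots,6$ determine $A_1^\perp,\dots,A_6^\perp$ recursively from the left-hand sides, which are computable from Table~\ref{tab-cf} or~\ref{tab-cf1}; carrying this out gives $A_i(\C^\perp)=0$ for $1\le i\le 5$ and $A_6(\C^\perp)>0$ (for instance $A_6(\C^\perp)=\frac{1}{45}\cdot 2^{m-4}(2^m-4)^2(2^m-1)$ in the even case, as computed in the proof of Lemma~\ref{le-even31}). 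Your closing step --- the $t=6$ moment with coefficient $6!\,S(6,6)\,2^{k-6}=720\cdot 2^{2m-5}$ --- is exactly this technique, so the repair is simply to apply the same method at $t=1,\dots,5$ as well, which yields $A_1^\perp=\cdots=A_5^\perp=0$ with no appeal to APN-ness and renders the structural half of your argument unnecessary. That said, your APN reduction (a weight-$4$ dual word produces four distinct solutions of $D_af(y)=b$, contradicting $\delta_F=2$) is correct in itself, and it would be a legitimate and more conceptual proof if the lemma were stated only for the codes of Section~\ref{sec-APN}.
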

\begin{proof}
The weight distribution in Table \ref{tab-cf} (or \ref{tab-cf1}) means that the dimension of $\C$ is $2m+1$. Thus, the dual code $\C^\bot$ of $\C$ has dimension $2^m-2m-1$. Since the code length of $\C$ is $2^m$, from the weight distribution in Table \ref{tab-cf} (or \ref{tab-cf1}) and the first seven Pless power moments in (\ref{eq:PPM}), it is easily obtain that $A_6(\C^\perp)> 0$ and  $A_i(\C^\perp)=0$ for any $i\in \{1,2,3,4,5\}$.
The desired conclusions then follow .
\end{proof}


\begin{theorem}\label{main-31}
Let $m\geq 4$, and $\C$ be a binary linear code with length $2^m$  and the weight distribution in Table \ref{tab-cf} for odd $m$
and Table \ref{tab-cf1} for even $m$. Let  $T$ be a $4$-subset of $\mathcal P(\C)$ and  $\lambda _{T,6}(\C^\bot)=\lambda$,
then $\lambda =0$ or $1$. Furthermore, we have the following results.
\begin{itemize}
  \item [(\uppercase\expandafter{\romannumeral1})]  If $m\geq 5$ is odd and $\lambda =0$, then the shortened code $\C_{T}$ is a $[2^{m}-4, 2m-3, 2^{m-1} - 2^{(m -1)/2}]$ binary linear code with the weight distribution in Table \ref{tab-31}.
  \item [(\uppercase\expandafter{\romannumeral2})]  If  $m\geq 5$ is odd and $\lambda =1$, then the shortened code $\C_{T}$ is a $[2^{m}-4, 2m-3, 2^{m-1} - 2^{(m -1)/2}]$ binary linear code with the weight distribution in Table \ref{tab-32}.
\end{itemize}
\end{theorem}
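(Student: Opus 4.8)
The plan is to pin down the dimension and the low-weight terms of the dual, establish the dichotomy $\lambda\in\{0,1\}$ directly, and then solve a $3\times 3$ linear system produced by the first three Pless power moments.

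\textbf{Dimension and the bound on $\lambda$.} By Lemma \ref{lem-cf} the dual $\C^\perp$ has minimum distance $d^\perp=6>4=|T|$. This alone yields the dichotomy: if $\mathbf{c}_1\neq \mathbf{c}_2$ were two weight-$6$ codewords of $\C^\perp$ whose supports both contain $T$, then, since $p=2$, the four positions of $T$ lie in $\mathrm{Supp}(\mathbf{c}_1)\cap\mathrm{Supp}(\mathbf{c}_2)$ and cancel in $\mathbf{c}_1+\mathbf{c}_2$, forcing $wt(\mathbf{c}_1+\mathbf{c}_2)\le 6+6-2\cdot 4=4$; as $\mathbf{c}_1+\mathbf{c}_2$ is a nonzero word of $\C^\perp$, this contradicts $d^\perp=6$. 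Hence $\lambda\in\{0,1\}$ for every $m\ge 4$. For the remaining assertions assume $m\ge 5$ is odd, so $\C$ has minimum distance $d=2^{m-1}-2^{(m-1)/2}\ge 12>6$; then $t=4<\min\{d,d^\perp\}=6$, and Lemma \ref{lem:C-S-P} gives at once that $\C_T$ has length $2^m-4$ and dimension $(2m+1)-4=2m-3$, together with the crucial identity $(\C_T)^\perp=(\C^\perp)^T$.

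\textbf{Reducing to three unknowns.} Every codeword of $\C_T$ is the puncturing on $T$ of a codeword of $\C$ that vanishes on $T$, so it has the same Hamming weight as a word of $\C$. Hence the nonzero weights of $\C_T$ lie among those of $\C$, namely $\{2^{m-1}-2^{(m-1)/2},\,2^{m-1},\,2^{m-1}+2^{(m-1)/2},\,2^m\}$. The weight-$2^m$ word of $\C$ is the all-ones vector, whose support meets $T$, so it does not vanish on $T$ and contributes nothing to $\C_T$. Thus $\C_T$ carries at most the three nonzero weights $w_1=2^{m-1}-2^{(m-1)/2}$, $w_2=2^{m-1}$, $w_3=2^{m-1}+2^{(m-1)/2}$, and it remains to determine $A_{w_1}(\C_T),A_{w_2}(\C_T),A_{w_3}(\C_T)$.

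\textbf{Low-weight terms of the dual and the linear system.} Since $t=4<d^\perp$, puncturing on $T$ is injective on $\C^\perp$, and a word of $\C^\perp$ of weight $w$ meeting $T$ in $s$ positions maps to a word of $(\C^\perp)^T=(\C_T)^\perp$ of weight $w-s$. As $\C^\perp$ has minimum weight $6$ and $s\le 4$, there is no word of weight $1$, so $A_1^\perp=0$; a word of weight $2$ forces $w=6$ and $s=4$, i.e. it arises exactly from a weight-$6$ word of $\C^\perp$ whose support contains $T$, whence $A_2^\perp=\lambda_{T,6}(\C^\perp)=\lambda$. Feeding $A_0^\perp=1$, $A_1^\perp=0$, $A_2^\perp=\lambda$ into the cases $t=0,1,2$ of the Pless power moments (\ref{eq:PPM}) applied to $\C_T$ (of length $2^m-4$ and dimension $2m-3$) gives three linear equations in $A_{w_1},A_{w_2},A_{w_3}$; the coefficient matrix is the nonsingular Vandermonde matrix on the distinct $w_1,w_2,w_3$, so the solution is unique and affine-linear in $\lambda$. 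Substituting $\lambda=0$ and $\lambda=1$ and simplifying yields Tables \ref{tab-31} and \ref{tab-32} respectively, and one checks $A_{w_1}>0$ in both cases, confirming the minimum distance $2^{m-1}-2^{(m-1)/2}$.

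\textbf{Main obstacle.} The only conceptual step is the identification $A_2^\perp=\lambda$ and $A_1^\perp=0$ via the injectivity of puncturing when $t<d^\perp$; this is precisely what ties the shape of the weight distribution to the geometric quantity $\lambda_{T,6}(\C^\perp)$. Everything afterward is the routine, if slightly lengthy, inversion of the power-moment system and the algebraic reduction of the resulting frequencies into the tabulated forms, along with the verification that they are nonnegative integers.
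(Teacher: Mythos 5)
Your proposal is correct and follows essentially the same route as the paper's proof: Lemma \ref{lem-cf} gives $d^\perp=6$, the sum of two distinct weight-$6$ dual codewords containing $T$ forces $\lambda\le 1$, Lemma \ref{lem:C-S-P} yields the length, dimension and the identity $(\C_T)^\perp=(\C^\perp)^T$, the nonzero weights of $\C_T$ are confined to the three middle weights of $\C$, and the first three Pless power moments with $A_1^\perp=0$, $A_2^\perp=\lambda$ determine the tables. Your only departures are cosmetic: you treat $\lambda$ symbolically in one unified system rather than splitting into the cases $\lambda=0$ and $\lambda=1$, and you spell out the puncturing-injectivity argument behind $A_2^\perp=\lambda$ that the paper leaves implicit.
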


\begin{table}[ht]
\begin{center}
\caption{The weight distribution of $\C_T$ for $\lambda =0$ }\label{tab-31}
\begin{tabular}{cc} \hline
Weight  &  Multiplicity   \\ \hline
$0$          &  $1$ \\  [2mm]
$2^{m-1}-2^{(m-1)/2}$    & $ -2^{( m-3)/2} + 2^{m-3} + 2^{2 m-5} + 2^{(3 m-5)/2}$  \\ [2mm]
$2^{m-1}$    & $-1 - 2^{m-2} + 4^{ m-2}$ \\ [2mm]
$2^{m-1}+2^{(m-1)/2}$    & $ 2^{( m-3)/2} + 2^{m-3} + 2^{2 m-5} - 2^{(3 m-5)/2}$ \\ [2mm]
\hline
\end{tabular}
\end{center}
\end{table}

\begin{table}[ht]
\begin{center}
\caption{The weight distribution of $\C_T$ for $\lambda =1$ }\label{tab-32}
\begin{tabular}{cc} \hline
Weight  &  Multiplicity   \\ \hline
$0$          &  $1$ \\ [2mm]
$2^{m-1}-2^{(m-1)/2}$    & $ 3\times 2^{m-4} - 2^{(-3 + m)/2} + 2^{-5 + 2 m} + 2^{(-5 + 3 m)/2}$ \\ [2mm]
$2^{m-1}$    & $ 2^{-4} (-8 + 2^m) (2 + 2^m)$ \\ [2mm]
$2^{m-1}+2^{(m-1)/2}$    & $ 3\times 2^{m-4} + 2^{(-3 + m)/2} + 2^{-5 + 2 m} - 2^{(-5 + 3 m)/2}$ \\ [2mm]
\hline
\end{tabular}
\end{center}
\end{table}

\begin{proof}
By the definition of $\Lambda _{T,6}(\C^\perp)$, we have
$$\lambda=\lambda _{T,6}(\C^\perp )=\# \left \{\mathrm{Supp}(\mathbf c): ~\mathbf{c}\in \C^\perp,~wt(\mathbf{c})=6 ~and~T\subseteq \mathrm{Supp}(\mathbf c) \right \}. $$
If  $\lambda \geq2$, there would be  $\mathrm{Supp}(\mathbf c_1), \mathrm{Supp}(\mathbf c_2)\in \Lambda _{T,6}(\C^\perp)$.  Then $\mathbf c_1+\mathbf c_2 \in \C^\bot$ and the weight $wt(\mathbf c_1+\mathbf c_2)\leq 4$. This is a contradiction to the minimum distance $6$ of  $\C^\bot$ in Lemma \ref{lem-cf}. Thus, $\lambda =0$ or $1$.

We treat the weight distribution of $\C_T$ according to the value of $\lambda$ as follows.

(\uppercase\expandafter{\romannumeral1}) The case that $\lambda=0$ and $m$ is odd.

By Lemma \ref{lem-cf}, the minimum distance of  $\C^\bot$ is 6. Thus,
\begin{align}\label{eq-A}
A_1\left ( \left (\mathcal C^{\perp} \right )^{T}  \right )=A_2\left ( \left (\mathcal C^{\perp} \right )^{T}  \right )=0,~
A_1\left ( \left (\mathcal C_{T} \right )^{\perp}  \right )=A_2\left ( \left (\C_{T} \right )^{\perp}  \right )=0
\end{align}
and the shortened code $\C_{T}$ has length $n=2^m-4$ and dimension $k=2m-3$ from $\lambda _{T,6}(\C^\perp )=0$  and Lemma \ref{lem:C-S-P}. By definition and Lemma \ref{lem-cf}, we have
$A_i\left (\C_T \right )=0$ for $i\not \in \{0, i_1, i_2, i_3\} $, where $i_1=2^{m-1}-2^{(m-1)/2}$, $i_2=2^{m-1}$
and $i_3=2^{m-1}+2^{(m-1)/2}$. Therefore, from (\ref{eq-A}) and (\ref{eq:PPM}),
the first three Pless power moments
\begin{align*}
\left\{
  \begin{array}{l}
    A_{i_1} + A_{i_2} +A_{i_3} = 2^{2m-3}-1,  \\
    i_1 A_{i_1} + i_2  A_{i_2} + i_3 A_{i_3}  = 2^{2m-3-1}(2^m-4), \\
    i_1^2 A_{i_1} + i_2^2  A_{i_2} + i_3^2 A_{i_3}  = 2^{2m-3-2}(2^m-4)(2^m-4+1).
  \end{array}
\right.
\end{align*}
yield the weight distribution in Table \ref{tab-31}. This completes the proof of (\uppercase\expandafter{\romannumeral1}).

(\uppercase\expandafter{\romannumeral2}) The case that $\lambda =1$ and $m$ is odd.

The proof is similar to that of (\uppercase\expandafter{\romannumeral1}). Since $\lambda _{T,6}(\C^\perp )=1$  and the minimum  distance of  $\C^\bot$ is 6, from Lemma \ref{lem:C-S-P} we have
\begin{align}\label{eq-AA}
& A_1\left ( \left (\mathcal C^{\perp} \right )^{T}  \right )=0,~A_2\left ( \left (\mathcal C^{\perp} \right )^{T}  \right )=1,~   \nonumber  \\
& A_1\left ( \left (\mathcal C_{T} \right )^{\perp}  \right )=0,~~A_2\left ( \left (\C_{T} \right )^{\perp}  \right )=1.
\end{align}
Then the desired conclusions follow from (\ref{eq-AA}), the definitions and the first three Pless power moments of (\ref{eq:PPM}). This completes the proof.
\end{proof}

\begin{lemma}\label{le-even31}
Let $m\geq 4$ be even, and $\C$ be a binary linear code with length $2^m$  and the weight distribution in Table \ref{tab-cf1}. Let  $T$ be a $3$-subset of $\mathcal P(\C)$. Suppose $\lambda _{T,6}(\C^\bot)=\lambda$, then
$ A_1\left ( \left (\mathcal C^{\perp} \right )^{T}  \right )= A_2\left ( \left (\mathcal C^{\perp} \right )^{T}  \right )=0$, $A_3\left ( \left (\mathcal C^{\perp} \right )^{T}  \right )=\lambda$ and
$ A_4\left ( \left (\mathcal C^{\perp} \right )^{T}  \right )=2\cdot (2^{m-2}-1)^2-  3\lambda $.
\end{lemma}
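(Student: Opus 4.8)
The plan is to work entirely inside the punctured code $(\C^\perp)^T$, which is legitimate because, by Lemma \ref{lem-cf}, $\C^\perp$ is a $[2^m,\,2^m-2m-1,\,6]$ code, so $t=3<\min\{d(\C^\perp),\,d(\C)\}$ and Lemma \ref{lem:C-S-P} makes the puncturing map a dimension-preserving bijection from $\C^\perp$ onto $(\C^\perp)^T$. Under this map a codeword $\mathbf c\in\C^\perp$ of weight $w$ becomes a word of weight $w-|\mathrm{Supp}(\mathbf c)\cap T|$. Writing $B_{w,i}$ for the number of $\mathbf c\in\C^\perp$ with $wt(\mathbf c)=w$ and $|\mathrm{Supp}(\mathbf c)\cap T|=i$, I would record the reduction $A_j((\C^\perp)^T)=\sum_{i=0}^{3}B_{j+i,\,i}$, so that everything comes down to how the weight-$6$ words of $\C^\perp$ (the only nonzero words of small weight) meet the fixed $3$-set $T$.

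First I would dispose of the three easy assertions. Since every nonzero codeword of $\C^\perp$ has weight at least $6$ while $|\mathrm{Supp}(\mathbf c)\cap T|\le 3$, no codeword can puncture to weight $1$ or $2$, giving $A_1((\C^\perp)^T)=A_2((\C^\perp)^T)=0$. A codeword puncturing to weight $3$ must satisfy $wt(\mathbf c)=3+|\mathrm{Supp}(\mathbf c)\cap T|\le 6$, forcing $wt(\mathbf c)=6$ and $T\subseteq\mathrm{Supp}(\mathbf c)$; as $\C^\perp$ is binary these codewords correspond bijectively to their supports, so there are exactly $\lambda_{T,6}(\C^\perp)=\lambda$ of them and $A_3((\C^\perp)^T)=\lambda$.

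For the main identity, the contributing codewords have $wt(\mathbf c)=4+|\mathrm{Supp}(\mathbf c)\cap T|$, i.e.\ weight $6$ meeting $T$ in exactly two points, or weight $7$ containing $T$. Here I would use that $\bone\in\C$ (take $a=b=0$ and $c$ with $\tr_{q/p}(c)=1$), so $\C^\perp\subseteq\langle\bone\rangle^\perp$ consists of even-weight words only; hence there are no weight-$7$ codewords and $A_4((\C^\perp)^T)=B_{6,2}$. To evaluate $B_{6,2}$ I would double-count incidences of the three $2$-subsets of $T$ with the weight-$6$ supports: a weight-$6$ codeword contains exactly one $2$-subset of $T$ when it meets $T$ in two points and all three when it contains $T$, which yields $\sum_{P\subseteq T,\,|P|=2}\#\{\mathbf c:\,wt(\mathbf c)=6,\ P\subseteq\mathrm{Supp}(\mathbf c)\}=B_{6,2}+3\lambda$.

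The crux is that $\mathcal B_6(\C^\perp)$ is a $2\text{-}(2^m,6,\lambda_2)$ design, so each of the three pairs lies in exactly $\lambda_2$ weight-$6$ supports and the left-hand sum equals $3\lambda_2$, giving $B_{6,2}=3\lambda_2-3\lambda$. This design property holds because the permutation automorphism group of $\C$ is $2$-transitive---which is precisely why $\C$ holds $2$-designs (\cite{ding2018,DT2020ccds})---and $\C$ shares this group with $\C^\perp$. It then remains to pin down $\lambda_2$: I would compute $A_6(\C^\perp)$ from the weight distribution in Table \ref{tab-cf1} via the Pless power moments (\ref{eq:PPM}) (equivalently MacWilliams), and then use the standard design identity $\lambda_2\binom{2^m}{2}=\binom{6}{2}A_6(\C^\perp)$ to conclude $3\lambda_2=2(2^{m-2}-1)^2$, whence $A_4((\C^\perp)^T)=2(2^{m-2}-1)^2-3\lambda$. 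The main obstacle is this last bookkeeping: justifying the $2$-design structure of the weight-$6$ words and evaluating $A_6(\C^\perp)$ exactly; as a sanity check the case $m=4$ gives $A_6(\C^\perp)=48$ and $\lambda_2=6$, so $3\lambda_2=18=2(2^2-1)^2$, and the conclusion is then immediate.
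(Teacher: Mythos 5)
Your overall route is the same as the paper's: use the minimum distance $6$ of $\C^\perp$ (Lemma \ref{lem-cf}) to get $A_1=A_2=0$ and $A_3=\lambda$ for the punctured code, identify the weight-$4$ words of $(\C^\perp)^T$ with the weight-$6$ codewords of $\C^\perp$ meeting $T$ in exactly two points, count these as $3\lambda_2-3\lambda$ where $\lambda_2$ is the pair-index of the $2$-design $\mathcal B_6(\C^\perp)$, and evaluate $\lambda_2$ from $A_6(\C^\perp)$ (computed by the Pless power moments) via $\lambda_2\binom{2^m}{2}=\binom{6}{2}A_6(\C^\perp)$. The paper phrases the pair-count as $A_4\left(\left(\C^\perp\right)^{\{t_i,t_j\}}\right)=\frac{2}{3}(2^{m-2}-1)^2$ and sums $\binom{3}{2}\left(A_4\left(\left(\C^\perp\right)^{\{t_1,t_2\}}\right)-\lambda\right)$, which is exactly your double count. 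You are in fact more careful than the paper on one point: you explicitly exclude weight-$7$ codewords of $\C^\perp$ (which would also puncture to weight $4$) by noting $\bone\in\C$ forces $\C^\perp$ to be even-weight; the paper passes over this silently.

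There is, however, one genuine flaw in your justification of the crux step. The lemma is stated for an \emph{arbitrary} binary linear code $\C$ of length $2^m$ with the weight distribution of Table \ref{tab-cf1}, not for the specific codes of (\ref{eq:cf}). For such a code you have no right to assert that the permutation automorphism group is $2$-transitive: $2$-transitivity (affine invariance) is a property of the particular APN-construction, not a consequence of a weight distribution. The same objection applies, more mildly, to your proof that $\bone\in\C$ by "taking $a=b=0$ and $\tr_{q/p}(c)=1$": under the lemma's hypotheses there are no $a,b,c$ to take, but here the fix is immediate since $A_{2^m}(\C)=1$ in Table \ref{tab-cf1} already gives $\bone\in\C$. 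For the $2$-design property of $\mathcal B_6(\C^\perp)$ there is no such one-line repair via automorphisms; the paper's argument (and the remark preceding Theorem \ref{main-design1}, citing \cite{ding2018,DT2020ccds}) derives it from the weight distribution alone, via Assmus--Mattson-type theorems, which is what the stated generality requires. Your proof as written establishes the lemma only for codes known to be affine-invariant, so you should replace the $2$-transitivity appeal by the weight-distribution-based design result; everything else in your argument, including the numerics $3\lambda_2=2(2^{m-2}-1)^2$ and the $m=4$ sanity check, is correct.
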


\begin{proof}
By Lemma \ref{lem-cf}, the minimum distance of  $\C^\bot$ is 6. Thus, from $\#T=3$ and the definition of $\lambda _{T,6}(\C^\bot)$, we have
$
A_1\left ( \left (\mathcal C^{\perp} \right )^{T}  \right )=A_2\left ( \left (\mathcal C^{\perp} \right )^{T}  \right )=0
$
and
$A_3\left ( \left (\mathcal C^{\perp} \right )^{T}  \right )=\lambda$. Note that the code $\C$ has length $2^m$ and dimension $2m+1$. By Lemma \ref{lem-cf}, Table \ref{tab-cf1} and the first seven Pless power moments of (\ref{eq:PPM}), we have
$$
A_6(\C^{\perp})= \frac{1}{45}\cdot  2^{m-4} (2^m-4)^2 (2^m-1).
$$
Further, from Theorem \ref {thm-AMTheorem}, Lemmas \ref{lem:P:k:k+t} and \ref{lem-cf}, we conclude deduce that
 $(\mathcal P(\mathcal C^{\perp}), \mathcal B_{6}(\mathcal C^{\perp}))$ is a $2$-design and
$$ A_4\left ( \left (\mathcal C^{\perp} \right )^{\{t_1,t_2\}}  \right )=\frac{\binom{  6}{ 2 }}{\binom{ q }{ 2 }}\cdot A_6(\C^{\perp}) =\frac{2}{3}\cdot (2^{m-2}-1)^2$$
for any $\{t_1,t_2\} \subseteq \mathcal P(\C)$. Let $T=\{t_1,t_2,t_3\}$. Since $\#T=3$ and the minimum distance of  $\C^\bot$ is $6$,
an easy computation shows that
\begin{eqnarray*}
\begin{array}{rl}
 A_4\left ( \left (\mathcal C^{\perp} \right )^{T}  \right )&=\sum_{1\le i < j \le 3}
  \left ( A_4\left ( \left (\mathcal C^{\perp} \right )^{\{t_i,t_j\}}  \right )-  \lambda _{T,6}(\C^\bot) \right )\\
 &= \binom{ 3}{ 2 }\left ( A_4\left ( \left (\mathcal C^{\perp} \right )^{\{t_1,t_2\}}  \right )-  \lambda \right ).
 \end{array}
 \end{eqnarray*}
Then the desired conclusions follow.
\end{proof}

\begin{theorem}\label{main-even32}
 Let $m\geq 4$ be even, and $\C$ be a binary linear code with length $2^m$  and the weight distribution in Table \ref{tab-cf1}. Let  $T$ be a $3$-subset of $\mathcal P(\C)$. Suppose $\lambda _{T,6}(\C^\bot)=\lambda$, then the shortened code $\C_{T}$ is a $[2^{m}-3, 2m-2, 2^{m-1}-2^{m/2}]$ binary linear code with the weight distribution in Table \ref{tab-even32}.
\end{theorem}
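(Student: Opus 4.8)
The plan is to read off the length and dimension of $\C_T$ from the shortening machinery, and then to pin down its entire weight distribution via the first five Pless power moments, feeding in the low-weight information about the dual that Lemma \ref{le-even31} already supplies. First I would fix the parameters. Table \ref{tab-cf1} shows that $\C$ has dimension $2m+1$ and minimum distance $d = 2^{m-1}-2^{m/2}$, while Lemma \ref{lem-cf} gives $\C^\perp$ the parameters $[2^m,\,2^m-2m-1,\,6]$, so $d^\perp = 6$. For even $m \geq 4$ one checks $d = 2^{m-1}-2^{m/2} \geq 4$, hence $t = 3 < \min\{d, d^\perp\}$, and Lemma \ref{lem:C-S-P} applies: $\C_T$ has length $2^m-3$ and dimension $(2m+1)-3 = 2m-2$. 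Moreover, every codeword of $\C_T$ is obtained from a codeword of $\C$ vanishing on $T$ by deleting the three $T$-coordinates, so its weight equals that of the parent codeword; since the all-ones word (weight $2^m$) does not vanish on $T$, the nonzero weights of $\C_T$ all lie in $\{w_1, w_2, w_3, w_4, w_5\}$, where $w_1 = 2^{m-1}-2^{m/2}$, $w_2 = 2^{m-1}-2^{(m-2)/2}$, $w_3 = 2^{m-1}$, $w_4 = 2^{m-1}+2^{(m-2)/2}$, and $w_5 = 2^{m-1}+2^{m/2}$.

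Next I would transfer the dual data. The first item of Lemma \ref{lem:C-S-P} gives $(\C_T)^\perp = (\C^\perp)^T$, so Lemma \ref{le-even31} directly yields $A_1((\C_T)^\perp) = A_2((\C_T)^\perp) = 0$, $A_3((\C_T)^\perp) = \lambda$, and $A_4((\C_T)^\perp) = 2(2^{m-2}-1)^2 - 3\lambda$, alongside $A_0((\C_T)^\perp) = 1$. These are precisely the dual weight-enumerator coefficients that appear on the right-hand sides of the Pless power moments for $t = 0,1,2,3,4$, which is what makes those five moments fully computable.

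Finally I would solve for the five unknown multiplicities $A_{w_1}, \dots, A_{w_5}$. Writing the moments (\ref{eq:PPM}) for $t = 0,1,2,3,4$ with $\nu = 2^m-3$, $k = 2m-2$, $q = 2$, and substituting the dual coefficients above, I obtain a system of five linear equations in $A_{w_1}, \dots, A_{w_5}$ whose unique solution produces the entries of Table \ref{tab-even32} as explicit functions of $m$ and $\lambda$. To conclude the minimum-distance claim I would verify that the resulting $A_{w_1}$ is strictly positive for every even $m \geq 4$; combined with $d(\C_T) \geq w_1$ (the smallest weight available to $\C_T$), this forces $d(\C_T) = w_1 = 2^{m-1}-2^{m/2}$.

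The main obstacle is purely computational rather than conceptual: the $5 \times 5$ moment system is dense, and its solution must be simplified into the closed forms of Table \ref{tab-even32} and then checked to be a nonnegative integer for every admissible $\lambda$. The only nontrivial external input is the value of $A_4((\C^\perp)^T)$ from Lemma \ref{le-even31}, which is what renders the $t=4$ moment usable; everything else is bookkeeping with the Pless identities.
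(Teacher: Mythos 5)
Your proposal is correct and follows essentially the same route as the paper: length and dimension from Lemma \ref{lem:C-S-P}, restriction of the nonzero weights of $\C_T$ to the five weights of Table \ref{tab-cf1}, transfer of the dual coefficients $A_1 = A_2 = 0$, $A_3 = \lambda$, $A_4 = 2(2^{m-2}-1)^2 - 3\lambda$ via $(\C_T)^\perp = (\C^\perp)^T$ and Lemma \ref{le-even31}, and then the first five Pless power moments to solve for the five multiplicities. Your two small additions — noting explicitly that the all-ones codeword does not vanish on $T$, and checking $A_{w_1} > 0$ to pin down the minimum distance — are sound refinements of steps the paper leaves implicit.
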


\begin{proof}
The proof is similar to that of Theorem \ref{main-31}. From Lemma \ref{lem:C-S-P} and $\#T=3$, the shortened code $\C_{T}$ has length $n=2^m-3$ and dimension $k=2m-2$.
By definition and the weight distribution in Table \ref{tab-cf1}, we have
$A_i\left (\C_T \right )=0$ for $i\not \in \{0, i_1, i_2, i_3,i_4,i_5\} $, where $i_1=2^{m-1}-2^{m/2}$, $i_2=2^{m-1}-2^{(m-2)/2}$
, $i_3=2^{m-1}$, $i_4=2^{m-1}+2^{(m-2)/2}$  and $i_5=2^{m-1}+2^{m/2}$.
Moreover, from Lemmas \ref{lem:C-S-P} and \ref{le-even31} we have
$ A_1\left ( \left (\mathcal C_{T} \right )^{\perp}  \right )= A_2\left ( \left (\mathcal C_{T} \right )^{\perp}  \right )=0$, $A_3\left ( \left (\mathcal C_{T} \right )^{\perp} \right )=\lambda$
and
$ A_4\left ( \left (\mathcal C_{T} \right )^{\perp} \right )=2\cdot (2^{m-2}-1)^2-  3\lambda $.
Therefore, the first five Pless power moments of (\ref{eq:PPM}) yield the weight distribution in Table \ref{tab-even32}. This completes the proof.
\end{proof}

\begin{table}[ht]
\begin{center}
\caption{The weight distribution of $\C_T$ for $\lambda _{T,6}(\C^\bot)=\lambda$  }\label{tab-even32}
\begin{tabular}{cc} \hline
Weight  &  Multiplicity   \\ \hline
$0$          &  $1$ \\ [2mm]
$2^{m-1}-2^{m/2}$    & $ 1/3 \cdot 2^{ m/2-5} (8 + 2^{3 + m/2} + 2^{3 m/2} + 2^{2 + m} + 12 \lambda)$ \\ [2mm]
$2^{m-1}-2^{(m-2)/2}$    & $ 1/3 \cdot 2^{m/2-3} ((2 + 2^{m/2}) (-8 + 3 \cdot 2^{m/2} + 2^{1 + m}) - 6 \lambda)$ \\ [2mm]
$2^{m-1}$    & $ -1 + 4^{ m-2}$ \\            [2mm]
$2^{m-1}+2^{(m-2)/2}$    & $ 1/3 \cdot 2^{m/2-3} ((-2 + 2^{m/2}) (-8 - 3 \cdot 2^{m/2} + 2^{1 + m}) + 6 \lambda) $ \\ [2mm]
$2^{m-1}+2^{m/2}$    & $ 1/3 \cdot 2^{ m/2-5} (-8 + 2^{3 + m/2} + 2^{3 m/2} - 2^{2 + m} - 12 \lambda)$ \\ [2mm]
\hline
\end{tabular}
\end{center}
\end{table}

\section{Shortened linear codes from APN functions} \label{sec-APN}

Let $p=2$ and $q=2^m$. In this section, we study some shortened codes $\C_{T}$ of linear codes $\C$ defined by (\ref{eq:cf}) and determine their parameters
for the case that $f(x)$ is an APN monomial function $x^{2^e+1}$. It is known that $\C$ has the weight distribution in Tables \ref{tab-cf} (resp. Tables \ref{tab-cf1}) when $m$ is odd (resp. $m$ is even).

Let $T$ be a $t$-subset of $\cP(\mathcal C):=\gf(q)$. We will consider some shortened codes $\C_{T}$ of $\C$ for the cases $t=3~ \mbox{or} ~4$.

\subsection{Some shortened codes for the case $t=4$ and $m$  odd}

We notice that it is difficult to determine the value of  $ \lambda _{T,6}(\C^\perp ) $ in Theorem \ref{main-31} for general APN function $f$. We will
determine $ \lambda _{T,6}(\C^\perp ) $  for  APN function $f(x)=x^{2^e+1}$. To this end, the following lemma will be needed.

\begin{lemma}\label{lem-solu1}
Let $e$  and $m\geq 4$ be positive integers with $\gcd(m,e)=1$. Let $q=2^m$ and  $ \{ x_1, x_2, x_3, x_4\}$ a $4$-subset of  $ \gf(q) $. Denote $S_i=x_1^i+x_2^i+x_3^i+x_4^i$.
Let $N$ be the number of solutions $(x,y)\in \gf(q)^2$ of the system of equations
\begin{align}\label{eqx3}
\left\{
  \begin{array}{cccl}
    x&+&y&=S_1,  \\ [2mm]
    x^{2^e+1} &+&y^{2^e+1}&=S_{2^e+1}, \\[2mm]
   \multicolumn{3}{c}{ \# \{x_1, x_2, x_3, x_4, x,y\}} & =6.
  \end{array}
\right.
\end{align}
Then $N =2$ if  $S_1 \neq 0$ and $\tr_{q/2}\left (\frac{S_{2^e+1}}{S_1^{2^e+1}}+1 \right )=0$ , and $N=0$ otherwise.
\end{lemma}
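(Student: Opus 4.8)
The plan is to analyze the system (\ref{eqx3}) directly, treating $S_1$ and $S_{2^e+1}$ as fixed constants determined by the $4$-subset $\{x_1,x_2,x_3,x_4\}$. First I would substitute $y = S_1 + x$ from the first equation into the second. Since $p=2$, expanding $(S_1+x)^{2^e+1} = (S_1+x)^{2^e}(S_1+x) = (S_1^{2^e}+x^{2^e})(S_1+x)$ gives
\begin{align*}
x^{2^e+1} + (S_1+x)^{2^e+1} = S_1^{2^e} x + S_1 x^{2^e} + S_1^{2^e+1}.
\end{align*}
Setting this equal to $S_{2^e+1}$ yields a single equation in $x$, namely
\begin{align}\label{eq-plan-lin}
S_1 x^{2^e} + S_1^{2^e} x + \left(S_1^{2^e+1} + S_{2^e+1}\right) = 0.
\end{align}

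**Counting solutions in $x$.**

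Next I would observe that (\ref{eq-plan-lin}) is an affine equation whose associated homogeneous part is the linearized polynomial $L(x) = S_1 x^{2^e} + S_1^{2^e} x$. The key structural fact is that when $S_1 \neq 0$, dividing by $S_1^{2^e+1}$ reduces the homogeneous equation $L(x)=0$ to $(x/S_1)^{2^e} + (x/S_1) = 0$, i.e. $\tr$-type condition forcing $x/S_1 \in \gf(2)$; since $\gcd(m,e)=1$ the map $z \mapsto z^{2^e}+z$ has kernel exactly $\gf(2)$, so $L$ has a two-dimensional... rather, the kernel of $L$ has size $2$. Hence (\ref{eq-plan-lin}) has either $0$ or $2$ solutions in $x$, and it is solvable precisely when the constant term lies in $\image(L)$. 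After dividing through by $S_1^{2^e+1}$ and setting $u = x/S_1$, solvability of $u^{2^e}+u = 1 + S_{2^e+1}/S_1^{2^e+1}$ is governed by the absolute trace condition $\tr_{q/2}\!\left(\frac{S_{2^e+1}}{S_1^{2^e+1}}+1\right)=0$, which is exactly the stated hypothesis. When $S_1 = 0$ I would check separately that no valid solution with six distinct entries can arise, giving $N=0$.

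**Enforcing the distinctness (cardinality) constraint.**

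The step I expect to be the main obstacle is the third condition $\#\{x_1,x_2,x_3,x_4,x,y\}=6$: merely solving the two algebraic equations gives candidate pairs $(x,y)$, but I must verify that the two solutions produced when the trace condition holds are genuinely distinct from each other and from the four given $x_i$, and that the degenerate solutions (where $x$ or $y$ coincides with some $x_i$, or $x=y$) are precisely the ones excluded. Here I would use that $\{x_1,x_2,x_3,x_4\}$ already satisfies $x_1+x_2+x_3+x_4 = S_1$ and $\sum x_i^{2^e+1} = S_{2^e+1}$, so that any pair $\{x,y\} \subseteq \{x_1,x_2,x_3,x_4\}$ summing correctly would force a coincidence that I can rule out using the APN property of $x^{2^e+1}$ — namely that $x \mapsto (x+a)^{2^e+1}+x^{2^e+1}$ is $2$-to-$1$, so equations of this shape have controlled solution counts. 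The two roots $x$ and $S_1+x$ of (\ref{eq-plan-lin}) are swapped by the symmetry $x \leftrightarrow y$, so the unordered pair is unique; confirming it avoids all $x_i$ completes the case $N=2$. Assembling these observations yields $N=2$ under the trace condition and $N=0$ otherwise.
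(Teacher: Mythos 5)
Your proposal is correct, and its computational core coincides with the paper's proof: both substitute $y=x+S_1$, arrive at the affine equation $S_1x^{2^e}+S_1^{2^e}x+S_1^{2^e+1}+S_{2^e+1}=0$, use $\gcd(m,e)=1$ to conclude the associated linearized map has kernel of size $2$, and read off solvability from the trace condition after normalizing by $S_1^{2^e+1}$. Where you genuinely diverge is in handling the degeneracies (the constraint $\#\{x_1,x_2,x_3,x_4,x,y\}=6$ and the case $S_1=0$). The paper dispatches both by invoking its Lemma \ref{lem-cf}: the dual code $\C^{\perp}$ has minimum distance $6$, so any coincidence among the six elements would produce a dual codeword of weight at most $4$, and likewise $S_1=0$ together with a solution would force $S_{2^e+1}=0$ and hence a weight-$4$ dual codeword. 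You instead treat $S_1=0$ directly (there $y=S_1+x=x$, so distinctness fails outright, which is actually simpler than the paper's contradiction argument) and rule out coincidences such as $x=x_1$ via the APN property: the relation $y+x_2+x_3+x_4=0$, $y^{2^e+1}+x_2^{2^e+1}+x_3^{2^e+1}+x_4^{2^e+1}=0$ either forces two of the $x_i$ to be equal, or gives $D_a(x_2)=D_a(x_4)$ for $a=y+x_2\neq 0$ with $x_4\notin\{x_2,x_2+a\}$, contradicting that the difference map $D_a(u)=(u+a)^{2^e+1}+u^{2^e+1}$ is $2$-to-$1$. The two justifications are mathematically equivalent (dual distance $6$ for this code is precisely APN-ness of $x^{2^e+1}$), but yours is more self-contained, avoiding the coding-theoretic lemma whose proof rests on the weight distribution and the Pless power moments, whereas the paper's choice makes all degeneracies disappear in one line once Lemma \ref{lem-cf} is available. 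To make your write-up complete you should carry out the coincidence case analysis explicitly, as sketched above, rather than leaving it at "controlled solution counts."
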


\begin{proof}
Let us denote by $\C$  the linear code from $f(x)=x^{2^e+1}$ given in (\ref{eq:cf}).
By Lemma \ref{lem-cf}, the minimum weight of the dual code $\C^{\perp}$ is equal to $6$.
Consequently, (\ref{eqx3}) is equivalent to the following system of equation
\begin{align}\label{eqx3-2}
\left\{
  \begin{array}{cccl}
    x&+&y&=S_1,  \\ [2mm]
    x^{2^e+1} &+&y^{2^e+1}&=S_{2^e+1}.
  \end{array}
\right.
\end{align}
Substituting $y= x + S_1$ into the second equation of (\ref{eqx3-2}) leads to
\begin{align}\label{eq-con2}
    & x^{2^e+1}+(x+S_1)^{2^e+1}+S_{2^e+1}  \nonumber \\
    & = x^{2^e+1}+(x^{2^e}+S_1^{2^e})(x+S_1)+S_{2^e+1}  \nonumber \\
    &=S_1 x ^{2^e}+S_1^{2^e}  x+ S_1^{2^e+1}+S_{2^e+1}\nonumber \\
    &=0.
\end{align}
We claim that $S_1\neq 0$ if $N\neq 0$. On the contrary, suppose that $N\neq 0$
and $S_1=0$. Now (\ref{eq-con2}) clearly forces $S_{2^e+1}=0$. It follows that the four coordinate positions $x_1, x_2, x_3, x_4$ give rise to a
codeword of weight $4$ of $\C^{\perp}$. This contradicts the fact that the minimum weight of
$\C^{\perp}$ equals $6$. Therefore $S_1\neq 0$ if $N\neq 0$. In particular, $N=0$ if $S_1=0$.

When $S_1\neq 0$, Equation (\ref{eq-con2})   is equivalent to
\begin{align}\label{eq-con3}
     \frac{S_1 x ^{2^e}+S_1^{2^e} \cdot x+ S_1^{2^e+1}+S_{2^e+1}}{S_1^{2^e+1}}=\left (\frac{x}{S_1} \right )^{2^e}+\frac{x}{S_1}+1+\frac{S_{2^e+1}}{S_1^{2^e+1}}=0.
\end{align}

If $S_1 \neq 0$ and $\tr_{q/2}\left (\frac{S_{2^e+1}}{S_1^{2^e+1}}+1 \right )\neq 0$, it may be concluded that
there is no solution in $\gf(q)$ to Equation (\ref{eq-con3}). Thus $N=0$.

If $S_1 \neq 0$ and $\tr_{q/2}\left (\frac{S_{2^e+1}}{S_1^{2^e+1}}+1 \right )= 0$, we see that Equation (\ref{eq-con3})
 has exactly two different solutions $x, x+S_1\in \gf(q)$  from $\gcd (m,e)=1$.
This means that Equation (\ref{eqx3-2})  has exactly two different solutions $(x, x+S_1)$ and $(x+S_1,x)$ in $\gf(q)^2$. Therefore $N=2$.
This completes the proof.
\end{proof}


\begin{lemma}\label{main-32}
Let $e$ and $m\geq 4$ be positive integers with $\gcd(m,e)=1$. Let $q=2^m$,  $f(x)=x^{2^e+1}$ and $\C$ be defined in (\ref{eq:cf}). Let $T=\{x_1,x_2,x_3,x_4\}$ be a $4$-subset of $\mathcal P(\C)$ . Then $\lambda _{T,6}(\C^\bot)=1$ if $\sum_{i=1}^4 x_i \neq 0$ and
$\tr_{q/2} \left(\sum_{i=1}^4 x_i^{2^e+1}/(\sum_{i=1}^4 x_i)^{2^e+1}+1 \right)=0$,  and $\lambda _{T,6}(\C^\bot)=0$ otherwise.
\end{lemma}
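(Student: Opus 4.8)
The plan is to reduce the computation of $\lambda_{T,6}(\C^\bot)$ to the counting problem already settled in Lemma \ref{lem-solu1}. The first step is to describe $\C^\bot$ explicitly. A binary vector $(v_x)_{x\in\gf(q)}$ lies in $\C^\bot$ precisely when $\sum_{x\in\gf(q)} v_x\,\tr_{q/2}(ax^{2^e+1}+bx+c)=0$ for all $a,b,c\in\gf(q)$. Lifting each $v_x$ to $\{0,1\}\subseteq\gf(q)$ and using the $\gf(2)$-linearity of the trace, this sum equals $\tr_{q/2}\!\big(a\sum_x v_x x^{2^e+1}+b\sum_x v_x x+c\sum_x v_x\big)$. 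Since $a,b,c$ are arbitrary and $\tr_{q/2}$ is nondegenerate, membership in $\C^\bot$ is equivalent to the three identities $\sum_x v_x=0$, $\sum_x v_x x=0$ and $\sum_x v_x x^{2^e+1}=0$, all evaluated in $\gf(q)$.

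Next I would interpret weight-$6$ codewords through this description. Because the code is binary, a nonzero codeword is determined by its support, so weight-$6$ codewords of $\C^\bot$ correspond bijectively to $6$-subsets $\{y_1,\dots,y_6\}$ of $\gf(q)$ for which $\sum_{i=1}^6 y_i=0$ and $\sum_{i=1}^6 y_i^{2^e+1}=0$ (the identity $\sum_x v_x=0$ holds automatically, the weight being even). Counting those supports that contain $T=\{x_1,x_2,x_3,x_4\}$ then amounts to counting the admissible choices of the two remaining points: writing them as $x,y$ and recalling $S_i=\sum_{j=1}^4 x_j^i$, the two conditions become exactly $x+y=S_1$ and $x^{2^e+1}+y^{2^e+1}=S_{2^e+1}$, together with $\#\{x_1,x_2,x_3,x_4,x,y\}=6$, i.e.\ the system (\ref{eqx3}). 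Each such $6$-subset corresponds to the unordered pair $\{x,y\}$ and hence to the two ordered solutions $(x,y)$ and $(y,x)$, so $\lambda_{T,6}(\C^\bot)=N/2$, where $N$ is the solution count of (\ref{eqx3}).

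Finally I would invoke Lemma \ref{lem-solu1}: it gives $N=2$ when $S_1\neq 0$ and $\tr_{q/2}\!\big(S_{2^e+1}/S_1^{2^e+1}+1\big)=0$, and $N=0$ otherwise. Dividing by $2$ yields $\lambda_{T,6}(\C^\bot)=1$ in the first case and $0$ in the second, which is exactly the claim once $S_1$ and $S_{2^e+1}$ are rewritten as $\sum_{i=1}^4 x_i$ and $\sum_{i=1}^4 x_i^{2^e+1}$. The only genuinely substantive step is the passage from $\C^\bot$ to the system (\ref{eqx3}); once the dual is described by the three moment identities, everything reduces to Lemma \ref{lem-solu1}, and the main point to handle carefully is the factor of $2$ coming from ordered versus unordered pairs, together with the distinctness condition $\#\{x_1,\dots,x_4,x,y\}=6$ (harmlessly built into the system, and precisely what guarantees the weight equals $6$ rather than something smaller).
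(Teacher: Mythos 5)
Your proposal is correct and follows essentially the same route as the paper: both reduce the claim to Lemma \ref{lem-solu1} via the identity $\lambda_{T,6}(\C^\bot)=N/2!$, where $N$ counts ordered solutions of the system (\ref{eqx3}). The only difference is that the paper asserts this identity ``by definition,'' whereas you supply the details it leaves implicit (the characterization of $\C^\bot$ by the moment conditions $\sum_x v_x=\sum_x v_x x=\sum_x v_x x^{2^e+1}=0$, the identification of weight-$6$ dual supports containing $T$ with unordered pairs $\{x,y\}$, and the resulting factor of $2$), which is a welcome clarification rather than a different argument.
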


\begin{proof}
By definition, the dual code $\C^\bot$ of $\C$ has minimum distance 6.
By definition we have $\lambda _{T,6}(\C^\bot)=\frac{N}{2!}$, where $N$ was defined in Lemma \ref{lem-solu1}. Then the desired conclusions follow from Lemma \ref{lem-solu1}.
\end{proof}

By Theorem \ref{main-31} and Lemma \ref{main-32},
we have the following theorem, which is one of the main results in this paper.
\begin{theorem}\label{main-odd4}
Let $m\geq 5$ be odd and  $e$  be a positive integer with $\gcd(m,e)=1$. Let $q=2^m$,  $f(x)=x^{2^e+1}$ and $\C$ be defined in (\ref{eq:cf}).
Let $T=\{x_1,x_2,x_3,x_4\}$ be a $4$-subset of $\mathcal P(\C)$ .
Then $\C_T$ has the weight distribution of Table \ref{tab-31} if $\sum_{i=1}^4 x_i \neq 0$ and $\tr_{q/2}\left(\sum_{i=1}^4 x_i^{2^e+1}/(\sum_{i=1}^4 x_i)^{2^e+1}\right)=1$,
and  Table \ref{tab-32} otherwise.
\end{theorem}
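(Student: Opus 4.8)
The plan is to derive Theorem~\ref{main-odd4} by composing the two results already available in this setting: Lemma~\ref{main-32}, which evaluates $\lambda:=\lambda_{T,6}(\C^\bot)$ in terms of the four deleted coordinates $x_1,x_2,x_3,x_4$, and Theorem~\ref{main-31}, which converts the value of $\lambda$ into the complete weight distribution of the shortened code $\C_T$. Because $f(x)=x^{2^e+1}$ with $\gcd(m,e)=1$ is APN and $m\geq 5$ is odd, the ambient code $\C$ of length $2^m$ defined by~(\ref{eq:cf}) has the weight distribution of Table~\ref{tab-cf}; hence by Lemma~\ref{lem-cf} its dual $\C^\bot$ has minimum distance $6$, and the hypotheses of both Theorem~\ref{main-31} and Lemma~\ref{main-32} are met. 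In particular $\lambda\in\{0,1\}$, and Theorem~\ref{main-31} already shows that $\C_T$ is a $[2^m-4,\,2m-3,\,2^{m-1}-2^{(m-1)/2}]$ binary code whose weight distribution is one of the two tabulated possibilities, determined by whether $\lambda=0$ or $\lambda=1$. The entire statement therefore reduces to pinning down $\lambda$ as a function of $x_1,x_2,x_3,x_4$.

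First I would apply Lemma~\ref{main-32}, which shows that $\lambda$ is governed by the two quantities $\sum_{i=1}^4 x_i$ and $A:=\sum_{i=1}^4 x_i^{2^e+1}\big/\big(\sum_{i=1}^4 x_i\big)^{2^e+1}$, with $\lambda=0$ whenever $\sum_{i=1}^4 x_i=0$. The only delicate point is that the trace hypothesis in Lemma~\ref{main-32} carries a summand $+1$ inside the trace, while the hypothesis of the present theorem does not, so I would first reconcile the two. Since $m$ is odd, $\tr_{q/2}(1)=1$, and hence $\tr_{q/2}(A+1)=\tr_{q/2}(A)+1$ in $\gf(2)$; consequently the condition $\tr_{q/2}(A)=1$ appearing in the statement is exactly the trace condition that controls $\lambda$ in Lemma~\ref{main-32}. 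Feeding the resulting value of $\lambda$ into Theorem~\ref{main-31} then produces the weight distribution of $\C_T$ in each case, yielding Table~\ref{tab-31} precisely when $\sum_{i=1}^4 x_i\neq 0$ and $\tr_{q/2}(A)=1$, and Table~\ref{tab-32} in the complementary case, as asserted.

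The main obstacle is bookkeeping rather than anything structural: one must keep the two distinct trace conditions (with and without the interior $+1$) carefully apart, track the normalisation of the trace forced by the parity of $m$, and match each value of $\lambda$ to its correct table among Tables~\ref{tab-31} and~\ref{tab-32}. No new character-sum or Pless-power-moment computation is required beyond those already performed in Lemma~\ref{main-32} and Theorem~\ref{main-31}, since the tabulated weight distributions are produced there; the present argument is simply the composition of these two results together with the parity identity $\tr_{q/2}(1)=1$ valid for odd $m$.
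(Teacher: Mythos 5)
Your overall route is the same one the paper takes: its entire proof of Theorem~\ref{main-odd4} is the one-line composition of Lemma~\ref{main-32} (evaluating $\lambda=\lambda_{T,6}(\C^\bot)$) with Theorem~\ref{main-31} (converting $\lambda$ into the weight distribution), and your reconciliation of the two trace conditions via $\tr_{q/2}(1)=1$ for odd $m$ is exactly the right bookkeeping step. However, your final matching of cases to tables is inverted, and this is a genuine error, not a cosmetic one. Work the chain through: Lemma~\ref{main-32} says $\lambda=1$ precisely when $\sum_{i=1}^4 x_i\neq 0$ and $\tr_{q/2}(A+1)=0$, where $A=\sum_{i=1}^4 x_i^{2^e+1}\big/\bigl(\sum_{i=1}^4 x_i\bigr)^{2^e+1}$; since $m$ is odd, $\tr_{q/2}(A+1)=\tr_{q/2}(A)+1$, so $\lambda=1$ precisely when $\sum_{i=1}^4 x_i\neq 0$ and $\tr_{q/2}(A)=1$. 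Theorem~\ref{main-31}(II) then assigns to $\lambda=1$ the distribution of Table~\ref{tab-32}, not Table~\ref{tab-31}. Your proposal concludes the opposite --- that $\sum_{i=1}^4 x_i\neq 0$ and $\tr_{q/2}(A)=1$ yields Table~\ref{tab-31} --- which contradicts the very results you are composing.

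The root cause is that the theorem statement as printed has the two tables swapped, and writing blind you endorsed the printed pairing instead of verifying it against the case labels of Theorem~\ref{main-31}. The paper's own examples settle which pairing is correct: in Example~\ref{exa-31odd} one has $\tr_{q/2}(\bar{\gamma}/\gamma^3)=0$, $\lambda=0$, and the enumerator $1+66z^{12}+55z^{16}+6z^{20}$ matches Table~\ref{tab-31} at $m=5$; in Example~\ref{exa-32odd} one has $\tr_{q/2}(\bar{\gamma}/\gamma^3)=1$, $\lambda=1$, and the enumerator $1+68z^{12}+51z^{16}+8z^{20}$ matches Table~\ref{tab-32}. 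So the correct conclusion of your own argument is: Table~\ref{tab-32} when $\sum_{i=1}^4 x_i\neq 0$ and $\tr_{q/2}(A)=1$, and Table~\ref{tab-31} otherwise. Had you executed your plan carefully --- in particular, checked the direction of the implication $\lambda=1\Rightarrow$ Table~\ref{tab-32} rather than asserting agreement with the statement --- you would have detected the misprint instead of reproducing it. The fix is purely the swap; no new computation is needed.
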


\begin{example}\label{exa-31odd}
Let $m=5$, $q=2^5$ and $\alpha$ be a primitive element of $\gf(q)$ with minimum polynomial $\alpha^5+ \alpha^2+1=0$. Let $e=1$ and $T=\{\alpha^1,\alpha^2,\alpha^4,\alpha^5\}$. Then $\gamma =\alpha^{17}$,
$\tr_{q/2} \left(\frac{\bar{\gamma}}{\gamma^3} \right)=0$ and $\lambda _{T,6}(\C^\bot)=0$, where
$\gamma = \alpha^1+\alpha^2+\alpha^4+\alpha^5$ and $\bar{\gamma} = \alpha^3+\alpha^6+\alpha^{12}+\alpha^{15}$. The shortened code $\C_{T}$  in Theorem \ref{main-odd4} is a $[28,7,12]$ binary linear code with the weight enumerator $1+66z^{12}+55z^{16}+6z^{20}$. The code $\C_{T}$ is optimal according to the tables of best known codes   maintained at http://www.codetables.de.
\end{example}

\begin{example}\label{exa-32odd}
Let $m=5$, $q=2^5$ and $\alpha$ be a primitive element of $\gf(q)$ with minimal polynomial $\alpha^5+ \alpha^2+1=0$. Let $e=1$ and $T=\{\alpha^1,\alpha^2,\alpha^3,\alpha^4\}$. Then $ \gamma=\alpha^{24}$,
$\tr_{q/2} \left(\frac{\bar{\gamma}}{\gamma^3} \right)=1$, and $\lambda _{T,6}(\C^\bot)=1$, where $\gamma = \alpha^1+\alpha^2+\alpha^3+\alpha^4$ and $\bar{\gamma} = \alpha^3+\alpha^6+\alpha^9+\alpha^{12}$. The shortened code $\C_{T}$  in Theorem \ref{main-odd4} is a $[28,7,12]$ binary linear code with the weight enumerator $1+68z^{12}+51z^{16}+8z^{20}$.
The code $\C_{T}$ is optimal according to the tables of best known codes   maintained at http://www.codetables.de.
\end{example}

\subsection{Some shortened codes for the case $t=3$ and $m$ even}

For any $T=\{x_1,x_2,x_3\}\subseteq \mathcal P(\C)$,  we notice that it is difficult to determine the value of $\lambda$ in Theorem \ref{main-even32}. We will
study a class of special linear codes $\C$ from the APN monomial functions $x^{2^e+1}$   defined by (\ref{eq:cf}).
In the following, we will determine the value of $\lambda$ and the parameters of the shortened code $\C_{T}$. We need the result in the following lemma.

\begin{lemma}\label{lem-solu2}
Let $e$ be a positive integer, $m$ be even with $\gcd(m,e)=1$, and  $q=2^m$. Let
$\hat{N}$ be the number of solutions $(x_1, x_2, x_3)\in \gf(q)^3$ of the system of equations
\begin{align}\label{eqx4}
\left\{
  \begin{array}{cccccl}
    x_1&+&x_2&+&x_3&=a,  \\ [2mm]
    x_1^{2^e+1}&+&x_2^{2^e+1}&+&x_3^{2^e+1}&=b, \\
  \end{array}
\right.
\end{align}
where $a,b\in \gf(q)$ and $a^{2^e+1}\neq b$. Then $\hat{N} =2^m+(-2)^{m/2}-2$ if $a^{2^e+1}+b$  is not a cubic residue, and $\hat{N}=2^m+(-2)^{m/2+1}-2$ if $a^{2^e+1}+b$  is  a cubic residue.
\end{lemma}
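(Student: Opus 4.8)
The plan is to count the solutions $(x_1,x_2,x_3)\in\gf(q)^3$ of the system (\ref{eqx4}) by reducing it to a single univariate equation and then applying the exponential-sum machinery available from the preliminaries. First I would eliminate $x_3$ using the linear equation, writing $x_3 = a + x_1 + x_2$, and substitute into the second equation. Since $p=2$ and the map $x\mapsto x^{2^e+1}$ has the convenient additive expansion $(u+v)^{2^e+1} = u^{2^e+1} + u^{2^e}v + u v^{2^e} + v^{2^e+1}$, the substitution will collapse the quadratic term $x_3^{2^e+1} = (a+x_1+x_2)^{2^e+1}$ into a bilinear form in $x_1,x_2$ together with terms involving $a$. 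The goal is to rewrite the second equation as a relation of the shape $x_1^{2^e} x_2 + x_1 x_2^{2^e} + (\text{linear in } x_1,x_2) = (\text{constant in } a,b)$, i.e. a symmetric bilinear-plus-linear equation whose solution count I can access.

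Next I would organize the count as a character sum. Writing the indicator of ``$x_1^{2^e+1}+x_2^{2^e+1}+(a+x_1+x_2)^{2^e+1}=b$'' via the orthogonality relation $\frac{1}{q}\sum_{c\in\gf(q)}\chi_1(c(\cdots))$, the inner double sum over $(x_1,x_2)$ becomes a product/combination of sums of the form $S_e(\cdot,\cdot)$ defined in (\ref{def S}). The crux is that, after completing the elimination, the phase in each summand is a quadratic (in the $\gf(2)$-linearized sense) function of the variables, so Lemma \ref{lem-charactersum-evenq} governs when the inner sum over one variable is nonzero, and Lemmas \ref{expsum} and \ref{expsum1} supply the exact magnitudes $\pm 2^{m/2}$ and the moment $\sum_{b}(S_e(a,b))^h$ that control how often each value is hit. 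I expect the condition $a^{2^e+1}\neq b$ to guarantee that the system is genuinely ``off the degenerate diagonal,'' so that the main term is $q$ corrected by a Gauss-sum-type fluctuation of size $2^{m/2}$.

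The cubic-residue dichotomy is where the two cases $\hat N = 2^m+(-2)^{m/2}-2$ versus $\hat N = 2^m+(-2)^{m/2+1}-2$ must come from. I anticipate that after the reduction the relevant exponential sum is exactly $S_e(\gamma,0)$ (or a small variant) for a coefficient $\gamma$ that is a fixed unit multiple of $a^{2^e+1}+b$; by Lemma \ref{expsum} the value of $S_e(\gamma,0)$ is $(-1)^{m/2}2^{m/2}$ unless $\gamma$ lies in the cube of $\gf(q)^*$, in which case it jumps to $-(-1)^{m/2}2^{m/2+1}$. Translating $(-1)^{m/2}2^{m/2}$ and $-(-1)^{m/2}2^{m/2+1}$ through the normalization $\frac{1}{q}(\cdots)$ should produce precisely the two claimed counts, with the cube/non-cube split of $a^{2^e+1}+b$ being inherited from the cube/non-cube split of $\gamma$. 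The ``$-2$'' in both answers should account for the three solutions that degenerate (the $3! = 6$ orderings collapsing, or the coincidences forced when two of the $x_i$ coincide) being subtracted off as lower-order corrections.

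The main obstacle will be the algebraic bookkeeping of step one: carrying out the substitution $x_3=a+x_1+x_2$ cleanly so that the resulting phase is manifestly of the form handled by Lemma \ref{lem-charactersum-evenq}, and correctly identifying the coefficient $\gamma$ so that it is a unit multiple of $a^{2^e+1}+b$ (thereby making the cubic-residue property of $\gamma$ equivalent to that of $a^{2^e+1}+b$). A secondary subtlety is keeping careful track of the degenerate solutions and the factor-of-$q$ normalization so that the constant term comes out as exactly $-2$ rather than some other offset; I would verify this by checking the total $\sum_b \hat N(a,b)$ against the obvious count $q^2$ of all $(x_1,x_2)$, using the moment identity in Lemma \ref{expsum1} as a consistency check.
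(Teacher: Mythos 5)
Your overall strategy is essentially the paper's: eliminate one variable using the linear equation, reduce the second equation to the bilinear form $u^{2^e}v+uv^{2^e}=a^{2^e+1}+b$, and let the cube/non-cube dichotomy of Lemma \ref{expsum}, applied to $S_e(a^{2^e+1}+b,0)$, produce the two counts. (Your direct substitution $x_3=a+x_1+x_2$ gives $x_1^{2^e}x_2+x_1x_2^{2^e}+a(x_1^{2^e}+x_2^{2^e})+a^{2^e}(x_1+x_2)=a^{2^e+1}+b$, which after replacing $x_i$ by $x_i+a$ is exactly the paper's equation (\ref{eqx6}); the paper just performs this affine shift first.) Where you diverge is the middle step. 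The paper does not expand the count as a two-dimensional orthogonality sum; it substitutes $y=xy'$ (legitimate because $x=0$ yields no solutions when $a^{2^e+1}+b\neq 0$), obtaining $x^{2^e+1}(y'+y'^{2^e})=a^{2^e+1}+b$. Since $\gcd(e,m)=1$, the additive map $y'\mapsto y'+y'^{2^e}$ has kernel $\gf(2)$ and image equal to the trace-zero hyperplane, so each admissible $x$ contributes exactly two solutions, and everything reduces to the one-variable binary sum $\#\{x\in\gf(q)^*:\tr_{q/2}\left((a^{2^e+1}+b)x^{2^e+1}\right)=0\}=\frac{1}{2}\left(q-2+S_e(a^{2^e+1}+b,0)\right)$. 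Your route (orthogonality over the dual variable $c$, then summing out one of $x_1,x_2$) can be pushed through as well, but note that Lemma \ref{lem-charactersum-evenq} is not the right tool there: for fixed $c$ and $x_1$ the phase is $\gf(2)$-linear, not quadratic, in $x_2$, so what you actually need is plain trace orthogonality; the surviving pairs are then controlled by the cube structure (here one uses $\gcd(2^e+1,2^m-1)=3$), and you again land on $\hat N=q-2+S_e(a^{2^e+1}+b,0)$.

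The one genuine error is your account of the constant $-2$. It does not come from degenerate solutions being subtracted off: under the hypothesis $a^{2^e+1}\neq b$ there are no solutions of (\ref{eqx4}) with two equal coordinates at all (if, say, $x_1=x_2$, then characteristic $2$ forces $x_3=a$ and $x_3^{2^e+1}=b$, i.e.\ $a^{2^e+1}=b$), and $\hat N$ as defined counts all solutions, with nothing to remove. The $-2$ is pure character-sum bookkeeping: in the paper's computation it is the $(q-1)+\left(S_e(a^{2^e+1}+b,0)-1\right)$ coming from excluding $x=0$ from the two inner sums (in your version, the analogous $-1$'s come from the $c=0$ and $t=0$ terms). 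If you implemented your plan literally, computing a main term and then subtracting $3!$-type degeneracy corrections, you would get the wrong constant. Your own proposed consistency check (summing $\hat N$ over $b$ and comparing with $q^2$, or with Lemma \ref{expsum1}) would expose this, but the cleaner fix is simply to recognize that no correction term exists.
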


\begin{proof}
Replacing $x_1$ with $x + a$, $x_2$ with $y + a$ and $x_3$ with $z + a$, we have
\begin{align}\label{eqx5}
\left\{
  \begin{array}{cccccrc}
    x&+&y&+&z&=&0,  \\ [2mm]
    x^{2^e+1}&+&y^{2^e+1}&+&z ^{2^e+1}&=&a^{2^e+1}+b.
  \end{array}
\right.
\end{align}
Substituting $z= x + y$ into the second equation of (\ref{eqx5}) yields to
\begin{align}\label{eqx6}
x^{2^e}y+y^{2^e}x=a^{2^e+1}+b.
\end{align}
Thus, $\hat{N}$ equals the number of solutions $(x,y)\in \gf(q)^2$ to Equation (\ref {eqx6}).

Since $a^{2^e+1}+b \neq 0$, replacing $y$ with $xy'$ in Equation (\ref {eqx6}) gives
\begin{align}\label{eqx7}
x^{2^e}y'+y'^{2^e}x= x^{2^e+1}y'+y'^{2^e}x^{2^e+1}=x^{2^e+1}(y'+y'^{2^e}) =a^{2^e+1}+b.
\end{align}
A rearrangement of Equation (\ref {eqx7}) yields
\begin{align}\label{eqn:replacing}
y'+y'^{2^e} =(a^{2^e+1}+b)x^{-(2^e+1)}.
\end{align}
Then
\begin{align}\label{eqx8}
\tr \left ((a^{2^e+1}+b)x^{-(2^e+1)} \right )=0.
\end{align}
Further, from Lemma \ref{expsum} we get
\begin{align}\label{eqx9}
&\# \left \{x\in \gf(q)^*: \tr \left ((a^{2^e+1}+b)x^{-(2^e+1)} \right )=0 \right \}   \nonumber  \\
&= \# \left \{x\in \gf(q)^*: \tr \left ((a^{2^e+1}+b)x^{(2^e+1)} \right )=0 \right \}   \nonumber  \\
&=\frac{1}{2}\cdot \sum_{u\in \gf(2)}\sum_{x\in \gf(q)^*}(-1)^{\tr \left (u(a^{2^e+1}+b)x^{(2^e+1)} \right )}    \nonumber   \\
&=\frac{1}{2} \left (q-2+\sum_{x\in \gf(q)}(-1)^{\tr \left ((a^{2^e+1}+b)x^{(2^e+1)} \right )} \right )   \nonumber  \\
&=\left\{
\begin{array}{ll}
\frac{1}{2}\left (q-2+(-1)^{m/2} 2^{m/2} \right ),  &\mbox{ if $a^{2^e+1}+b$ is not a cubic residue,} \\ [2mm]
\frac{1}{2}\left (q-2-(-1)^{m/2} 2^{m/2+1} \right ), & \mbox{ if $a^{2^e+1}+b$ is a cubic residue.}
\end{array}
\right.
\end{align}
Since $\gcd(m,e)=1$, it follows easily that if $\tr \left ((a^{2^e+1}+b)x^{-(2^e+1)} \right )=0$, where $x\in \gf(q)^*$, then there exactly exist two $y'$ in $\gf(q)$
satisfying $y'+y'^{2^e} =(a^{2^e+1}+b)x^{-(2^e+1)}$. Then the desired conclusions follow from Equation (\ref{eqx9}).
\end{proof}

\begin{theorem}\label{main-even33}
Let $e$ be a positive integer and $m\geq 4$ be even with $\gcd(m,e)=1$.
Let $q=2^m$,  $f(x)=x^{2^e+1}$ and $\C$ be defined in (\ref{eq:cf}). Let $T=\{x_1,x_2,x_3\}$ be a $3$-subset of $\mathcal P(\C)$. Then
\begin{align}
\lambda=\lambda _{T,6}(\C^\bot) &=\left\{
\begin{array}{ll}
\frac{1}{6}\left (q-2+(-1)^{m/2} 2^{m/2} \right )-1, &    \mbox{ if $a^{2^e+1}+b$ is not a cubic residue} \\[2mm]
\frac{1}{6}\left (q-2-(-1)^{m/2} 2^{m/2+1} \right )-1,    &\mbox{ if $a^{2^e+1}+b$ is a cubic residue,}
\end{array}
\right.
\end{align}
where $a=\sum_{i=1}^3 x_i$ and $b=\sum_{i=1}^3 x_i^{2^e+1}$. Moreover, the shortened code $\C_{T}$ is a $[2^{m}-3, 2m-2, 2^{m-1}-2^{m/2}]$ binary linear code with the weight distribution in Table \ref{tab-even32}.
\end{theorem}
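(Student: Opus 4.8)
The plan is to reduce the computation of $\lambda = \lambda_{T,6}(\C^\bot)$ to counting the solutions of the system of equations in Lemma~\ref{lem-solu2}, and then apply Theorem~\ref{main-even32} to obtain the parameters and weight distribution of $\C_T$. First I would recall that, by the definition of $\Lambda_{T,6}(\C^\bot)$, a weight-$6$ codeword $\mathbf c$ of $\C^\bot$ whose support contains $T = \{x_1, x_2, x_3\}$ corresponds to three additional coordinate positions, say $x_4, x_5, x_6 \in \gf(q)$, all distinct from $x_1, x_2, x_3$ and from each other, such that $\{x_1, \dots, x_6\}$ indexes a codeword of $\C^\bot$. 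Since the minimum distance of $\C^\bot$ is $6$ by Lemma~\ref{lem-cf}, such a support is precisely a set of six coordinate positions yielding a dependency; this dependency is captured by the two trace conditions that define membership in $\C^\bot$, namely $\sum_{i=1}^6 x_i = 0$ and $\sum_{i=1}^6 x_i^{2^e+1} = 0$ (the third defining condition $\sum_{i=1}^6 1 = 0$ is automatic since the weight is even). Rewriting these as $x_4 + x_5 + x_6 = \sum_{i=1}^3 x_i = a$ and $x_4^{2^e+1} + x_5^{2^e+1} + x_6^{2^e+1} = \sum_{i=1}^3 x_i^{2^e+1} = b$ puts the counting problem into exactly the form of Equation~(\ref{eqx4}) in Lemma~\ref{lem-solu2}.

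\textbf{Next I would address the bookkeeping} between the \emph{ordered} solution count $\hat N$ of Lemma~\ref{lem-solu2} and the \emph{unordered support} count $\lambda$. Lemma~\ref{lem-solu2} counts ordered triples $(x_4, x_5, x_6) \in \gf(q)^3$, and its statement implicitly allows coincidences among the $x_i$ or with $a$ that we must excise, since a genuine weight-$6$ support requires all six positions distinct. The condition $a^{2^e+1} \neq b$ assumed in Lemma~\ref{lem-solu2} already rules out the degenerate situation (it excludes the case where $x_1, x_2, x_3$ alone support a weight-$\le 3$ word, which cannot occur anyway as $d^\bot = 6$). I would argue that each valid support $\{x_4, x_5, x_6\}$ is counted $3! = 6$ times among the ordered solutions, and that the subtraction of the spurious solutions (those not giving six distinct entries) accounts for the $-1$ appearing in the final formula. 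Concretely, the relation $\lambda = \frac{1}{6}(\hat N) - (\text{correction})$ should, after the solution counts of Lemma~\ref{lem-solu2} are substituted, yield the two cases
\begin{align*}
\lambda =
\begin{cases}
\frac{1}{6}\left(q - 2 + (-1)^{m/2} 2^{m/2}\right) - 1, & \text{$a^{2^e+1}+b$ not a cubic residue,}\\[2mm]
\frac{1}{6}\left(q - 2 - (-1)^{m/2} 2^{m/2+1}\right) - 1, & \text{$a^{2^e+1}+b$ a cubic residue,}
\end{cases}
\end{align*}
with the $-1$ absorbing the contribution of the solution in which one of $x_4, x_5, x_6$ coincides with one of $x_1, x_2, x_3$.

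\textbf{The main obstacle I anticipate} is precisely this inclusion--exclusion step: justifying that the correct number of ordered solutions of~(\ref{eqx4}) to discard is exactly $6$ (giving the $-1$ after dividing by $3!$), and confirming that no further coincidences (e.g.\ two of the new coordinates being equal, or a new coordinate equalling an old one in more than one way) contribute. I would handle this by showing that any solution of~(\ref{eqx4}) in which the six values fail to be distinct must have $\{x_4, x_5, x_6\}$ overlapping $\{x_1, x_2, x_3\}$ in a forced way, and that because $d^\bot = 6$ there is a unique such degenerate configuration, contributing exactly $3! = 6$ to $\hat N$; this is where the hypothesis $a^{2^e+1}\neq b$ and the distinctness of $x_1, x_2, x_3$ are essential.

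\textbf{Finally,} having pinned down $\lambda$, the parameters and weight distribution of $\C_T$ follow immediately. By Lemma~\ref{lem-cf} the minimum distance of $\C^\bot$ is $6 > 3 = t$, so Lemma~\ref{lem:C-S-P} guarantees that $\C_T$ has length $2^m - 3$ and dimension $(2m+1) - 3 = 2m - 2$. The minimum distance $2^{m-1} - 2^{m/2}$ and the full weight distribution of Table~\ref{tab-even32} then follow directly from Theorem~\ref{main-even32}, into which the value of $\lambda$ just computed is substituted. This last part is routine and requires no new calculation beyond citing Theorem~\ref{main-even32}.
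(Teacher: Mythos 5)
Your proposal is correct and takes essentially the same route as the paper: the paper likewise obtains $\lambda_{T,6}(\C^\bot)=\frac{\hat N}{3!}-1$ from the ordered solution count of Lemma \ref{lem-solu2} (the $-1$ accounting for the six permutations of $(x_1,x_2,x_3)$ itself, which are the only degenerate solutions since $d^\perp=6$ and $a^{2^e+1}\neq b$ exclude all other coincidences), and then cites Theorem \ref{main-even32} for the parameters and weight distribution; if anything, your bookkeeping of the degenerate solutions is more explicit than the paper's one-line justification. One small correction: $a^{2^e+1}=b$ would correspond to a weight-$4$ dual codeword supported on $\{x_1,x_2,x_3,a\}$, not a weight-$\le 3$ word, but your conclusion that $d^\perp=6$ makes the hypothesis $a^{2^e+1}\neq b$ automatic is right.
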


\begin{proof}
By definition,  the code $\C$ has parameters $[2^m,2m+1,2^{m-1}-2^{m/2}]$ and the weight distribution in Table \ref{tab-cf1}. By Lemma \ref{lem-cf},
the minimum distance of  $\C^\bot$ is 6.
Consider the system of equations given by
\begin{align}\label{eqxexp}
\left\{
  \begin{array}{cccccl}
    x&+&y&+&z&=a,  \\ [2mm]
    x^{2^e+1}&+&y^{2^e+1}&+&z^{2^e+1}&=b. \\
  \end{array}
\right.
\end{align}
Since $(x_1,x_2,x_3)$ must be the solution $(x,y,z)\in \gf(q)^3$ of Equation (\ref{eqxexp}),
by definition we have $\lambda _{T,6}(\C^\bot)=\frac{\hat{N}}{3!}-1$, where $\hat{N}$ was defined in Lemma \ref{lem-solu2}. Then the desired conclusions follow from Lemma \ref{lem-solu2} and Theorem \ref{main-even32}.
\end{proof}

\begin{example}\label{exa-31even}
Let $m=4$, $q=2^4$ and $\alpha$ be a primitive element of $\gf(q)$ with minimal polynomial $\alpha^4+ \alpha+1=0$. Let $e=1$ and $T=\{\alpha^1,\alpha^2,\alpha^4\}$.
Then $(\alpha^1+\alpha^2+\alpha^4)^3+(\alpha^3+\alpha^6+\alpha^{12})=1$, $\lambda=0$ and the shortened code $\C_{T}$  in Theorem \ref{main-even32} is a $[13,6,4]$ binary linear code with the weight enumerator $1+7z^{4}+36z^{6}+15z^{8}+4 z^{10}+z^{12}$. This code $\C_{T}$ is optimal. Its dual $\C_{T}^\perp$ has parameters $[13,7,4]$ and is optimal according to the tables of best known codes   maintained at http://www.codetables.de.
\end{example}

\begin{example}\label{exa-32even}
Let $m=4$, $q=2^4$ and $\alpha$ be a primitive element of $\gf(q)$ with minimal polynomial $\alpha^4+ \alpha+1=0$. Let $e=1$ and $T=\{\alpha^2,\alpha^5,\alpha^7\}$.
Then $(\alpha^2+\alpha^5+\alpha^7)^3+(\alpha^6+\alpha^{15}+\alpha^{21})=\alpha^{11}$, $\lambda=2$ and the shortened code $\C_{T}$  in Theorem \ref{main-even32} is a $[13,6,4]$ binary linear code with the weight enumerator $1+8z^{4}+34z^{6}+15z^{8}+6 z^{10}$.
This code $\C_{T}$ is optimal. Its dual $\C_{T}^\perp$ has parameters $[13,7,3]$ and is almost optimal according to the tables of best known codes   maintained at http://www.codetables.de.
\end{example}

%

\subsection{Some shortened codes for the case $t=4$ and $m$   even}

Let $T=\{x_1,x_2,x_3,x_4\}$ be a $4$-subset of $ \mathcal P(\C)$.  Magma programs show that the weight distributions of $\C_{T}$ for the codes $\C$ from
APN functions are very complex.
Thus, it is difficult to determine their parameters in general.
In this subsection, we will study a class of special linear codes $\C$ with the weight distribution in Table \ref{tab-cf1} and
determine the parameters of $\C_{T}$ for certain $4$-subsets $T$ in Theorem \ref{main-even41}.

In order to determine the parameters of $\C_{T}$, we need the next two lemmas.

\begin{lemma}\label{lem-re}
Let $m\geq 4$ be even and  $q=2^m$. Define
\begin{align*}
   &  R_{(3,i)}=\# \left \{x \in \gf(q)^*: \tr_{q/2}(x)=i~and~  \mbox{x is a cubic residue} \right \}          \nonumber  \\
   & \bar{R}_{(3,i)}=\# \left \{x \in \gf(q)^*: \tr_{q/2}(x)=i~ and~ \mbox{x is not a cubic residue} \right \},
\end{align*}
where $i=0$ or $i=1$. Then
\begin{align*}
\left \{
\begin{array}{l}
 R_{(3,0)}=\frac{1}{6}\left (2^m-2+(-2)^{m/2 +1} \right ), \\
R_{(3,1)}=\frac{2^m-1}{3}-\frac{1}{6} \left (2^m-2+(-2)^{m/2 +1} \right ), \\
\bar{R}_{(3,0)}= (2^{m-1}-1)-\frac{1}{6} \left (2^m-2+(-2)^{m/2 +1} \right ), \\
 \bar{R}_{(3,1)}= \frac{2^m}{3}-\frac{(-2)^{m/2}}{3}.
\end{array}
\right .
\end{align*}
\end{lemma}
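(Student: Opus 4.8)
The plan is to reduce the whole lemma to the single count $R_{(3,0)}$, since the four quantities $R_{(3,0)}, R_{(3,1)}, \bar{R}_{(3,0)}, \bar{R}_{(3,1)}$ partition $\gf(q)^*$ according to the two binary attributes ``cubic residue or not'' and ``trace $0$ or $1$''. Three marginal totals are immediate: because $m$ is even we have $3\mid 2^m-1$, so the cubic residues form the index-$3$ subgroup of $\gf(q)^*$ and number $(2^m-1)/3$, giving $R_{(3,0)}+R_{(3,1)}=(2^m-1)/3$; the trace function is balanced, so among the nonzero elements exactly $2^{m-1}-1$ have trace $0$ and $2^{m-1}$ have trace $1$, giving $R_{(3,0)}+\bar{R}_{(3,0)}=2^{m-1}-1$ and $R_{(3,1)}+\bar{R}_{(3,1)}=2^{m-1}$. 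Hence once $R_{(3,0)}$ is known, the remaining three follow by subtraction.

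To evaluate $R_{(3,0)}$ I would exploit the exponential sum machinery already set up in Lemma \ref{expsum}. Fix $e=1$, which is legitimate since $\gcd(m,1)=1$ and $m$ is even; then $x\mapsto x^{2^e+1}=x^3$ is the cube map, which is $3$-to-$1$ from $\gf(q)^*$ onto the set of cubic residues (its kernel is the group of cube roots of unity, of order $\gcd(3,2^m-1)=3$). Because $\tr_{q/2}(x)$ is constant on the three preimages of each cubic residue, this yields $3R_{(3,0)}=\#\{y\in\gf(q)^*:\tr_{q/2}(y^3)=0\}$. Writing the trace-zero indicator as $\tfrac12\bigl(1+(-1)^{\tr_{q/2}(y^3)}\bigr)$ and summing over all of $\gf(q)$ gives $\#\{y\in\gf(q):\tr_{q/2}(y^3)=0\}=\tfrac12\bigl(q+S_1(1,0)\bigr)$, with $S_1(1,0)$ the sum of (\ref{def S}) at $a=1$, $b=0$. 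Since $1$ is a cubic residue, Lemma \ref{expsum} supplies $S_1(1,0)=-(-1)^{m/2}2^{m/2+1}$.

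Finishing is then bookkeeping: I would remove the term $y=0$ (which has trace $0$) to pass from $\gf(q)$ to $\gf(q)^*$, obtaining $3R_{(3,0)}=\tfrac12\bigl(q-(-1)^{m/2}2^{m/2+1}\bigr)-1$, and use the identity $-(-1)^{m/2}2^{m/2+1}=(-2)^{m/2+1}$ to rewrite this as $R_{(3,0)}=\tfrac16\bigl(2^m-2+(-2)^{m/2+1}\bigr)$, matching the claim. Substituting into the three subtraction relations above and simplifying (again converting $(-1)^{m/2}2^{m/2}=(-2)^{m/2}$) produces the stated formulas for $R_{(3,1)}$, $\bar{R}_{(3,0)}$ and $\bar{R}_{(3,1)}$. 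The only genuinely delicate point is the reduction step: correctly identifying the cube map as a $3$-to-$1$ surjection onto the cubic residues and keeping the $\pm$ signs and the single missing $y=0$ term straight, so that the output of Lemma \ref{expsum} is fed in with the right normalization. Everything else is routine once $R_{(3,0)}$ is in hand.
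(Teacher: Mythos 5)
Your proposal is correct and follows essentially the same route as the paper: both reduce everything to $R_{(3,0)}$ via the $3$-to-$1$ cube map, evaluate $\#\{x:\tr_{q/2}(x^3)=0\}$ by a character-sum/indicator argument fed into Lemma \ref{expsum} with $a=1$ a cubic residue, and recover the other three counts from the marginal totals (cubic residues number $\frac{q-1}{3}$, trace-zero nonzero elements number $2^{m-1}-1$). The only differences are bookkeeping (you sum over $\gf(q)$ and subtract the $y=0$ term, while the paper sums over $\gf(q)^*$ directly), which does not change the argument.
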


\begin{proof}
By definition, we get
\begin{align*}
R_{(3,0)}&=\frac{1}{3} \cdot \# \{x \in \gf(q)^*: \tr_{q/2}(x^3)=0\}   \\
& =\frac{1}{6}\sum_{z\in \gf(2)}\sum_{x\in \gf(q)^*}(-1)^{z\tr_{q/2}(x^3)} \\
& =\frac{1}{6} \left (q-2+\sum_{x\in \gf(q)}(-1)^{\tr_{q/2}(x^3)} \right).
\end{align*}
Then the value of $R_{(3,0)}$ follows from Lemma \ref{expsum}. Note that there are $\frac{q-1}{3}$ cubic residues in $\gf(q)^*$. This gives
$$
\# \{x \in \gf(q)^*: \tr_{q/2}(x)=0 \}=\frac{q}{2}-1.
$$
Then the desired conclusions follow from
\begin{align*}
\left\{  \begin{array}{l}
   R_{(3,0)}+\bar{R}_{(3,0)}=\frac{q}{2}-1,  \\
   R_{(3,0)}+R_{(3,1)}=\frac{q-1}{3},  \\
   \bar{R}_{(3,0)}+\bar{R}_{(3,0)}=\frac{2(q-1)}{3}.
   \end{array}
\right.
\end{align*}
This completes the proof.
\end{proof}

\begin{lemma}\label{lem-even41}
Let $m\geq 4$ be even, $e$ be a positive integer with $\gcd(m,e)=1$, and $q=2^m$. Let $N_{(0,1)}$ be  the number of solutions $(x,y,z,u)\in \gf(q)^4$ of  the system of equations
\begin{align}\label{eqx-even1}
\left\{
  \begin{array}{lllll}
    x&+y&+z&+u&=0,  \\ [2mm]
    x^{2^e+1}&+y^{2^e+1}&+z^{2^e+1}&+u ^{2^e+1}&=1, \\[2mm]
    \multicolumn{4}{c}{\# \left \{x,y,z,u \right \}}&=4.
  \end{array}
\right.
\end{align}
 Then
$N_{(0,1)} =q \left (q-2-(-1)^{m/2}2^{m/2+1} \right ).$
\end{lemma}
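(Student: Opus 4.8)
The plan is to count solutions $(x,y,z,u)\in\gf(q)^4$ of the system \eqref{eqx-even1} by first dropping the distinctness constraint, counting all solutions of the two-equation system, and then subtracting off the degenerate solutions in which two or more of $x,y,z,u$ coincide. Since the map $u=x+y+z$ is forced by the first equation, I would substitute $u=x+y+z$ into the second equation and reduce the whole problem to counting triples $(x,y,z)\in\gf(q)^3$ satisfying a single quadratic-type condition. After expanding $(x+y+z)^{2^e+1}$ using additivity of the Frobenius map $t\mapsto t^{2^e}$, the second equation collapses to a bilinear expression in $x,y,z$, because all the pure powers $x^{2^e+1},y^{2^e+1},z^{2^e+1}$ cancel. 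Concretely I expect the constraint to take the form of a symmetric bilinear expression $\sum_{i<j} (\text{terms } x_i^{2^e}x_j + x_j^{2^e}x_i) = 1$.

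\textbf{Counting the relaxed system.} The key step will be to evaluate the number $M$ of triples solving this bilinear equation. My plan is to fix one variable, say $z$, and treat the resulting equation as a linear(ized) equation in the remaining variables over $\gf(q)$; equations of the shape $g(x)+g(x)^{2^e}=c$ (a linearized polynomial equation in $x$) are exactly what Lemma \ref{lem-solu2} already handled, and indeed Lemma \ref{lem-even41} is the four-variable analogue of that three-variable lemma. So I would either invoke the character-sum computation directly, writing
\[
M=\frac{1}{q}\sum_{\lambda\in\gf(q)}\sum_{x,y,z}(-1)^{\tr_{q/2}(\lambda(B(x,y,z)+1))},
\]
where $B$ is the bilinear form, and isolate the $\lambda=0$ term (giving $q^2$) from the $\lambda\neq0$ terms, or reduce to Lemma \ref{lem-solu2} by noting that for each fixed value of $a^{2^e+1}+b$ the relevant count is already known. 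The exponential-sum evaluation will lean on Lemma \ref{expsum} for the Weil-type bound $\sum_x(-1)^{\tr(ax^{2^e+1})}=\pm(-1)^{m/2}2^{m/2}$ with the cubic-residue dichotomy, which is precisely the source of the $(-1)^{m/2}2^{m/2+1}$ term in the claimed answer.

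\textbf{Subtracting degenerate solutions.} Once the count of the relaxed system is in hand, I would enumerate the solutions that violate $\#\{x,y,z,u\}=4$. The collisions come in types: $x=y$ (forcing $z=u$ by the first equation, hence a smaller system), and the analogous pairings $x=z$, $x=u$, etc., together with triple or total coincidences. When two pairs coincide the second equation reads $1=0$, which is impossible, so the genuinely contributing degenerate cases are controlled and finite in structure; each collapses the system to a lower-dimensional count that is either void (because the right-hand side $1\neq0$ cannot be met) or exactly solvable. I expect inclusion–exclusion over the partition lattice on four elements to leave only a clean linear-in-$q$ correction, which is why the final formula $N_{(0,1)}=q\bigl(q-2-(-1)^{m/2}2^{m/2+1}\bigr)$ has the shape (leading term $q^2$, minus the degenerate $-2q$ and the Weil correction term).

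\textbf{Main obstacle.} The hard part will be the bookkeeping in the bilinear reduction and the inclusion–exclusion: I must verify that the bilinear form $B(x,y,z)$ is genuinely nondegenerate in each variable (so that fixing two variables leaves a linearized equation with the expected number of roots, using $\gcd(m,e)=1$ to control the kernel of $t\mapsto t+t^{2^e}$), and I must correctly match the cubic-residue case split of Lemma \ref{expsum} to produce exactly the single term $-(-1)^{m/2}2^{m/2+1}=(-2)^{m/2+1}$ rather than a case-dependent answer. The fact that the final count is \emph{independent} of the cubic-residue status—unlike the three-variable Lemma \ref{lem-solu2}—suggests that summing over the free variable averages out the dichotomy, and confirming that cancellation rigorously is the crux of the argument.
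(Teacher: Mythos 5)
Your plan is correct, and your second alternative (reduction to Lemma \ref{lem-solu2}) is a genuinely different and cleaner route than the paper's. The paper never substitutes $u=x+y+z$: after observing that the distinctness condition is automatic, it writes
$N_{(0,1)}=\frac{1}{q^2}\sum_{a,b\in\gf(q)}\chi_1(-a)\bigl(S_e(a,b)\bigr)^4$,
a double character sum detecting both equations, and evaluates the fourth powers of the Weil sums via Lemma \ref{expsum}, the power-moment identity of Lemma \ref{expsum1}, and the cubic-residue/trace counts of Lemma \ref{lem-re}. Your reduction short-circuits all of that machinery: for each fixed $u\in\gf(q)$ the remaining system in $(x,y,z)$ is exactly (\ref{eqx4}) with $a=u$ and $b=1+u^{2^e+1}$, so that $a^{2^e+1}+b=u^{2^e+1}+1+u^{2^e+1}=1$, which is always a cubic residue. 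Hence Lemma \ref{lem-solu2} gives the same count $q-2-(-1)^{m/2}2^{m/2+1}$ for every one of the $q$ choices of $u$, and multiplying by $q$ finishes the proof. Your bilinear-form variant also works and is essentially the same computation: with $s=x+y$ and $w=x+z$ one has $B(x,y,z)=w^{2^e}s+ws^{2^e}$, so the count equals $q$ times the number of solutions of $w^{2^e}s+ws^{2^e}=1$, which is the two-variable count appearing inside the proof of Lemma \ref{lem-solu2} with right-hand side $1$. What the paper's heavier approach buys is nothing here beyond self-containedness of the character-sum technique; your route buys brevity and reuses work already done.

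Two of your predictions resolve differently from what you expect, though both in your favor, so they are not gaps. First, the degenerate solutions contribute nothing at all: any coincidence among $x,y,z,u$ forces, via the linear equation, a second coincidence pairing up the four variables, which makes the left side of the power equation vanish and contradicts the right side $1$. So the relaxed count already equals $N_{(0,1)}$, and the $-2q$ in the final formula is \emph{not} a degeneracy correction; it is the sum over $u$ of the $-2$ in the per-$u$ count from Lemma \ref{lem-solu2}. If you insisted on subtracting a $-2q$ degeneracy term on top of that, you would double-count. Second, no averaging over the cubic-residue dichotomy occurs: as computed above, $a^{2^e+1}+b$ equals $1$ identically in $u$, so only the cubic-residue branch of Lemma \ref{lem-solu2} is ever invoked, and the ``crux'' you flag dissolves. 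Carrying out the bookkeeping you planned would reveal both facts, so the argument goes through.
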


\begin{proof}
Let us first observe that (\ref{eqx-even1}) is equivalent to the following system of equations
\begin{align}\label{eqx-even1-1}
\left\{
  \begin{array}{lllll}
    x&+y&+z&+u&=0,  \\ [2mm]
    x^{2^e+1}&+y^{2^e+1}&+z^{2^e+1}&+u ^{2^e+1}&=1.
  \end{array}
\right.
\end{align}
Set $S_j=x^j+y^j+z^j+u^j$, where $x,y,z,u\in \gf(q)$ and $j$ is a positive integer.
An easy computation shows that
\begin{align*}
N_{(0,1)}
&=\frac{1}{q^2} \sum_{a,b, x,y,z,u \in \gf(q)} \chi_1(bS_1)\chi_1 \left (a(S_{2^e+1}-1) \right )   \\
&=\frac{1}{q^2} \sum_{a,b\in \gf(q)}\chi_1(-a) \left(\sum_{x\in \gf(q)}\chi_1\left (ax^{2^e+1}+bx \right ) \right)^4   \\
&=\frac{1}{q^2}  \left (\sum_{a\in \gf(q)}\chi_1(-a) \left  (\sum_{x\in \gf(q)}\chi_1 \left (ax^{2^e+1} \right ) \right )^4 +    \right .     \\
&~~~\left . \sum_{a\in \gf(q)^*}\sum_{b\in \gf(q)^*}\chi_1(-a) \left ( \sum_{x\in \gf(q)}\chi_1\left (ax^{2^e+1}+bx \right ) \right )^4  \right )          \\
&=\frac{1}{q^2}\left  (q^4+ \left (R_{(3,0)}-R_{(3,1)} \right ) 2^{m+4}+\left (\bar{R}_{(3,0)}-\bar{R}_{(3,1)} \right ) 2^{m}   +  \right .     \\
&~~\left . \left (R_{(3,0)}-R_{(3,1)}\right )\left (2^{3m+2}-2^{2m+4} \right )+ \left (\bar{R}_{(3,0)}-\bar{R}_{(3,1)} \right ) (2^m-1)2^{2m} \right ),
\end{align*}
where $R_{(3,0)},R_{(3,1)},\bar{R}_{(3,0)}$ and $\bar{R}_{(3,1)}$ were defined in Lemma \ref{lem-re}  and the last equality holds due to Lemmas \ref{expsum} and \ref{expsum1}. Then  the desired conclusions follow from Lemma \ref{lem-re}.
\end{proof}

\begin{theorem}\label{main-even41}
Let $m\geq 4$ be even and  $e$ be a positive integer with $\gcd(m,e)=1$. Let $q=2^m$,  $f(x)=x^{2^e+1}$ and $\C$ be defined in (\ref{eq:cf}). Let
 $T=\gf(4)=\left \{ 0, 1, w,w^2 \right \} \subseteq \gf(q)$, where $w$ is a generator of $\gf(4)^*$.
 Then the shortened code $\C_{T}$ is a $[2^{m}-4, 2m-3, 2^{m-1}-2^{m/2}]$ binary linear code with the weight distribution in Table \ref{tab-even41}.
\end{theorem}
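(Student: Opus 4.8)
The plan is to determine the full weight distribution of $\C_T$ through its dual $(\C_T)^\perp=(\C^\perp)^T$, exactly as in the proofs of Theorems \ref{main-31} and \ref{main-even32}. Since $\#T=4<6=\min\{d,d^\perp\}$ (the value $d^\perp=6$ coming from Lemma \ref{lem-cf}), Lemma \ref{lem:C-S-P} immediately gives that $\C_T$ has length $2^m-4$ and dimension $(2m+1)-4=2m-3$, and that $(\C_T)^\perp=(\C^\perp)^T$. By Table \ref{tab-cf1} every nonzero weight of $\C_T$ lies in $\{2^{m-1}-2^{m/2},\,2^{m-1}-2^{(m-2)/2},\,2^{m-1},\,2^{m-1}+2^{(m-2)/2},\,2^{m-1}+2^{m/2}\}$, so there are five unknown multiplicities. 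I would recover them from the first five Pless power moments in (\ref{eq:PPM}) for $t=0,1,2,3,4$; their right-hand sides involve only $A_0^\perp=1$ together with $A_1^\perp,A_2^\perp,A_3^\perp,A_4^\perp$ of $(\C^\perp)^T$, so the whole problem reduces to computing these four coefficients.

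Identifying dual codewords with subsets $S\subseteq\gf(q)$ satisfying $\sum_{x\in S}x=0$ and $\sum_{x\in S}x^{2^e+1}=0$ (the parity checks defining $\C^\perp$), and noting that puncturing on $T$ is injective on $\C^\perp$ because $d^\perp=6>4=\#T$, a codeword $S$ contributes to $A_j((\C^\perp)^T)$ precisely when $|S|-|S\cap T|=j$. As all weights of $\C^\perp$ are even and at least $6$, this forces $A_1^\perp=0$ and shows that $A_2^\perp$ counts the weight-$6$ codewords containing $T$, i.e. $A_2^\perp=\lambda_{T,6}(\C^\perp)$. Because $T=\gf(4)$ has $\sum_{t\in\gf(4)}t=0$, Lemma \ref{lem-solu1} (with $S_1=0$) gives $N=0$, hence $A_2^\perp=0$. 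The same bookkeeping shows $A_3^\perp=n_3$, the number of weight-$6$ codewords meeting $T$ in three points, while $A_4^\perp=n_2+n_8$, where $n_2$ counts weight-$6$ codewords meeting $T$ in two points and $n_8$ counts weight-$8$ codewords containing $T$.

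The arithmetic fact that makes $T=\gf(4)$ special is that, since $m$ is even and $\gcd(m,e)=1$ force $e$ odd, we have $3\mid 2^e+1$, so $t^{2^e+1}=1$ for every $t\in\gf(4)^*$ and therefore $\sum_{t\in\gf(4)}t^{2^e+1}=1$. Consequently a weight-$8$ codeword $S=\gf(4)\cup\{x,y,z,u\}$ containing $T$ is exactly a $4$-set $\{x,y,z,u\}$, disjoint from $\gf(4)$, solving $x+y+z+u=0$ and $x^{2^e+1}+\cdots+u^{2^e+1}=1$, which is the system counted by $N_{(0,1)}$ in Lemma \ref{lem-even41}; thus $n_8=\tfrac{1}{4!}\big(N_{(0,1)}-\varepsilon_8\big)$, where $\varepsilon_8$ removes the (distinct) solution tuples that meet $\gf(4)$. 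For $n_3$ I would fix the omitted point $t_0\in\gf(4)$ and apply Lemma \ref{lem-solu2} to $x+y+z=t_0$, $\sum x_i^{2^e+1}=1+t_0^{2^e+1}$: here $a^{2^e+1}+b=1$ is a cubic residue for every choice of $t_0$, so each gives $\hat N=2^m+(-2)^{m/2+1}-2$ ordered triples, and $n_3=\tfrac{1}{3!}\sum_{t_0\in\gf(4)}\big(\hat N-\varepsilon_3(t_0)\big)$ after subtracting triples meeting $\gf(4)$. Finally, rather than counting $n_2$ from scratch I would use that $(\cP(\C^\perp),\mathcal{B}_6(\C^\perp))$ is a $2$-design (Theorem \ref{thm-AMTheorem} and Lemma \ref{lem:P:k:k+t}, as established inside Lemma \ref{le-even31}): summing the pair-count $\tfrac{\binom{6}{2}}{\binom{q}{2}}A_6(\C^\perp)=\tfrac{2}{3}(2^{m-2}-1)^2$ over the $\binom{4}{2}$ pairs of $T$ gives $n_2+3n_3=6\cdot\tfrac{2}{3}(2^{m-2}-1)^2$, so $n_2$ is determined once $n_3$ is known.

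With $A_1^\perp=A_2^\perp=0$ and explicit values for $A_3^\perp=n_3$ and $A_4^\perp=n_2+n_8$, substituting into the first five Pless power moments yields a $5\times 5$ linear system whose solution is the distribution in Table \ref{tab-even41}. The main obstacle is the exact evaluation of $A_4^\perp$: it mixes a weight-$6$ and a weight-$8$ contribution, and in both counting steps (Lemma \ref{lem-even41} for $n_8$ and Lemma \ref{lem-solu2} for $n_3$) one must carefully strip off, by inclusion--exclusion, the solution tuples that fail to be disjoint from $\gf(4)$ — the correction terms $\varepsilon_8$ and $\varepsilon_3(t_0)$ — using the explicit arithmetic of the subfield $\gf(4)$. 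Everything after that is routine linear algebra.
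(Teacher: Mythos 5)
Your proposal follows the same route as the paper's proof in every structural respect: reduce the problem to $A_1,\dots,A_4$ of $(\C_T)^\perp=(\C^\perp)^T$; get $A_1^\perp=A_2^\perp=0$ from $d^\perp=6$ (Lemma \ref{lem-cf}) together with $\sum_{t\in\gf(4)}t=0$; identify $A_3^\perp$ with the weight-$6$ dual codewords meeting $T$ in three points, evaluated through Lemma \ref{lem-solu2} with $a^{2^e+1}+b=1$ a cubic residue; split $A_4^\perp$ into a weight-$6$ part (handled by the $2$-design property of $\mathcal B_6(\C^\perp)$, exactly as in Lemma \ref{le-even31}) and a weight-$8$ part (handled by $N_{(0,1)}$ of Lemma \ref{lem-even41}); and finish with the first five Pless power moments. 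Even the arithmetic observation $\sum_{t\in\gf(4)}t^{2^e+1}=1$ (from $e$ odd, $3\mid 2^e+1$) is the same fact the paper uses implicitly; you simply make it explicit.

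The one genuine gap is exactly the point you defer as ``the main obstacle'': the correction terms $\varepsilon_3(t_0)$ and $\varepsilon_8$ are never evaluated, and Table \ref{tab-even41} cannot be produced without them --- this is precisely where the paper does its remaining work. Both corrections are in fact forced by $d^\perp=6$, via the same symmetric-difference argument you already used to kill $A_2^\perp$. For $\varepsilon_3(t_0)$: if a solution triple of $x+y+z=t_0$, $\sum x_i^{2^e+1}=1+t_0^{2^e+1}$ has a coordinate equal to some $t_1\in T\setminus\{t_0\}$, then the remaining pair $\{x,y\}$ satisfies $x+y=t_2+t_3$ and $x^{2^e+1}+y^{2^e+1}=t_2^{2^e+1}+t_3^{2^e+1}$, where $\{t_2,t_3\}=T\setminus\{t_0,t_1\}$, so $\{x,y,t_2,t_3\}$ would support a weight-$4$ dual codeword unless $\{x,y\}=\{t_2,t_3\}$; hence the only solution set meeting $T$ is $T\setminus\{t_0\}$ itself, giving $\varepsilon_3(t_0)=3!$ --- this is the content of the paper's Theorem \ref{main-even33}, i.e.\ $\lambda_{\hat T,6}(\C^\perp)=\hat N/3!-1$, so $n_3=4\bigl(\hat N/6-1\bigr)$. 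For $\varepsilon_8$: a solution quadruple meeting $T$ in exactly three points forces its fourth coordinate into $T$ (since the three points sum to the missing element), and one meeting $T$ in exactly two points again produces a weight-$4$ dual codeword; so the only quadruples to strip off are $T$ itself and those meeting $T$ in exactly one point, and the latter are in bijection with the $n_3$ weight-$6$ codewords above. Thus $\varepsilon_8=4!\,(1+n_3)$, which is the paper's equation (\ref{eq-84}). Once these two values are inserted, your plan closes and coincides with the paper's proof.
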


\begin{table}[ht]
\begin{center}
\caption{The weight distribution of $\C_T$ in Theorem \ref{main-even41} }\label{tab-even41}
\begin{tabular}{cc} \hline
Weight  &  Multiplicity   \\ \hline
$0$          &  $1$ \\ [2mm]
$2^{m-1}-2^{m/2}$    & $ 1/3 \cdot 2^{ m/2-6} \left  (-16 + 2^{3m/2} - 2^{m+1}(-4+(-1)^{m/2}) - 2^{4 + m/2}(-1+(-1)^{m/2}) \right )$ \\ [2mm]
$2^{m-1}-2^{(m-2)/2}$    & $ 1/24 \cdot\left  (2^{m/2+2}+2^m \right ) \left (2^m+(-1)^{m/2}2^{m/2}-2 \right )$ \\ [2mm]
$2^{m-1}$    & $ -1 + 2^{2m-5}-(-1)^{m/2} 2^{ 3m/2 -4}$ \\            [2mm]
$2^{m-1}+2^{(m-2)/2}$    & $ 1/24 \cdot\left  (-2^{m/2+2}+2^m \right  ) \left (2^m+(-1)^{m/2}2^{m/2}-2 \right ) $ \\ [2mm]
$2^{m-1}+2^{m/2}$    & $ 1/3 \cdot 2^{ m/2-6}\left  (16 + 2^{3m/2} - 2^{m+1}(4+(-1)^{m/2}) + 2^{4 + m/2}(1+(-1)^{m/2}) \right )$ \\ [2mm]
\hline
\end{tabular}
\end{center}
\end{table}

\begin{proof}
By Lemma \ref{lem-cf}, $\C$ is a $[2^{m}, 2m+1, 2^{m-1}-2^{m/2}]$ binary code with the weight distribution in Table \ref{tab-cf1} and the minimum distance of  $\C^\bot$ is 6.
Note that  $\lambda _{T,6}(\C^\bot)=0$ by Lemma \ref{main-32} and the definition of $T$. Thus, the minimum distance of  $\left (\mathcal C^{\perp} \right )^{T}$ is at least 3. This means that
\begin{align}\label{eq-4A12}
A_1\left ( \left (\mathcal C^{\perp} \right )^{T}  \right )=A_2\left ( \left (\mathcal C^{\perp} \right )^{T}  \right )=0.~
\end{align}
Further, from Theorem \ref{main-even33} and the definition of $T$, we have
\begin{align}\label{eq:A3}
\lambda _{\hat{T},6}(\C^\bot)=\frac{1}{6}\left (q-2-(-1)^{m/2} 2^{m/2+1}  \right )-1
\end{align}
for any $\hat{T}=\{\hat{x}_1, \hat{x}_2, \hat{x}_3\} \subseteq T$. Therefore,
\begin{align}\label{eq-4A3}
A_3\left ( \left (\mathcal C^{\perp} \right )^{T}  \right )= \binom{4 }{ 3 }\cdot  \left(\frac{1}{6}\left (q-2-(-1)^{\frac{m}{2}} 2^{\frac{m}{2}+1} \right )-1 \right).~
\end{align}
Note that the solutions of the system (\ref{eqx-even1}) have symmetrical property and $(x,y,z,u)=(0,1,w,w^2)$ is a solution of the system (\ref{eqx-even1}). From Lemma \ref{lem-even41} we get
\begin{align}\label{eq-84}
\lambda _{T,8}(\C^\bot)=\frac{N_{(0,1)}}{4!}-1-\binom{4 }{ 3 }\cdot \lambda _{\hat{T},6}(\C^\bot),
\end{align}
where $N_{(0,1)}$ was defined in Lemma \ref{lem-even41} and $\lambda _{\hat{T},6}(\C^\bot)$ was given in (\ref{eq:A3}). By the proof of Lemma \ref{le-even31}, we have
\begin{align}\label{eq-62}
A_4\left ( \left (\mathcal C^{\perp} \right )^{\{\bar{x}_1,\bar{x}_2\}}  \right ) =\frac{2}{3}\cdot (2^{m-2}-1)^2,
\end{align}
where $\{\bar{x}_1,\bar{x}_2\} \subseteq T $. It is obvious that for any $\{\bar{x}_1,\bar{x}_2\} \subseteq T $ there exist only two $3$-subsets $\bar{T}$ of $T$ such that $\{\bar{x}_1,\bar{x}_2\} \subseteq \bar{T} \subseteq T$. By definition we have
\begin{align}\label{eq-4A4}
A_4\left ( \left (\mathcal C^{\perp} \right )^{T}  \right )= \binom{4 }{ 2 }\cdot  \left( A_4\left ( \left (\mathcal C^{\perp} \right )^{\{0,1\}}  \right )-2   \lambda _{\hat{T},6}(\C^\bot)    \right)+
\lambda _{T,8}(\C^\bot),
\end{align}
where $\hat{T}=\{0,1,w\}$.
Combining Equations (\ref{eq:A3}),  (\ref{eq-84}) and (\ref{eq-62}) with Equation (\ref{eq-4A4}) yields
\begin{align}\label{eq-even4A4}
A_4\left ( \left (\mathcal C^{\perp} \right )^{T}  \right )= 4 \left(2^{m-2}-1 \right)^2-\frac{8}{3} \left(  2^m-2-(-1)^{m/2} 2^{m/2+1} \right)+\frac{N_{(0,1)}}{24}+15.
\end{align}

Note that the shortened code $\C_{T}$ has length $2^m-4$ and dimension $2m-3$ due to $\#T=4$ and Lemma \ref{lem:C-S-P}.
By definition and the weight distribution in Table \ref{tab-cf1}, we have that
$A_i\left (\C_T \right )=0$ for $i\not \in \{0, i_1, i_2, i_3,i_4,i_5\} $, where $i_1=2^{m-1}-2^{m/2}$, $i_2=2^{m-1}-2^{(m-2)/2}$
, $i_3=2^{m-1}$, $i_4=2^{m-1}+2^{(m-2)/2}$  and $i_5=2^{m-1}+2^{m/2}$.
Using Lemma \ref{lem:C-S-P} and Equations (\ref{eq-4A12}), (\ref{eq-4A3}) and (\ref{eq-even4A4}) and applying the first five Pless power moments of (\ref{eq:PPM}) yields the weight distribution in Table \ref{tab-even41}. This completes the proof.
\end{proof}

\begin{example}\label{exa-even41}
Let $m=4$ and $e=1$. Then the shortened code $\C_{T}$  in Theorem \ref{main-even41} is a $[12,5,4]$ binary linear code with the weight enumerator $1+3z^{4}+24z^{6}+3z^{8}+z^{12}$. This code $\C_{T}$ is optimal. Its dual $\C_{T}^\perp$ has parameters $[12,7,4]$ and is optimal according to the tables of best known codes   maintained at http://www.codetables.de.
\end{example}

%

\section{Shortened linear codes from PN functions} \label{sec-PN}

In this section, we study some shortened linear codes  from  certain PN functions and determine their parameters.

Let $p$ be odd prime and $q=p^m$. Let $f(x)=x^2$ and $\C$ be defined by (\ref{eq:cf}). Note that we index the coordinators of
the codewords in $\C$ with the elements in $\gf(q)$.  It is known that the code $\C$ is a $[q,2m+1]$ linear code with the the weight distribution given in \cite{LQ2009}. Note that the code $\C$ is affine invariant, and thus holds $2$-designs. Then the following theorem is easily derived from the parameters of the codes $\C$ in \cite{LQ2009} and Theorem \ref{thm:sct-code}, and we omit its proof.

\begin{theorem}\label{main-pdesign2}
Let $p$ be an odd prime, $m$ and $t$ be positive integers. Let $q=p^m$,  $f(x)=x^2$ and $\C$ be defined in (\ref{eq:cf}). Suppose $T$ is a $t$-subset of $\mathcal P(\C):=\gf(q)$.
We have the following results.
\begin{itemize}
  \item It $t=1$, then the shortened code $\C_{T}$ is a $[p^{m}-1, 2m]$ linear code with the weight distribution in Table \ref{tab-px2odd1} (resp., Table \ref{tab-px2even1}) when $m$ is odd (resp., even).
  \item It $t=2$, then the shortened code $\C_{T}$ is a $[p^{m}-2, 2m-1]$ linear code with the weight distribution in Table \ref{tab-px2odd2} (resp., Table \ref{tab-px2even2}) when $m$ is odd (resp., even).
\end{itemize}
\end{theorem}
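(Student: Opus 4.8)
The plan is to read off the statement as an immediate instance of Theorem \ref{thm:sct-code} applied to the code $\C$ attached to $f(x)=x^2$, whose weight distribution over $\gf(p)$ is recorded in \cite{LQ2009}: there $\C$ is a $[q,2m+1]$ code whose nonzero weights and multiplicities split into two cases according to the parity of $m$. To invoke Theorem \ref{thm:sct-code} I must supply its two hypotheses for $t\in\{1,2\}$, namely that $(\mathcal{P}(\C),\mathcal{B}_i(\C))$ is a $t$-design for every relevant $i$, and that $0<t<\min\{d,d^{\perp}\}$, where $d$ is the minimum distance of $\C$ and $d^{\perp}$ that of $\C^{\perp}$.

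For the design property I would exhibit $\C$ as affine invariant. Writing a coordinate permutation as $x\mapsto \alpha x+\beta$ with $\alpha\in\gf(q)^*$, $\beta\in\gf(q)$, one has $(\alpha x+\beta)^2=\alpha^2x^2+2\alpha\beta x+\beta^2$, so $\tr_{q/p}\!\left(a(\alpha x+\beta)^2+b(\alpha x+\beta)+c\right)$ is again of the form $\tr_{q/p}(a'x^2+b'x+c')$; hence $\AGL(1,q)$ acts as a permutation automorphism group of $\C$. Since $\AGL(1,q)$ is $2$-transitive on $\gf(q)$, the supports of the codewords of any fixed weight form a $2$-design, so $(\mathcal{P}(\C),\mathcal{B}_i(\C))$ is a $2$-design for all $i$, which covers both $t=1$ and $t=2$. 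For the inequality, the minimum distance $d$ is of order $p^{m-1}$ and so trivially exceeds $2$; for $d^{\perp}$ I would note that $\bx=(\lambda_x)_{x\in\gf(q)}\in\C^{\perp}$ is exactly the condition $\sum_x\lambda_x=\sum_x\lambda_x x=\sum_x\lambda_x x^2=0$ in $\gf(q)$. A Vandermonde argument then shows there is no nonzero codeword of weight at most $3$: any support of size $w\le 3$ on distinct points $x_1,\dots,x_w$ forces a nonsingular Vandermonde system and hence $\lambda\equiv 0$. Thus $d^{\perp}\ge 4>2$, and the condition $0<t<\min\{d,d^{\perp}\}$ holds for $t\in\{1,2\}$.

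With both hypotheses verified, Lemma \ref{lem:C-S-P} gives that $\C_T$ has length $q-t$ and dimension $2m+1-t$, yielding $[p^m-1,2m]$ for $t=1$ and $[p^m-2,2m-1]$ for $t=2$, as claimed. Theorem \ref{thm:sct-code} further supplies $A_k(\C_T)=\dfrac{\binom{k}{t}\binom{\nu-t}{k}}{\binom{\nu}{t}\binom{\nu-t}{k-t}}A_k(\C)$ with $\nu=q$, and substituting the weight distribution of $\C$ from \cite{LQ2009} for each of the two parities of $m$ and each value $t\in\{1,2\}$ produces the four Tables \ref{tab-px2odd1}, \ref{tab-px2even1}, \ref{tab-px2odd2}, \ref{tab-px2even2}. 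The only genuinely structural steps are the affine-invariance/$2$-design observation and the computation $d^{\perp}\ge 4$; once these are in hand, the derivation of the tables is a routine (if tedious) substitution, which is precisely why the authors omit it. The main obstacle, if any, is bookkeeping: ensuring that the multiplicities in \cite{LQ2009} are transcribed with the correct parity-dependent signs before feeding them into the ratio formula.
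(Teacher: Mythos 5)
Your proof is correct and follows essentially the same route the paper takes (and leaves implicit): affine invariance of $\C$ under $\AGL(1,q)$ gives $2$-designs (hence also $1$-designs) on each weight class, and Theorem \ref{thm:sct-code} together with the weight distribution of $\C$ from \cite{LQ2009} then yields the stated parameters and the four tables. The details you supply --- the explicit substitution $x\mapsto \alpha x+\beta$ and the Vandermonde argument giving $d^{\perp}\ge 4$ so that $t<\min\{d,d^{\perp}\}$ holds for $t\in\{1,2\}$ --- are precisely the verifications the paper omits.
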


\begin{table}[ht]
\begin{center}
\caption{The weight distribution of $\C_T$ for $m$ odd and $t=1$ }\label{tab-px2odd1}
\begin{tabular}{cc} \hline
Weight  &  Multiplicity   \\ \hline
$0$          &  $1$ \\ [2mm]
$p^{m-1}(p-1)$    & $  (p^m-1) (1 + p^{m-1})    $ \\ [2mm]
$p^{m-1}(p-1)-p^{\frac{m-1}{2}}$    & $  1/2 \cdot  (p-1) p^{(m-3)/2} (p^m-1) \left (p + p^{(1 + m)/2} \right )   $ \\ [2mm]
$p^{m-1}(p-1)+p^{\frac{m-1}{2}}$    & $  1/2 \cdot  (p-1) p^{(m-3)/2} (p^m-1)\left  (-p + p^{(1 + m)/2} \right )    $ \\ [2mm]
\hline
\end{tabular}
\end{center}
\end{table}

\begin{table}[ht]
\begin{center}
\caption{The weight distribution of $\C_T$ for $m$ odd and $t=2$ }\label{tab-px2odd2}
\begin{tabular}{cc} \hline
Weight  &  Multiplicity   \\ \hline
$0$          &  $1$ \\ [2mm]
$p^{m-1}(p-1)$    & $  p^{2 m-2} -1  $ \\ [2mm]
$p^{m-1}(p-1)-p^{\frac{m-1}{2}}$    & $  ( p-1) \left (-p^{(m-1)/2} + p^{2m-2} + 2 p^{(3m-3)/2} \right ) /2  $ \\ [2mm]
$p^{m-1}(p-1)+p^{\frac{m-1}{2}}$    & $   ( p-1) \left (p^{(m-1)/2} + p^{2m-2} - 2 p^{(3m-3)/2} \right )/2 $ \\ [2mm]
\hline
\end{tabular}
\end{center}
\end{table}

\begin{table}[ht]
\begin{center}
\caption{The weight distribution of $\C_T$ for $m$  even and $t=1$  }\label{tab-px2even1}
\begin{tabular}{cc} \hline
Weight  &  Multiplicity   \\ \hline
$0$          &  $1$ \\
$(p-1)\left (p^{m-1}-p^{(m-2)/2} \right ) $    & $ p^{ m/2-1} \left (p + p^{m/2}-1 \right ) (p^m-1)/2 $  \\
$(p-1) p^{m-1}-p^{(m-2)/2}  $     &    $ ( p-1) p^{m/2-1} \left (p^{m/2}-1 \right ) \left (1 + p^{m/2} \right )^2  /2   $              \\
$(p-1) p^{m-1}$                       & $  p^{m}-1$               \\
$(p-1)p^{m-1}+p^{(m-2)/2}$     &   $ ( p-1) p^{m/2-1} \left (p^{m/2}-1 \right )^2 \left (1 + p^{m/2} \right )  /2   $   \\
$(p-1)(p^{m-1}+p^{(m-2)/2})$    & $ p^{ m/2-1} \left (-p + p^{m/2}+1 \right ) (p^m-1)/2 $ \\
\hline
\end{tabular}
\end{center}
\end{table}

\begin{table}[ht]
\begin{center}
\caption{The weight distribution of $\C_T$ for $m$  even and $t=2$  }\label{tab-px2even2}
\begin{tabular}{cc} \hline
Weight  &  Multiplicity   \\ \hline
$0$          &  $1$ \\
$(p-1) \left (p^{m-1}-p^{(m-2/)2} \right ) $    & $p^{m/2-2} \left (p^{m/2}-1 \right ) \left (-1 + p + p^{m/2} \right ) \left (p + p^{m/2} \right )/2  $  \\
$(p-1) p^{m-1}-p^{(m-2)/2}  $     &    $ (p-1) p^{m/2-2} \left (1 + p^{m/2} \right ) \left (-p + p^{m/2} + p^m \right )/2   $              \\
$(p-1) p^{m-1}$                       & $  p^{m-1}-1$               \\
$(p-1)p^{m-1}+p^{(m-2)/2}$     &   $ (p-1) p^{m/2-2}\left  (-1 + p^{m/2} \right ) \left (-p - p^{m/2} + p^m \right )/2   $   \\
$(p-1)\left (p^{m-1}+p^{(m-2)/2} \right )$    &  $p^{m/2-2} \left (p^{m/2}+1 \right ) \left (1 - p + p^{m/2} \right ) \left (p^{m/2}-p\right )/2  $  \\
\hline
\end{tabular}
\end{center}
\end{table}

\begin{example}\label{exa-p1}
Let $m=3$, $p=3$ and $T$ be a $1$-subset of $\mathcal P(\C)$. Then the shortened code $\C_{T}$  in Theorem \ref{main-pdesign2} is a $[26, 6, 15]$ linear code with the weight enumerator $1+312 z^{15}+260 z^{18}+ 156 z^{21}$. This code $\C_{T}$ is optimal. Its dual $\C_{T}^\perp$ has parameters $[26, 20, 4]$ and is optimal according to the tables of best known codes   maintained at http://www.codetables.de.
\end{example}

\begin{example}\label{exa-p5}
Let $m=4$, $p=3$ and $T$ be a $1$-subset of $\mathcal P(\C)$. Then the shortened code $\C_{T}$  in Theorem \ref{main-pdesign2} is a $[80, 8, 48]$ linear code with the weight enumerator $1+1320 z^{48}+2400 z^{51}+ 80 z^{54}+1920 z^{57}+840 z^{60}$. This code $\C_{T}$ is optimal. Its dual $\C_{T}^\perp$ has parameters $[80, 72, 4]$ and is optimal according to the tables of best known codes   maintained at http://www.codetables.de.
\end{example}


\begin{example}\label{exa-p3}
Let $m=5$, $p=3$ and $T$ be a $2$-subset of $\mathcal P(\C)$. Then the shortened code $\C_{T}$  in Theorem \ref{main-pdesign2} is a $[241, 9, 153]$ linear code with the weight enumerator $1+8010 z^{153}+6560 z^{162}+ 5112 z^{171}$. This code $\C_{T}$ is optimal. Its dual $\C_{T}^\perp$ has parameters $[241, 232, 3]$ and is optimal according to the tables of best known codes   maintained at http://www.codetables.de.
\end{example}

\begin{example}\label{exa-p4}
Let $m=4$, $p=3$ and $T$ be a $2$-subset of $\mathcal P(\C)$. Then the shortened code $\C_{T}$  in Theorem \ref{main-pdesign2} is a $[79, 7, 48]$ linear code with the weight enumerator $1+528 z^{48}+870 z^{51}+ 26 z^{54}+552 z^{57}+210 z^{60}$. This code $\C_{T}$ is almost optimal. Its dual $\C_{T}^\perp$ has parameters $[79,72,3]$ and is almost optimal according to the tables of best known codes   maintained at http://www.codetables.de.
\end{example}

In the following, we will consider the shortened code $\C_{T}$ of $\C$ for the case $T=\gf(p)$. In order to determine the parameters of $\C_{T}$, we need the next lemmas.

\begin{lemma}\cite{HD2019}\label{lem-equalities}
Let $q=p^m$ with $p$ an odd prime. Then
 \begin{eqnarray*}
\lefteqn{\sharp \{a\in \gf(q)^*:\eta(a)=1\mbox{ and }\tr_{q/p}(a)=0\} } \\
&=&\left\{
\begin{array}{cl}
\frac{p^{m-1}-1-(p-1)p^{\frac{m-2}{2}}(\sqrt{-1})^{\frac{(p-1)m}{2}}}{2}, & \mbox{ if $m$ is even,} \\
\frac{p^{m-1}-1}{2}, & \mbox{ if $m$ is odd.}
\end{array}\right.
\end{eqnarray*}
and
\begin{eqnarray*}
\lefteqn{ \sharp \{a\in \gf(q)^*:\eta(a)=-1\mbox{ and }\tr_{q/p}(a)=0\} } \\
&=&\left\{
\begin{array}{cl}
\frac{p^{m-1}-1+(p-1)p^{\frac{m-2}{2}}(\sqrt{-1})^{\frac{(p-1)m}{2}}}{2}, & \mbox{ if $m$ is even,}\\
\frac{p^{m-1}-1}{2}, & \mbox{ if $m$ is odd.}
\end{array}\right.
\end{eqnarray*}
\end{lemma}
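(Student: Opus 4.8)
The plan is to evaluate both counts simultaneously via character sums. Write $N_{+}$ and $N_{-}$ for the two cardinalities in the statement. For $a\in\gf(q)^{*}$ the indicator that $a$ is a nonzero square (resp.\ nonsquare) is $\frac{1+\eta(a)}{2}$ (resp.\ $\frac{1-\eta(a)}{2}$), while the indicator of $\tr_{q/p}(a)=0$ is $\frac{1}{p}\sum_{y\in\gf(p)}\chi_{1}(ya)$ by the orthogonality relation for additive characters. Multiplying these indicators and summing over $a\in\gf(q)^{*}$ gives
$$N_{\pm}=\frac{1}{2p}\sum_{y\in\gf(p)}\ \sum_{a\in\gf(q)^{*}}\bigl(1\pm\eta(a)\bigr)\chi_{1}(ya).$$
First I would split off the term $y=0$: there $\chi_{1}(0)=1$, and since $\sum_{a\in\gf(q)^{*}}\eta(a)=0$ this term contributes $q-1$ for both choices of sign.

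For the inner sum with $y\in\gf(p)^{*}$ I would use $\sum_{a\in\gf(q)^{*}}\chi_{1}(ya)=-1$ (orthogonality, as $ya$ ranges over $\gf(q)^{*}$) and, after the substitution $b=ya$ together with $\eta(y^{-1})=\eta(y)$, the identity $\sum_{a\in\gf(q)^{*}}\eta(a)\chi_{1}(ya)=\eta(y)\,G(\eta,\chi_{1})$. Summing over $y\in\gf(p)^{*}$ then reduces everything to $\sum_{y\in\gf(p)^{*}}\eta(y)$, which is exactly the quantity controlled by Lemma~\ref{lem-bothcharac}: it equals $0$ when $m$ is odd, and $p-1$ when $m$ is even (since then $\eta\equiv1$ on $\gf(p)^{*}$). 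The odd case then collapses at once to $N_{\pm}=\frac{1}{2p}\bigl((q-1)-(p-1)\bigr)=\frac{p^{m-1}-1}{2}$, matching the claim for both $N_{+}$ and $N_{-}$.

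The remaining work is the even case, which I expect to be the only real obstacle. There I obtain $N_{\pm}=\frac{p^{m-1}-1}{2}\pm\frac{(p-1)G(\eta,\chi_{1})}{2p}$, so I must insert the value of $G(\eta,\chi_{1})$ from Lemma~\ref{lem-32A1}. For even $m$ that lemma gives $G(\eta,\chi_{1})=-(\sqrt{-1})^{(\frac{p-1}{2})^{2}m}\sqrt{q}$, whereas the target carries the exponent $(\sqrt{-1})^{\frac{(p-1)m}{2}}$. To reconcile the two exponents I would note that their difference equals $\frac{p-1}{2}\cdot\frac{p-3}{2}\cdot m$; since $\frac{p-3}{2}$ and $\frac{p-1}{2}$ are consecutive integers their product is even, and $m$ is even, so the difference is divisible by $4$ and hence $(\sqrt{-1})^{(\frac{p-1}{2})^{2}m}=(\sqrt{-1})^{\frac{(p-1)m}{2}}$. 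Substituting $G(\eta,\chi_{1})=-p^{m/2}(\sqrt{-1})^{\frac{(p-1)m}{2}}$ and simplifying $p^{m/2}/p=p^{(m-2)/2}$ yields the two stated formulas, the sign difference between $N_{+}$ and $N_{-}$ coming precisely from the $\pm$ in front of $\eta(a)$. The only delicate point is thus the exponent-parity reconciliation; everything else is routine orthogonality and substitution.
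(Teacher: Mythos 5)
Your proof is correct. The paper does not prove this lemma itself---it is quoted from \cite{HD2019}---but your argument is the standard character-sum computation that the cited source uses: indicator functions for the square/nonsquare and trace-zero conditions, orthogonality to isolate the $y=0$ term, the identity $\sum_{a\in\gf(q)^*}\eta(a)\chi_1(ya)=\eta(y)G(\eta,\chi_1)$, Lemma~\ref{lem-bothcharac} to evaluate $\sum_{y\in\gf(p)^*}\eta(y)$ (which is $0$ for $m$ odd and $p-1$ for $m$ even), and the Gauss sum value from Lemma~\ref{lem-32A1}. Your one delicate step, showing $\bigl(\tfrac{p-1}{2}\bigr)^2 m\equiv \tfrac{(p-1)m}{2}\pmod 4$ for even $m$ via the product of the consecutive integers $\tfrac{p-1}{2}$ and $\tfrac{p-3}{2}$, is also sound, so the stated formulas follow exactly as you claim.
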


\begin{lemma}\cite{HD2019} \label{lem-equalities1}
Let $q=p^m$ with $p$ an odd prime and $(a,b)\in \gf(q)^2$. Denote

\begin{eqnarray}\label{eqn-50}
N_0(a,b)=\sharp\{x\in \gf(q):\tr_{q/p}(ax^2+bx)=0\}.
\end{eqnarray}
Then the following results follow.
\begin{itemize}
  \item If $m$ is odd, then
  \begin{eqnarray}\label{eqn-5}
\lefteqn{ N_0(a,b)=} \nonumber \\
&
\left\{\begin{array}{cl}
p^m, & \mbox{ if }(a,b)=(0,0),\\
p^{m-1}, & \mbox{ if }a=0,\ b\neq 0,\mbox{ or }a\neq 0,\ \tr_{q/p}(\frac{b^2}{4a})=0,\\
p^{m-1}+p^{\frac{m-1}{2}}(-1)^{\frac{(p-1)(m+1)}{4}}, &
\mbox{ if }a\neq 0,\ \tr_{q/p}(\frac{b^2}{4a})\neq 0,\ \eta(a)\bar{\eta}\left(-\tr_{q/p}(\frac{b^2}{4a})\right)=1,\\
p^{m-1}+p^{\frac{m-1}{2}}(-1)^{\frac{(p-1)(m+1)+4}{4}}, &
\mbox{ if }a\neq 0,\ \tr_{q/p}(\frac{b^2}{4a})\neq 0,\ \eta(a)\bar{\eta}\left(-\tr_{q/p}(\frac{b^2}{4a})\right)=-1.
\end{array} \right.
\end{eqnarray}
  \item If $m$ is even, then
  \begin{eqnarray}\label{eqn-6}
N_0(a,b)=
\left\{\begin{array}{cl}
p^m & \mbox{ if }(a,b)=(0,0),\\
p^{m-1}, & \mbox{ if }a=0,\ b\neq 0,\\
p^{m-1}+(p-1)p^{\frac{m-2}{2}}(-1)^{\frac{m(p-1)+4}{4}}, & \mbox{ if }a\neq 0,\ \tr_{q/p}(\frac{b^2}{4a})=0,\ \eta(a)=1,\\
p^{m-1}+(p-1)p^{\frac{m-2}{2}}(-1)^{\frac{m(p-1)}{4}}, & \mbox{ if }a\neq 0,\ \tr_{q/p}(\frac{b^2}{4a})=0,\ \eta(a)=-1,\\
p^{m-1}+p^{\frac{m-2}{2}}(-1)^{\frac{m(p-1)}{4}}, &
\mbox{ if }a\neq 0,\ \tr_{q/p}(\frac{b^2}{4a})\neq 0,\ \eta(a)=1,\\
p^{m-1}+p^{\frac{m-2}{2}}(-1)^{\frac{m(p-1)+4}{4}}, &
\mbox{ if }a\neq 0,\ \tr_{q/p}(\frac{b^2}{4a})\neq 0,\ \eta(a)=-1.
\end{array} \right.
\end{eqnarray}
\end{itemize}
\end{lemma}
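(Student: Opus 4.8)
The plan is to evaluate $N_0(a,b)$ by detecting the condition $\tr_{q/p}(ax^2+bx)=0$ with the additive characters of $\gf(p)$ and then reducing the resulting sums to the Gauss sums already computed earlier in the paper. First I would apply the orthogonality relation in the form $\frac{1}{p}\sum_{c\in\gf(p)}\zeta_p^{cy}=1$ for $y=0$ and $0$ otherwise (valid for $y\in\gf(p)$), together with $\zeta_p^{c\tr_{q/p}(z)}=\chi_1(cz)$ for $c\in\gf(p)$, to obtain
\begin{align*}
N_0(a,b)=\frac{1}{p}\sum_{c\in\gf(p)}\sum_{x\in\gf(q)}\chi_1\!\left(cax^2+cbx\right).
\end{align*}
The term $c=0$ contributes $q$. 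The two cases with $a=0$ are immediate: if also $b=0$ every $x$ qualifies and $N_0=q$, while if $b\neq0$ the map $x\mapsto\tr_{q/p}(bx)$ is a nonzero $\gf(p)$-linear functional, so $N_0=p^{m-1}$.

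The substantive case is $a\neq0$. For each $c\in\gf(p)^*$ the inner sum runs $\chi_1$ over a quadratic polynomial with nonzero leading coefficient $ca$, so Lemma \ref{lem-32A2} applies and yields
\begin{align*}
\sum_{x\in\gf(q)}\chi_1\!\left(cax^2+cbx\right)=\eta(c)\,\eta(a)\,\chi_1\!\left(-\tfrac{cb^2}{4a}\right)G(\eta,\chi_1).
\end{align*}
Writing $T=\tr_{q/p}\!\left(\tfrac{b^2}{4a}\right)\in\gf(p)$ and using $\chi_1(-\tfrac{cb^2}{4a})=\zeta_p^{-cT}$, the double sum collapses to
\begin{align*}
N_0(a,b)=p^{m-1}+\frac{\eta(a)\,G(\eta,\chi_1)}{p}\sum_{c\in\gf(p)^*}\eta(c)\,\zeta_p^{-cT}.
\end{align*}
Here I would invoke Lemma \ref{lem-bothcharac} to handle the restriction of $\eta$ to $\gf(p)^*$: for $m$ even $\eta(c)=1$, and for $m$ odd $\eta(c)=\bar{\eta}(c)$. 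When $m$ is even the remaining $c$-sum equals $p-1$ if $T=0$ and $-1$ if $T\neq0$. When $m$ is odd it vanishes for $T=0$, and for $T\neq0$ a change of variable $d=-cT$ turns it into $\bar{\eta}(-T)\,G(\bar{\eta},\bar{\chi}_1)=\bar{\eta}(-T)\sqrt{p^*}$ by Lemma \ref{lem-32A1}. This produces exactly the dichotomy in the statement according to whether $\tr_{q/p}(\tfrac{b^2}{4a})$ is zero, and in the odd case the finer split governed by the sign $\eta(a)\,\bar{\eta}\!\left(-\tr_{q/p}(\tfrac{b^2}{4a})\right)=\pm1$.

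The final step is purely computational: substitute $G(\eta,\chi_1)=(-1)^{m-1}(\sqrt{-1})^{(\frac{p-1}{2})^2 m}\sqrt{q}$ from Lemma \ref{lem-32A1} (and $G(\bar{\eta},\bar{\chi}_1)=\sqrt{p^*}$ in the odd case), use $\sqrt{q}=p^{m/2}$, and rewrite the surviving powers of $\sqrt{-1}$ as the real signs $(-1)^{\frac{m(p-1)}{4}}$ and $(-1)^{\frac{(p-1)(m+1)}{4}}$ occurring in (\ref{eqn-5}) and (\ref{eqn-6}). I expect this reconciliation of the quartic roots of unity to be the only delicate point: one must verify, uniformly over $p\equiv1$ and $p\equiv3\pmod 4$ and over the parity of $m$, that $(\sqrt{-1})^{(\frac{p-1}{2})^2(\,\cdot\,)}$ reduces to the claimed powers of $-1$, which ultimately rests on $(p-1)(p-3)$ being divisible by $8$. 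Once the signs are settled, sorting the outcome by the value of $\eta(a)$, the factor $\bar{\eta}(-T)$, and whether $T=0$ reproduces precisely the six (respectively four) subcases listed in the statement.
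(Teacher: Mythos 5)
Your proof is correct, and every step can be verified against the paper's own preliminaries. Note, however, that the paper does not prove this lemma at all: it is quoted verbatim from \cite{HD2019}, so there is no in-paper argument to compare yours against. Your derivation is the canonical one and is self-contained given Lemmas \ref{lem-32A1}, \ref{lem-32A2} and \ref{lem-bothcharac}: orthogonality of additive characters of $\gf(p)$ reduces $N_0(a,b)$ to the sums $\sum_{x}\chi_1(cax^2+cbx)$ for $c\in\gf(p)^*$; Lemma \ref{lem-32A2} evaluates each as $\eta(c)\eta(a)\zeta_p^{-cT}G(\eta,\chi_1)$ with $T=\tr_{q/p}(b^2/4a)$; Lemma \ref{lem-bothcharac} splits the residual $c$-sum by the parity of $m$ (a Ramanujan-type sum for $m$ even, the small-field Gauss sum $\bar{\eta}(-T)G(\bar{\eta},\bar{\chi}_1)=\bar{\eta}(-T)\sqrt{p^*}$ for $m$ odd); and Lemma \ref{lem-32A1} supplies the explicit Gauss sums. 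I checked the final sign reconciliation, which is the only delicate point, and your reduction is right: for $m$ odd one needs $(\sqrt{-1})^{(\frac{p-1}{2})^2(m+1)}=(-1)^{\frac{(p-1)(m+1)}{4}}$ and for $m$ even one needs $(\sqrt{-1})^{(\frac{p-1}{2})^2 m}=(-1)^{\frac{m(p-1)}{4}}$; in both cases the discrepancy of exponents is $\frac{(p-1)(p-3)}{8}\cdot(m+1)$ resp. $\frac{(p-1)(p-3)}{8}\cdot m$, which is an even integer because $8$ divides $(p-1)(p-3)$ and the accompanying factor ($m+1$ resp. $m$) is even in the relevant case. With those identities, your four odd-$m$ cases and six even-$m$ cases land exactly on the values in (\ref{eqn-5}) and (\ref{eqn-6}).
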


\begin{lemma}\label{lem-sign}
Let $q=p^m$ with $p$ an odd prime and $a\in \gf(q)^*$. Define $f(x)=\tr_{q/p}(-\frac{1}{4a}x^2)$. Then the dual of $f(x)$ is $f^*(x)=\tr_{q/p}(ax^2)$ and the sign of the Walsh transform of $f(x)$ is
$$ \varepsilon=\eta(a)(-1)^{m-1+\frac{q-1}{2}}.$$
\end{lemma}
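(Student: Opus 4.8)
The plan is to evaluate the Walsh transform $\mathcal{W}_f(\beta)$ directly from its definition and then read off both the dual $f^*$ and the sign $\varepsilon$ from the resulting closed form. First I would observe that, since $f(x)=\tr_{q/p}(-\frac{1}{4a}x^2)$ is already a trace into $\gf(p)$, the Walsh transform collapses into a single quadratic additive character sum,
$$\mathcal{W}_f(\beta)=\sum_{x\in\gf(q)}\chi_1\left(-\tfrac{1}{4a}x^2-\beta x\right),$$
where $\chi_1$ is the canonical additive character of $\gf(q)$.

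The key step is to apply Lemma \ref{lem-32A2} with $a_2=-\frac{1}{4a}$, $a_1=-\beta$ and $a_0=0$. After simplifying $a_0-a_1^2(4a_2)^{-1}=a\beta^2$, this yields
$$\mathcal{W}_f(\beta)=\chi_1(a\beta^2)\,\eta\!\left(-\tfrac{1}{4a}\right)G(\eta,\chi_1).$$
The factor $\chi_1(a\beta^2)=\zeta_p^{\tr_{q/p}(a\beta^2)}$ carries all of the $\beta$-dependence, so comparing with the defining relation $\mathcal{W}_f(\beta)=\varepsilon\sqrt{p^*}^{m}\zeta_p^{f^*(\beta)}$ immediately identifies $f^*(x)=\tr_{q/p}(ax^2)$ and shows at the same time that $f$ is weakly regular bent.

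It then remains to reduce the remaining constant $\eta(-\frac{1}{4a})G(\eta,\chi_1)$ to $\varepsilon\sqrt{p^*}^{m}$. Using multiplicativity of $\eta$ together with $\eta(4)=1$ and $\eta(y^{-1})=\eta(y)$, I would show $\eta(-\frac{1}{4a})=\eta(-1)\eta(a)$. For the Gauss sum I would combine Lemma \ref{lem-32A1} with the identity $\sqrt{p^*}^{m}=\sqrt{-1}^{(\frac{p-1}{2})^2 m}\sqrt{q}$, which is read off from $G(\bar{\eta},\bar{\chi}_1)=\sqrt{p^*}$, to obtain the clean relation $G(\eta,\chi_1)=(-1)^{m-1}\sqrt{p^*}^{m}$. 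Substituting gives $\varepsilon=\eta(-1)\eta(a)(-1)^{m-1}$, and the final step is to invoke the standard fact $\eta(-1)=(-1)^{(q-1)/2}$ to rewrite this as $\varepsilon=\eta(a)(-1)^{m-1+\frac{q-1}{2}}$, as claimed.

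There is no genuinely deep step here; the argument is essentially a careful bookkeeping of sign conventions attached to the quadratic character and the Gauss sum. The main place to be careful is matching the two normalizations of the quadratic Gauss sum, namely expressing $G(\eta,\chi_1)$ from Lemma \ref{lem-32A1} in terms of $\sqrt{p^*}^{m}$ rather than $\sqrt{q}$, since any slip in the power of $\sqrt{-1}$ there would corrupt the final sign $\varepsilon$. Everything else follows from elementary multiplicative properties of $\eta$.
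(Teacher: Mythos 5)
Your proposal is correct and follows essentially the same route as the paper's own proof: evaluate $\mathcal{W}_f(\beta)$ via the quadratic character sum of Lemma \ref{lem-32A2} to obtain $\zeta_p^{\tr_{q/p}(a\beta^2)}\,\eta\!\left(-\tfrac{1}{4a}\right)G(\eta,\chi_1)$, then read off $f^*$ and extract $\varepsilon$ from Lemma \ref{lem-32A1}. The only difference is that you spell out the sign bookkeeping ($\eta(-\tfrac{1}{4a})=\eta(-1)\eta(a)$, $G(\eta,\chi_1)=(-1)^{m-1}\sqrt{p^*}^{\,m}$, and $\eta(-1)=(-1)^{(q-1)/2}$) that the paper leaves implicit in its final sentence.
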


\begin{proof}
Note that $f(x)$ is a weakly regular bent function. By definition and Lemma \ref{lem-32A2},
we have
$$
\mathcal{W}_{f}(\beta)=
\sum_{x\in\gf(q)}\zeta_p^{\tr_{q/p}(-\frac{1}{4a}x^2)
-\tr_{q/p}(\beta x)}
=\zeta_p^{\tr_{q/p}(a \beta^2)} \cdot \eta\left (-\frac{1}{4a} \right ) G(\eta, \chi) =\zeta_p^{f^*(\beta)} \cdot  \varepsilon \cdot  \sqrt{p^*}^m.
$$
Then the desired conclusions follow from Lemma \ref{lem-32A1}.
\end{proof}

\begin{lemma}\label{lem-psign}
Let $q=p^m$ with $p$ an odd prime, $a\in \gf(q)^*$ and $e$ be any positive integer such that $m/\gcd(m,e)$ is odd. Define $f(x)=\tr_{q/p}(ax^{p^e+1})$. Then the dual of $f(x)$ is $f^*(\beta )=\tr_{q/p}\left (-a \left (x_{a,-\beta} \right ) ^{p^e+1} \right )$  and the sign of the Walsh transform of $f(x)$ is
\begin{align*}
\varepsilon =
\left\{
  \begin{array}{ll}
    (-1)^{m-1+\frac{q-1}{2}} \eta(a) ,        & \mbox{if}~ p\equiv 1 ~\bmod~4, \\
    (-1)^{\frac{q-1}{2}+1} \eta(a) ,  & \mbox{if}~ p\equiv 3 ~\bmod~4, \\
  \end{array}
\right.
\end{align*}
where $\beta \in \gf(q)$ and $x_{a,-\beta}$ is the unique solution of the equation
$$
a^{p^e}x^{p^{2e}}+a x+ (-\beta)^{p^e}=0.
$$
\end{lemma}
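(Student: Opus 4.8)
The plan is to reduce the Walsh transform of $f$ to the exponential sum $S_e(a,b)$ and then invoke Lemma~\ref{lem-psum}, exactly as Lemma~\ref{lem-sign} handled the quadratic case via Lemma~\ref{lem-32A2}. First I would use the identity $\zeta_p^{\tr_{q/p}(u)}=\chi_1(u)$ to write, for any $\beta\in\gf(q)$,
\begin{align*}
\mathcal{W}_f(\beta)=\sum_{x\in\gf(q)}\zeta_p^{\tr_{q/p}(ax^{p^e+1})-\tr_{q/p}(\beta x)}=\sum_{x\in\gf(q)}\chi_1\!\left(ax^{p^e+1}+(-\beta)x\right)=S_e(a,-\beta),
\end{align*}
where $S_e$ is the sum defined in (\ref{def S}). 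Since $x^{p^e+1}$ is a PN (planar) monomial whenever $m/\gcd(m,e)$ is odd, the function $f$ is weakly regular bent, so its sign $\varepsilon$ is a single constant independent of $\beta$; hence it suffices to evaluate $\mathcal{W}_f(\beta)$ at any fixed $\beta\in\gf(q)^*$ (for $\beta=0$ the formula for $f^*$ follows from the root $x_{a,0}=0$, and $\varepsilon$ is already pinned down by constancy).

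For such $\beta$ we have $-\beta\in\gf(q)^*$, so Lemma~\ref{lem-psum} with $b=-\beta$ applies and gives
\begin{align*}
\mathcal{W}_f(\beta)=S_e(a,-\beta)=
\begin{cases}
(-1)^{m-1}\sqrt{q}\,\eta(-a)\,\chi_1\!\left(-a\,x_{a,-\beta}^{p^e+1}\right), & p\equiv 1 \bmod 4,\\[1mm]
(-1)^{m-1}\sqrt{-1}^{\,3m}\sqrt{q}\,\eta(-a)\,\chi_1\!\left(-a\,x_{a,-\beta}^{p^e+1}\right), & p\equiv 3 \bmod 4,
\end{cases}
\end{align*}
where $x_{a,-\beta}$ is the unique root of $a^{p^e}x^{p^{2e}}+ax+(-\beta)^{p^e}=0$. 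Reading off the exponent of $\zeta_p$ from the factor $\chi_1(-a\,x_{a,-\beta}^{p^e+1})=\zeta_p^{\tr_{q/p}(-a(x_{a,-\beta})^{p^e+1})}$ identifies the dual as $f^*(\beta)=\tr_{q/p}\!\left(-a(x_{a,-\beta})^{p^e+1}\right)$, which is the first assertion.

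Finally I would extract $\varepsilon$ by comparing the displays above with the weakly regular bent normalization $\mathcal{W}_f(\beta)=\varepsilon\,\sqrt{p^*}^{\,m}\,\zeta_p^{f^*(\beta)}$. Using $p^*=(-1)^{(p-1)/2}p$ and Lemma~\ref{lem-32A1}, one has $\sqrt{p^*}^{\,m}=\sqrt{q}$ when $p\equiv 1\bmod 4$ and $\sqrt{p^*}^{\,m}=\sqrt{-1}^{\,m}\sqrt{q}$ when $p\equiv 3\bmod 4$. Cancelling $\sqrt{q}$ in the first case gives $\varepsilon=(-1)^{m-1}\eta(-a)$; in the second case the quotient $\sqrt{-1}^{\,3m}/\sqrt{-1}^{\,m}=(-1)^m$ gives $\varepsilon=(-1)^{m-1}(-1)^m\eta(-a)=-\eta(-a)$. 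Substituting $\eta(-a)=\eta(-1)\eta(a)=(-1)^{(q-1)/2}\eta(a)$ then yields $\varepsilon=(-1)^{m-1+(q-1)/2}\eta(a)$ for $p\equiv 1\bmod 4$ and $\varepsilon=(-1)^{(q-1)/2+1}\eta(a)$ for $p\equiv 3\bmod 4$, as claimed. The only delicate point is the bookkeeping of the fourth roots of unity $\sqrt{-1}^{\,km}$ together with the sign $\eta(-1)=(-1)^{(q-1)/2}$; once these are tracked correctly the result is immediate.
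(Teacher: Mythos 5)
Your proof is correct and follows exactly the route the paper intends: the paper's (largely omitted) proof says to mimic Lemma~\ref{lem-sign}, using weak regularity of $f$ and Lemma~\ref{lem-psum}, which is precisely what you do by writing $\mathcal{W}_f(\beta)=S_e(a,-\beta)$, reading off $f^*$ from the factor $\chi_1\left(-a\,x_{a,-\beta}^{p^e+1}\right)$, and comparing with $\varepsilon\sqrt{p^*}^{\,m}\zeta_p^{f^*(\beta)}$ via $\eta(-1)=(-1)^{(q-1)/2}$. Your sign bookkeeping in both congruence classes of $p$ checks out, so this is a correct, fully detailed version of the paper's argument.
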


\begin{proof}
The proof is similar to that of Lemma \ref{lem-sign}, and we omit it here. Note that  $f(x)$ is a weakly regular bent function. Then the desired conclusions follow from the definitions and Lemma \ref{lem-psum}.
\end{proof}

\begin{lemma}\label{lem-psign1}
Let $q=p^m$ with $p$ an odd prime, $a\in \gf(q)^*$, $\gamma \in \gf(p)$ and $e$ be any positive integer such that $m/\gcd(m,e)$ is odd. Define
$$
\tilde{N}_{\gamma}=\# \left \{b\in \gf(q): \tr_{q/p}(b)=0  ~and~ \tr_{q/p}\left (ax_{a,b} ^{p^e+1} \right )=\gamma \right \}
$$
where $x_{a,b}$ is the unique solution of the equation
$
a^{p^e}x^{p^{2e}}+a x+ b^{p^e}=0.
$
If $\tr_{q/p}(a)=0$, then
$$
\tilde{N}_{\gamma} =
\left\{
  \begin{array}{cl}
    p^{m-2}+\varepsilon \bar{\eta}^{m/2}(-1)(p-1)p^{(m-2)/2},                        & \hbox{if $\gamma=0$ and $m$ is even,} \\
    p^{m-2},                                                                         & \hbox{if $\gamma=0$ and $m$ is odd,} \\
    \frac{p-1}{2}\left(p^{m-2}-\varepsilon \bar{\eta}^{m/2}(-1)p^{(m-2)/2}\right) ,   & \hbox{if $\gamma \neq 0$ and $m$ is even,} \\
    \frac{p-1}{2}\left( p^{m-2}+\varepsilon \sqrt{p*}^{m-1} \right),                & \hbox{if $\gamma \in \textup{SQ}$ and $m$ is odd,} \\
    \frac{p-1}{2}\left( p^{m-2}-\varepsilon \sqrt{p*}^{m-1} \right).                 & \hbox{if $\gamma \in \textup{NSQ}$ and $m$ is odd,} \\
  \end{array}
\right.
$$
where $\varepsilon$ was given in Lemma \ref{lem-psign}.
\end{lemma}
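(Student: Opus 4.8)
The plan is to identify $\tilde{N}_\gamma$ with a hyperplane section of the \emph{dual} of the weakly regular bent function $f(x)=\tr_{q/p}(ax^{p^e+1})$ and then feed this into Lemmas \ref{2lem11} and \ref{2lem14}. Since $c^{p^e}=c$ for $c\in\gf(p)^*$ we have $c^{p^e+1}=c^2$, so $f(cx)=c^2f(x)$ and $f(0)=0$, whence $f\in\mathcal{RF}$; by Lemma \ref{lem-psign} it is weakly regular bent with sign $\varepsilon$ and dual $f^*(\beta)=\tr_{q/p}(-a\,x_{a,-\beta}^{p^e+1})$. Substituting $\beta=-b$ into the definition of $\tilde N_\gamma$ and using $x_{a,-\beta}=x_{a,b}$, the condition $\tr_{q/p}(a x_{a,b}^{p^e+1})=\gamma$ becomes $f^*(\beta)=-\gamma$, while $\tr_{q/p}(b)=0$ becomes $\tr_{q/p}(\beta)=0$. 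Hence $\tilde N_\gamma=\#\{\beta\in\gf(q):\tr_{q/p}(\beta)=0,\, f^*(\beta)=-\gamma\}$, a count of points in the hyperplane $\tr_{q/p}(\beta)=0$ on which $f^*$ takes a prescribed value.

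Next I would check that $f^*$ is itself admissible for Lemmas \ref{2lem11} and \ref{2lem14}. From the defining equation $a^{p^e}x^{p^{2e}}+ax+B^{p^e}=0$ one reads off $x_{a,-c\beta}=c\,x_{a,-\beta}$ for $c\in\gf(p)^*$, so $f^*(c\beta)=c^2f^*(\beta)$ and $f^*(0)=0$; thus $f^*\in\mathcal{RF}$. Applying Fourier inversion to $\mathcal{W}_f(\beta)=\varepsilon\sqrt{p^*}^{\,m}\zeta_p^{f^*(\beta)}$ yields $\mathcal{W}_{f^*}(y)=\varepsilon\,\bar{\eta}^{\,m}(-1)\sqrt{p^*}^{\,m}\zeta_p^{f(-y)}$, so the dual of $f^*$ is $f$ (note $f(-y)=f(y)$ because $p^e+1$ is even) and its sign is $\varepsilon^*=\varepsilon\,\bar{\eta}^{\,m}(-1)$, which equals $\varepsilon$ when $m$ is even. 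The crucial observation is that the vanishing-dual hypothesis of Lemmas \ref{2lem11} and \ref{2lem14} at the frequency $\beta=1$ reads $(f^*)^*(1)=f(1)=\tr_{q/p}(a)=0$, which is precisely the standing assumption of this lemma.

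With these identifications I would read off the two families of values. For $\gamma=0$, Lemma \ref{2lem11} applied to $f^*$ at frequency $1$ computes $\#\{\beta:\tr_{q/p}(\beta)=0,\, f^*(\beta)=0\}$ directly, giving $p^{m-2}+\varepsilon\bar{\eta}^{m/2}(-1)(p-1)p^{(m-2)/2}$ for $m$ even (using $\varepsilon^*=\varepsilon$ there) and $p^{m-2}$ for $m$ odd. For $\gamma\neq0$, Lemma \ref{2lem14} applied to $f^*$ at frequency $1$ supplies the aggregate counts of $\beta$ in the hyperplane with $f^*(\beta)$ a nonzero square, respectively a nonsquare; I would then exploit the scaling symmetry $f^*(c\beta)=c^2f^*(\beta)$, which fixes the hyperplane and permutes the values inside each square class, to pass from these aggregates to the value $-\gamma$ and obtain $\tilde N_\gamma$.

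The main obstacle is the square-class bookkeeping in the $\gamma\neq0$ case. One must match the value $-\gamma$ (not $\gamma$) to the square/nonsquare alternatives, carry the sign twist $\varepsilon^*=\varepsilon\,\bar{\eta}^{\,m}(-1)$ through Lemma \ref{2lem14}, and normalize the aggregate square and nonsquare counts against the $\gf(p)^*$-scaling orbits, each square class containing $(p-1)/2$ values. For $m$ odd a pleasant cancellation is expected: the factor $\bar{\eta}(-1)$ carried by $\varepsilon^*$ is exactly offset by the interchange of the square and nonsquare classes induced by replacing $\gamma$ with $-\gamma$, so that the final answer depends on $p$ only through $\varepsilon$ and $\sqrt{p^*}^{\,m-1}$, uniformly in $p\bmod 4$. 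Verifying this cancellation and fixing the normalization constant is the delicate point; everything else follows mechanically from Lemmas \ref{2lem11}, \ref{2lem14} and \ref{lem-psign}.
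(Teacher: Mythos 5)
Your reduction is sound but follows a genuinely different route from the paper's. The paper never passes to the dual side: it takes the trace of the defining equation $a^{p^e}x_{a,b}^{p^{2e}}+ax_{a,b}+b^{p^e}=0$ to obtain $\tr_{q/p}\left((a^{p^{-e}}+a)x_{a,b}+b\right)=0$, notes that $a^{p^e}x^{p^{2e}}+ax$ is a linearized permutation polynomial so that $b\mapsto x_{a,b}$ is a bijection of $\gf(q)$, and thereby rewrites $\tilde{N}_\gamma$ as $\#\{x\in\gf(q):\tr_{q/p}(ax^{p^e+1})=\gamma,\ \tr_{q/p}((a^{p^{-e}}+a)x)=0\}$, i.e.\ a hyperplane count for $f$ itself at the frequency $\beta_0=a^{p^{-e}}+a$; it then verifies $f^*(\beta_0)=\tr_{q/p}(-a)=0$ via the explicit solution $x_{a,-\beta_0}=1$ and quotes Lemmas \ref{2lem11} and \ref{2lem14} directly. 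You instead keep the count on the dual side, $\tilde{N}_\gamma=\#\{\beta:\tr_{q/p}(\beta)=0,\ f^*(\beta)=-\gamma\}$, show $f^*\in\mathcal{RF}$ with sign $\varepsilon^*=\varepsilon\bar{\eta}^m(-1)$ by Fourier inversion, and apply the same two lemmas to $f^*$ at frequency $1$, where the hypothesis $(f^*)^*(1)=f(1)=\tr_{q/p}(a)=0$ is exactly the standing assumption. Both reductions are correct; yours makes the vanishing-dual hypothesis transparent (and $\beta=1\neq 0$ is automatic, whereas the paper tacitly needs $a^{p^{-e}}+a\neq 0$), at the price of the duality computation and the $\bar{\eta}(-1)$-versus-$(-\gamma)$ bookkeeping, which, as you anticipate, does cancel for $m$ odd.

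The step that would derail you is the final normalization. If you divide the aggregate counts coming from Lemma \ref{2lem14} by the orbit size $\frac{p-1}{2}$ so as to isolate the single value $-\gamma$, you end up with $p^{m-2}\pm\varepsilon\sqrt{p^*}^{m-1}$ for $m$ odd and $p^{m-2}-\varepsilon\bar{\eta}^{m/2}(-1)p^{(m-2)/2}$ for $m$ even, which differ from the displayed formulas by the factor $\frac{p-1}{2}$. The right-hand sides of the lemma are the \emph{aggregate} counts over an entire square (or nonsquare) class: that is exactly what Lemma \ref{2lem14} returns, it is what the paper's proof equates $\tilde{N}_\gamma$ to without any division, and it is the quantity actually used in the proof of Theorem \ref{main-52}, where $\bar{n}_b$ counts all $b$ with $\tr_{q/p}(b)=0$ and $\eta(\tr_{q/p}(ax_{a,b}^{p^e+1}))=1$. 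So to land on the statement as written you must stop at the aggregates and not normalize. Your scaling argument is still worth recording: it shows that the literal per-value reading of the definition of $\tilde{N}_\gamma$ agrees with the displayed formulas only when $p=3$ (where $\frac{p-1}{2}=1$), i.e.\ the lemma as stated is imprecise for $p>3$; but that is a defect of the statement, and the proof the paper intends (and the one you should give) identifies $\tilde{N}_\gamma$ for $\gamma\neq 0$ with the class aggregates.
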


\begin{proof}
By the definition of $x_{a,b}$, we have
$$a^{p^e} x_{a,b}^{p^{2e}}+a x_{a,b}+ b^{p^e}=\left (a^{p^{-e}} x_{a,b} \right )^{p^{2e}}+a x_{a,b}+ b^{p^e}=0.$$
This gives
$$\tr_{q/p}\left((a^{p^{-e}}+a) x_{a,b} +b\right)=0.$$
Since $a^{p^e} x^{p^{2e}}+a x$ is a linear permutation polynimial over $\gf(q)$, we have
\begin{eqnarray}\label{eqn-61}
\tilde{N}_{\gamma}=\# \left  \{x\in \gf(q): \tr_{q/p}(a x ^{p^e+1})=\gamma ~and~\tr_{q/p}\left((a^{p^{-e}}+a) x  \right)=0 \right \}.
\end{eqnarray}
Note that $x=1$ is the unique solution of the equation $$a^{p^e}(x)^{p^{2e}}+a x+ (-a^{p^{-e}}-a)^{p^e}=0 .$$
By Lemma \ref{lem-psign}, we get
\begin{eqnarray}\label{eqn-62}
f^*(a^{p^{-e}}+a)=\tr_{q/p}(-a)=0
\end{eqnarray}
 where $f^*$ is the dual of $\tr_{q/p}(ax^{p^e+1})$. By Equations (\ref{eqn-61}) and (\ref{eqn-62}), the desired conclusions follow from Lemmas \ref{2lem11}, \ref{2lem14} and \ref{lem-psign}.
\end{proof}

\begin{theorem}\label{main-51}
Let $p$ be an odd prime and $m$ be a positive integer. Let $q=p^m$,  $f(x)=x^2$ and $\C$ be defined in (\ref{eq:cf}).
Let $T=\gf(p)$. Then the shortened code $\C_{T}$ is a $[p^{m}-p, 2m-2]$ linear code. If $m$ is odd, the weight distribution of $\C_{T}$ is given in Table \ref{tab-51}, where $B=(-1)^{\frac{q-1}{2}+\frac{(p-1)(m-1)}{4}} p^{(m-1)/2}$; if $m\geq 2$ is even, the weight distribution of $\C_{T}$ is given in Table \ref{tab-52}, where
$B_1=\frac{p^{m-1}-1-(p-1)p^{\frac{m-2}{2}}(\sqrt{-1})^{\frac{(p-1)m}{2}}}{2}$
and
$B_2= (-1)^{\frac{q+1}{2} + \frac{m(p-1)}{4}}  (p-1)p^{(m-2)/2}$.
\end{theorem}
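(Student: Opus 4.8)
The plan is to describe $\C_T$ explicitly through the parametrization (\ref{eq:cf}), reduce each weight to the quadratic-character count already tabulated in Lemma \ref{lem-equalities1}, and obtain the multiplicities by counting the admissible parameter pairs with Lemmas \ref{lem-equalities}, \ref{lem-sign}, \ref{2lem11} and \ref{2lem14}.

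First I would identify the shortened subcode. A codeword $c_{a,b,c}=(\tr_{q/p}(ax^2+bx+c))_{x\in\gf(q)}$ vanishes on $T=\gf(p)$ exactly when $x^2\tr_{q/p}(a)+x\tr_{q/p}(b)+\tr_{q/p}(c)=0$ for all $x\in\gf(p)$; since $p$ is odd this is a polynomial of degree at most two with more than two roots, hence identically zero, forcing $\tr_{q/p}(a)=\tr_{q/p}(b)=\tr_{q/p}(c)=0$. Thus $c$ no longer affects the word, and after deleting the $T$-coordinates $\C_T$ consists of the vectors $(\tr_{q/p}(ax^2+bx))_{x\in\gf(q)\setminus\gf(p)}$ with $\tr_{q/p}(a)=\tr_{q/p}(b)=0$. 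Counting these three independent linear conditions against $\dim\C=2m+1$ gives dimension $2m-2$; and since any such word that is zero off $\gf(p)$ is zero on $\gf(p)$ as well, hence zero on all of $\gf(q)$, the map $(a,b)\mapsto c_{a,b}$ is injective on the admissible set. This yields length $p^m-p$, dimension $2m-2$, and shows that the multiplicity of each weight equals the number of admissible pairs $(a,b)$ producing it.

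The key reduction is that every such word vanishes on all of $T$, so puncturing removes only zero coordinates. For admissible $(a,b)$ all $p$ points of $\gf(p)$ are zeros, so the number of zeros over $\gf(q)\setminus\gf(p)$ is $N_0(a,b)-p$ and the weight is $(q-p)-(N_0(a,b)-p)=q-N_0(a,b)$, where $N_0(a,b)=\#\{x\in\gf(q):\tr_{q/p}(ax^2+bx)=0\}$ is precisely the quantity evaluated in Lemma \ref{lem-equalities1}. Reading off its cases (treating $m$ odd and $m$ even separately, and using Lemma \ref{lem-bothcharac} to trivialize $\eta$ on $\gf(p)^*$ in the even case) then produces the list of weights appearing in Tables \ref{tab-51} and \ref{tab-52}.

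It remains to compute the multiplicities, i.e. the distribution of $N_0(a,b)$ over the admissible set. The pairs with $a=0$, $b\neq0$, $\tr_{q/p}(b)=0$ contribute $p^{m-1}-1$ words of the middle weight. For $a\neq0$ I would split according to $\eta(a)$, whose counts under $\tr_{q/p}(a)=0$ are furnished by Lemma \ref{lem-equalities}, and then, for each such fixed $a$, determine how often $\tr_{q/p}(b^2/(4a))$ is zero (and, for odd $m$, how often the nonzero values are squares versus nonsquares) as $b$ runs over the hyperplane $\tr_{q/p}(b)=0$. The point is that $h(b)=\tr_{q/p}(b^2/(4a))$ lies in $\mathcal{RF}$ and, applying Lemma \ref{lem-sign} to $-a$, has dual satisfying $h^*(1)=-\tr_{q/p}(a)=0$; Lemmas \ref{2lem11} and \ref{2lem14} (with $\beta=1$) then deliver exactly these sub-counts together with the relevant sign $\varepsilon$. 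Forming the products of the $a$-counts and the $b$-counts and matching each combination to the corresponding case of Lemma \ref{lem-equalities1} gives the entries of the tables. I expect the main obstacle to be the sign bookkeeping: one must reconcile the several conventions --- the exponents $(-1)^{(p-1)(m\pm1)/4}$ and $\sqrt{-1}^{(p-1)m/2}$ in Lemmas \ref{lem-equalities} and \ref{lem-equalities1} with $\varepsilon=\eta(-a)(-1)^{m-1+(q-1)/2}$ from Lemma \ref{lem-sign} --- so that the counts collapse to the single parameters $B$, $B_1$ and $B_2$ displayed in Tables \ref{tab-51} and \ref{tab-52}.
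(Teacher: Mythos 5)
Your proposal follows the paper's own proof essentially step for step: identify the codewords of $\C$ vanishing on $T=\gf(p)$ with the pairs $(a,b)$ satisfying $\tr_{q/p}(a)=\tr_{q/p}(b)=0$, express each weight as $q-N_0(a,b)$ and read the weights off Lemma \ref{lem-equalities1}, then get the multiplicities by pairing the $a$-counts of Lemma \ref{lem-equalities} with the $b$-counts delivered by Lemmas \ref{2lem11} and \ref{2lem14} via the sign computed in Lemma \ref{lem-sign}. The only (immaterial) deviations are that you apply Lemma \ref{lem-sign} with parameter $-a$ and track $\bar{\eta}\left(\tr_{q/p}(b^2/4a)\right)$ instead of $\bar{\eta}\left(-\tr_{q/p}(b^2/4a)\right)$ --- a discrepancy your anticipated ``sign bookkeeping'' correctly absorbs --- and that your justification of the length, dimension, and injectivity of $(a,b)\mapsto \bc(a,b)$ is actually more explicit than the paper's.
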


\begin{table}[ht]
\begin{center}
\caption{The weight distribution of $\C_T$ for $m$  odd }\label{tab-51}
\begin{tabular}{cc} \hline
Weight  &  Multiplicity   \\ \hline
$0$          &  $1$ \\ [2mm]
$p^{m-1}(p-1)$    & $ (p^{m-1}-1)(p^{m-2}+1)$ \\ [2mm]
$p^{m-1}(p-1)-p^{\frac{m-1}{2}}(-1)^{\frac{(p-1)(m+1)}{4}}$    & $ \frac{p^{m-1}-1}{2}\cdot (p-1)(p^{m-2}+B)$ \\ [2mm]
$p^{m-1}(p-1)+p^{\frac{m-1}{2}}(-1)^{\frac{(p-1)(m+1)}{4}}$    & $ \frac{p^{m-1}-1}{2}\cdot (p-1)(p^{m-2}-B)$ \\ [2mm]
\hline
\end{tabular}
\end{center}
\end{table}

\begin{table}[ht]
\begin{center}
\caption{The weight distribution of $\C_T$ for $m$  even  }\label{tab-52}
\begin{tabular}{cc} \hline
Weight  &  Multiplicity   \\ \hline
$0$          &  $1$ \\
$p^{m-1}(p-1)$    & $  p^{m-1}-1$ \\
$p^{m-1}(p-1)-(p-1)p^{\frac{m-2}{2}}(-1)^{\frac{m(p-1)+4}{4}}$    & $ B_1\cdot (p^{m-2}+B_2) $ \\
$p^{m-1}(p-1)-(p-1)p^{\frac{m-2}{2}}(-1)^{\frac{m(p-1)}{4}}$    & $ (p^{m-1}-1-B_1)\cdot (p^{m-2}-B_2)$ \\
$p^{m-1}(p-1)-p^{\frac{m-2}{2}}(-1)^{\frac{m(p-1)}{4}}   $     &    $  B_1\cdot (p^{m-1}-p^{m-2}-B_2) $ \\
$p^{m-1}(p-1)-p^{\frac{m-2}{2}}(-1)^{\frac{m(p-1)+4}{4}}$     &   $ (p^{m-1}-1-B_1)\cdot (p^{m-1}-p^{m-2}+B_2)  $   \\
\hline
\end{tabular}
\end{center}
\end{table}

\begin{proof}
Denote
\begin{align}\label{eq-H}
H=\left\{(a,b) \in \gf(q)^2: \tr_{q/p}(a)=\tr_{q/p}(b)=0 \right\}.
\end{align}
By definition, the shortened code $\C_{T}$ has length $p^m-p$. The desired conclusion on the dimension of $\C_{T}$
will be clear after the weight distribution of $\C_{T}$ is settled below.

Since $T=\gf(p)$, the weight distribution of $\C_{T}$ is the same as the subcode
\begin{align}\label{eq:cf1}
{\C}(T)=\left\{\left(\tr_{q/p}(ax^2+bx)\right)_{x \in \gf(q)}:(a,b)\in H \right\}.
\end{align}
For each $(a,b) \in H $, define the corresponding codeword
\begin{eqnarray*}\label{eqn-mcodeword}
\bc (a,b)=\left(\tr_{q/p}(ax^2+bx)\right)_{x\in \gf(q)} \in {\C}(T).
\end{eqnarray*}
Then the Hamming weight of $\bc (a,b)$ is
\begin{eqnarray*}
\wt(\bc(a,b))&=& q-N_0(a,b)
\end{eqnarray*}
 where $N_0(a,b)$ was defined in Equation (\ref{eqn-50}). We discuss the value of $\wt(\bc(a,b))$ in the following two cases.

(\uppercase\expandafter{\romannumeral1}) The case that $m$ is odd.

From Equation (\ref{eqn-5}), we get
\begin{eqnarray*}
& & \wt(\bc(a,b))=q-N_0(a,b)  \nonumber \\
& & =
\left\{\begin{array}{ll}
0, & \mbox{ if }(a,b)=(0,0) \\
p^{m-1}(p-1), & \mbox{ if }a=0~and~\ b\neq 0, \mbox{ or }a\neq 0 ~ and ~ \tr_{q/p}(\frac{b^2}{4a})=0\\
p^{m-1}(p-1)-p^{\frac{m-1}{2}}(-1)^{\frac{(p-1)(m+1)}{4}}, &  \mbox{ if }a\neq 0,\tr_{q/p}(\frac{b^2}{4a})\neq 0,\ \eta(a)\bar{\eta}\left(-\tr_{q/p}(\frac{b^2}{4a})\right)=1\\
p^{m-1}(p-1)+p^{\frac{m-1}{2}}(-1)^{\frac{(p-1)(m+1)}{4}},  &    \mbox{ if }a\neq 0, \tr_{q/p}(\frac{b^2}{4a})\neq 0,\ \eta(a)\bar{\eta} \left(-\tr_{q/p}(\frac{b^2}{4a})\right)=-1 \\
~ &
\end{array} \right. \\ [2mm]
& & = \left\{\begin{array}{ll}
0,    &   \mbox{ with $1$ time},\\
p^{m-1}(p-1),    &   \mbox{ with $(p^{m-1}-1)(p^{m-2}+1)$  times},\\
p^{m-1}(p-1)-p^{\frac{m-1}{2}}(-1)^{\frac{(p-1)(m+1)}{4}},    &   \mbox{ with $ \frac{p^{m-1}-1}{2}\cdot (p-1)(p^{m-2}+B)$  times},\\
p^{m-1}(p-1)+p^{\frac{m-1}{2}}(-1)^{\frac{(p-1)(m+1)}{4}},   &   \mbox{ with $\frac{p^{m-1}-1}{2}\cdot (p-1)(p^{m-2}-B)$   times}, \\
\end{array} \right.
\end{eqnarray*}
 when $(a,b)$ runs through $H$, where
$$B=(-1)^{m-1+\frac{q-1}{2}} \cdot \sqrt{p^*}^{m-1}=(-1)^{\frac{q-1}{2}+\frac{(p-1)(m-1)}{4}} p^{(m-1)/2}$$
and the frequency is obtained from Lemmas
 \ref{2lem11}, \ref{2lem14}, \ref{lem-equalities} and \ref{lem-sign}.
We first compute the frequency $A_w$ of the nonzero weight $w$, where
$$w=p^{m-1}(p-1)-p^{\frac{m-1}{2}}(-1)^{\frac{(p-1)(m+1)}{4}}.$$
Then
$$A_w=\sharp \{(a,b)\in H:a\neq 0,\ \tr_{q/p}\left (\frac{b^2}{4a} \right ) \neq 0 ,\ \eta(a)\bar{\eta} \left(-\tr_{q/p}\left (\frac{b^2}{4a} \right )\right)=1\}.$$
Clearly, the number of $a \in \gf(q)^*$ such that $\tr_{q/p}(a)=0$ and $\eta(a)=1$, is $\bar{n}_a=\frac{p^{m-1}-1}{2}$ by Lemma \ref{lem-equalities}; if we fix $a$ with $\tr_{q/p}(a)=0$ and $\eta(a)=1$, the number of $b$ such that $\tr_{q/p}(b)=0$  and $ \bar{\eta} (-\tr_{q/p}(\frac{b^2}{4a}))=1$, is $\bar{n}_b=\frac{p-1}{2}(p^{m-2}+  (-1)^{m-1+\frac{q-1}{2}} \cdot \sqrt{p^*}^{m-1})$ by Lemmas \ref{2lem14} and \ref{lem-sign}. Meanwhile, the number of $a \in \gf(q)^*$ such that $\tr_{q/p}(a)=0$ and $\eta(a)=-1$, is $\hat{n}_a=\frac{p^{m-1}-1}{2}$ by Lemma \ref{lem-equalities}; if we fix $a$ with $\tr_{q/p}(a)=0$ and $\eta(a)=-1$, the number of $b$ such that $\tr_{q/p}(b)=0$  and $ \bar{\eta} (-\tr_{q/p}(\frac{b^2}{4a}))=-1$, is $\hat{n}_b=\frac{p-1}{2}(p^{m-2}+ (-1)^{m-1+\frac{q-1}{2}} \cdot \sqrt{p^*}^{m-1})$ by Lemmas \ref{2lem14} and \ref{lem-sign}. Hence,
$$A_w=\bar{n}_a \bar{n}_b+\hat{n}_a \hat{n}_b= \frac{p^{m-1}-1}{2} (p-1) \left (p^{m-2}+ (-1)^{m-1+\frac{q-1}{2}} \cdot \sqrt{p^*}^{m-1} \right ).$$
The frequencies of other nonzero weights can be similarly derived.

(\uppercase\expandafter{\romannumeral2}) The case that $m$ is even.

From Equation (\ref{eqn-6}), we get
\begin{eqnarray*}
& & \wt(\bc(a,b))=q-N_0(a,b)  \nonumber \\
& & =
\left\{\begin{array}{ll}
0, & \mbox{ if }(a,b)=(0,0) \\ [2mm]
p^{m-1}(p-1), & \mbox{ if }a=0,\ b\neq 0 \\ [2mm]
p^{m-1}(p-1)-(p-1)p^{\frac{m-2}{2}}(-1)^{\frac{m(p-1)+4}{4}}, & \mbox{ if }a\neq 0,\ \tr_{q/p}(\frac{b^2}{4a})=0,\ \eta(a)=1 \\

p^{m-1}(p-1)-(p-1)p^{\frac{m-2}{2}}(-1)^{\frac{m(p-1)}{4}}, & \mbox{ if }a\neq 0,\ \tr_{q/p}(\frac{b^2}{4a})=0,\ \eta(a)=-1 \\

p^{m-1}(p-1)-p^{\frac{m-2}{2}}(-1)^{\frac{m(p-1)}{4}},  &      \mbox{ if }a\neq 0,\ \tr_{q/p}(\frac{b^2}{4a})\neq 0,\ \eta(a)=1 \\

p^{m-1}(p-1)-p^{\frac{m-2}{2}}(-1)^{\frac{m(p-1)+4}{4}},    &   \mbox{ if }a\neq 0,\ \tr_{q/p}(\frac{b^2}{4a})\neq 0,\ \eta(a)=-1 \\
\end{array} \right. \\ [2mm]
& & =
\left\{\begin{array}{ll}
0, &                                              \mbox{ with $  1 $ time},\\
p^{m-1}(p-1),                                    & \mbox{ with $  p^{m-1}-1 $ times},\\
p^{m-1}(p-1)+(p-1)p^{\frac{m-2}{2}}(-1)^{\frac{m(p-1)}{4}}, & \mbox{ with $  B_1\cdot (p^{m-2}+B_2) $ times}, \\
p^{m-1}(p-1)-(p-1)p^{\frac{m-2}{2}}(-1)^{\frac{m(p-1)}{4}},   & \mbox{ with $  (p^{m-1}-1-B_1)\cdot (p^{m-2}-B_2)  $ times}, \\
p^{m-1}(p-1)-p^{\frac{m-2}{2}}(-1)^{\frac{m(p-1)}{4}},        &       \mbox{ with $  B_1\cdot (p^{m-1}-p^{m-2}-B_2) $ times}, \\
p^{m-1}(p-1)+p^{\frac{m-2}{2}}(-1)^{\frac{m(p-1)}{4}},     & \mbox{ with $ (p^{m-1}-1-B_1)\cdot (p^{m-1}-p^{m-2}+B_2)  $ times}, \\
\end{array} \right.
\end{eqnarray*}
where $(a,b)$ runs through $H$, $B_1=\frac{p^{m-1}-1-(p-1)p^{\frac{m-2}{2}}(\sqrt{-1})^{\frac{(p-1)m}{2}}}{2}$,
$$B_2= (-1)^{m-1+\frac{q-1}{2}} \bar{\eta}^{m/2}(-1) \cdot (p-1)p^{(m-2)/2}=(-1)^{\frac{q+1}{2} + \frac{m(p-1)}{4}}  (p-1)p^{(m-2)/2},$$
and
the frequency is easy to obtain from Lemmas  \ref{2lem11}, \ref{2lem14}, \ref{lem-equalities} and \ref{lem-sign}.
The weight distribution of $\C_{T}$ can be handled in much the same way as the case that $m$ is odd. Details are omitted here.

By the above two cases, the weight distributions in Tables \ref{tab-51} and \ref{tab-52} follow. This completes the proof.
\end{proof}

\begin{example}\label{exa-51}
Let $m=3$ and $p=3$. Then the shortened code $\C_{T}$  in Theorem \ref{main-51} is a $[24,4,15]$ linear code with the weight enumerator $1+48z^{15}+32z^{18}$. This code $\C_{T}$ is optimal. Its dual $\C_{T}^\perp$ has parameters $[24,20,3]$ and is optimal according to the tables of best known codes   maintained at http://www.codetables.de.
\end{example}

\begin{example}\label{exa-53}
Let $m=4$ and $p=3$. Then the shortened code $\C_{T}$  in Theorem \ref{main-51} is a $[78,6,48]$ linear code with the weight enumerator $1+240z^{48}+240z^{51}+26z^{54}+192 z^{57}+30 z^{60}$. This code $\C_{T}$ is almost optimal. Its dual $\C_{T}^\perp$ has parameters $[78,72,2]$ and is almost optimal according to the tables of best known codes   maintained at http://www.codetables.de.
\end{example}

%

\begin{theorem}\label{main-52}
Let $p$ be an odd prime, $m$ and $e$ be positive integers such that $m/\gcd(m,e)$ is odd. Let $q=p^m$,  $f(x)=x^{p^e+1}$ and $\C$ be defined in (\ref{eq:cf}).
Let $T=\gf(p)$. Then the parameters of the shortened code $\C_{T}$ are the same as that of $\C_{T}$ in Theorem \ref{main-51}.
\end{theorem}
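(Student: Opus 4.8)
The plan is to run the proof of Theorem~\ref{main-51} essentially verbatim, replacing the quadratic-form inputs (Lemmas~\ref{lem-equalities1} and~\ref{lem-sign}) by their $x^{p^e+1}$ counterparts (Lemmas~\ref{lem-NN0}, \ref{lem-sumxp}, \ref{lem-psign} and~\ref{lem-psign1}). First I would reduce the shortened code to a concrete subcode. The crucial observation is that every $x\in\gf(p)$ satisfies $x^{p^e}=x$, hence $x^{p^e+1}=x^2$ on $\gf(p)$; therefore a codeword $\left(\tr_{q/p}(ax^{p^e+1}+bx+c)\right)_{x}$ vanishes on $T=\gf(p)$ if and only if
\[
x^2\tr_{q/p}(a)+x\tr_{q/p}(b)+\tr_{q/p}(c)=0 \quad \text{for all } x\in\gf(p).
\]
This is a polynomial identity of degree at most $2<p$ over $\gf(p)$, so it forces $\tr_{q/p}(a)=\tr_{q/p}(b)=\tr_{q/p}(c)=0$. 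Consequently $\C_T$ has the same weight distribution as
\[
\C(T)=\left\{\left(\tr_{q/p}(ax^{p^e+1}+bx)\right)_{x\in\gf(q)}:(a,b)\in H\right\},
\]
with $H$ the set defined in (\ref{eq-H}); since $(a,b)$ ranges over the two $(m-1)$-dimensional hyperplanes $\tr_{q/p}(a)=0$ and $\tr_{q/p}(b)=0$ and the map $(a,b)\mapsto\bc(a,b)$ is injective (as for the parent code $\C$), this also yields $\dim\C_T=2m-2$ and length $p^m-p$.

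Next I would record the weight of each codeword. Exactly as in Theorem~\ref{main-51}, $\wt(\bc(a,b))=q-\hat N_0(a,b)$, where now $\hat N_0(a,b)=\#\{x\in\gf(q):\tr_{q/p}(ax^{p^e+1}+bx)=0\}$ is precisely the quantity evaluated in Lemma~\ref{lem-NN0} (note the $p$ positions of $T$ are automatically counted among the zeros, so puncturing does not change the weight). Lemmas~\ref{lem-NN0} and~\ref{lem-sumxp} show that the only weights occurring are $p^{m-1}(p-1)$ and $p^{m-1}(p-1)$ shifted by a $p$-power, the shift being governed, for $a\neq0,b\neq0$, by $\eta(a)$ together with the vanishing of $\gamma:=\tr_{q/p}(a\,x_{a,b}^{p^e+1})$ and, when $m$ is odd, its quadratic character $\eta(\gamma)$. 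This is the structural analogue of the role played by $\eta(a)$ and $\tr_{q/p}(b^2/4a)$ in the $x^2$ computation.

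The heart of the argument is the frequency count, which I would split into $m$ odd and $m$ even as in Theorem~\ref{main-51}. For a fixed $a$ with $\tr_{q/p}(a)=0$, Lemma~\ref{lem-psign1} counts exactly how many $b$ with $\tr_{q/p}(b)=0$ produce $\gamma=0$, $\gamma\in\textup{SQ}$, or $\gamma\in\textup{NSQ}$ (only $\gamma=0$ versus $\gamma\neq0$ for even $m$); here one feeds in the sign $\varepsilon$ and dual supplied by Lemma~\ref{lem-psign} for $\tr_{q/p}(ax^{p^e+1})$. Combining these with the cardinalities of $\{a:\tr_{q/p}(a)=0,\ \eta(a)=\pm1\}$ from Lemma~\ref{lem-equalities}, and treating the degenerate strata $a=0$ and $b=0$ separately via Lemma~\ref{lem-NN0}, I obtain the multiplicity of each weight. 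The final step is to check these multiplicities coincide with Tables~\ref{tab-51} and~\ref{tab-52}; since those tables are written purely in terms of $m$ and $p$, it suffices to verify the resulting expressions match algebraically.

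The main obstacle is this last matching. One must confirm that $\varepsilon$, $\eta(a)$ and the $\gamma$-data in the $x^{p^e+1}$ setting reproduce, value for value, the quantities $\eta(a)\bar\eta(-\tr_{q/p}(b^2/4a))$ and the sign data that drove the $x^2$ count, and that the normalizations $\Delta/p$ from Lemma~\ref{lem-sumxp} collapse to exactly $\pm p^{(m-1)/2}(-1)^{(p-1)(m+1)/4}$ for odd $m$ and the corresponding $p^{(m-2)/2}$ expressions for even $m$. Because $x^{p^e+1}$ is weakly regular bent with the same sign behaviour as the planar quadratic, I expect this bookkeeping to go through and to deliver identical tables; it is entirely a matter of careful, case-by-case character-sum accounting rather than any new idea.
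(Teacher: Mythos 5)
Your proposal follows essentially the same route as the paper's proof: reduce $\C_T$ to the subcode $\C(T)$ indexed by $H$, compute weights as $q-\hat N_0(a,b)$ via Lemma~\ref{lem-NN0}, and count frequencies by combining Lemma~\ref{lem-equalities} (for $a$) with Lemmas~\ref{lem-psign} and~\ref{lem-psign1} (for $b$), splitting into parity cases and matching against Tables~\ref{tab-51} and~\ref{tab-52}. Your explicit justification of the reduction step --- that $x^{p^e+1}=x^2$ on $\gf(p)$ and a degree-$<p$ polynomial identity force $\tr_{q/p}(a)=\tr_{q/p}(b)=\tr_{q/p}(c)=0$ --- is a detail the paper merely asserts, but it does not change the argument's structure.
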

\begin{proof}
The proof is similar to that of Theorem \ref{main-51}. Recall that the code $\C$ has length $q$ and dimension $2m+1$. By  definition,   the shortened code $\C_{T}$ has length $p^m-p$. The desired conclusion on the dimension of  $\C_{T}$ will be
clear after the weight distribution of  $\C_{T}$ is settled below. Since  $T=\gf(p)$, the  weight distribution of $\C_{T}$ is the same as the code
\begin{align}\label{eq:cf2}
{\C}(T)=\left\{\left(\tr_{q/p}(ax^{p^e+1}+bx)\right)_{x \in \gf(q)}:(a,b)\in H \right\},
\end{align}
where $H$ was defined by (\ref{eq-H}).

For each $(a,b) \in H$, define the corresponding codeword
\begin{eqnarray*}
\bc (a,b)=\left(\tr_{q/p}(ax^{p^e+1}+bx)\right)_{x\in \gf(q)} \in \hat{\C}.
\end{eqnarray*}
Then the Hamming weight of $\bc (a,b)$ is
\begin{eqnarray}\label{eq-word}
\wt(\bc(a,b))=q-\hat{N}_0(a,b),
\end{eqnarray}
where $\hat{N}_0(a,b)$ was defined in Lemma \ref{lem-NN0}.

We determine the value of $\wt(\bc(a,b))$ and its frequencies according to the parity of $m$ and the residue of $p$ modulo $4$ as follows.
\begin{itemize}
  \item (\uppercase\expandafter{\romannumeral1}) $m$ is odd and $p\equiv 1~\bmod~4$.
  \item (\uppercase\expandafter{\romannumeral2}) $m$ is odd and $p\equiv 3~\bmod~4$.
  \item (\uppercase\expandafter{\romannumeral3}) $m$ is even and $p\equiv 1~\bmod~4$.
  \item (\uppercase\expandafter{\romannumeral4}) $m$ is even and $p\equiv 3~\bmod~4$.
\end{itemize}

Next we only give the proof for the case (\uppercase\expandafter{\romannumeral1})  and omit the proofs for the other there cases whose proofs are similar.

Suppose  that $m$ is odd and $p\equiv 1~\bmod~4$. From Equation (\ref{eq-word}) and Lemma \ref{lem-NN0}, we get
\begin{eqnarray*}
& & \wt(\bc(a,b))=q-\hat{N}_0(a,b)  \nonumber \\
& & =
\left\{\begin{array}{ll}
0, & \mbox{ if }(a,b)=(0,0) \\
p^{m-1}(p-1), & \mbox{ if }ab=0, (a,b)\neq (0,0) \\
               ~          & \mbox{ or }ab \neq 0, \tr_{q/p}(a(x_{a,b})^{p^e+1})= 0 \\
p^{m-1}(p-1)-p^{m/2-1} \sqrt{p^*}                                &  \mbox{ if }ab\neq 0, \tr_{q/p}(a(x_{a,b})^{p^e+1})\neq 0,\eta(a)\eta( \tr_{q/p}(a(x_{a,b})^{p^e+1}) )=1 \\

p^{m-1}(p-1)+p^{m/2-1} \sqrt{p^*}                                &  \mbox{ if }ab\neq 0,  \tr_{q/p}(a(x_{a,b})^{p^e+1})\neq 0, \eta(a)\eta( \tr_{q/p}(a(x_{a,b})^{p^e+1}) )=-1         \\
\end{array} \right. \\ [2mm]
& & = \left\{\begin{array}{ll}
0,    &   \mbox{ with $1$ time},\\
p^{m-1}(p-1),    &   \mbox{ with $(p^{m-1}-1)(p^{m-2}+1)$  times},\\
p^{m-1}(p-1)-p^{\frac{m-1}{2}},    &   \mbox{ with $ \frac{p^{m-1}-1}{2}\cdot (p-1)(p^{m-2}+B)$  times},\\
p^{m-1}(p-1)+p^{\frac{m-1}{2}},  &   \mbox{ with $\frac{p^{m-1}-1}{2}\cdot (p-1)(p^{m-2}-B)$   times}, \\
\end{array} \right.
\end{eqnarray*}
when $(a,b)$ runs through $H$, where $B=(-1)^{m-1+\frac{q-1}{2}} \cdot \sqrt{p^*}^{m-1}$ and the frequency is obtained by Lemmas \ref{lem-equalities} and \ref{lem-psign} .
As an example, we just compute the frequency $A_w$ of the nonzero weight $w$, where
$$w=p^{m-1}(p-1)-p^{\frac{m-1}{2}}. $$
Thus
$$A_w=\sharp \{(a,b)\in H:~ab\neq 0, \tr_{q/p}(a(x_{a,b})^{p^e+1})\neq 0,  \eta(a)\eta( \tr_{q/p}(a(x_{a,b})^{p^e+1}) )=1 \}.$$
Clearly, the number of $a \in \gf(q)^*$ such that $\tr_{q/p}(a)=0$ and $\eta(a)=1$, is $\bar{n}_a=\frac{p^{m-1}-1}{2}$ by Lemma \ref{lem-equalities}; if we fix $a$ with $\tr_{q/p}(a)=0$ and $\eta(a)=1$, the number of $b$ such that $\tr_{q/p}(b)=0$   and $\eta( \tr_{q/p}(a(x_{a,b})^{p^e+1}) )=1$, is $\bar{n}_b=\frac{p-1}{2}\left( p^{m-2}+(-1)^{m-1+\frac{q-1}{2}}\cdot \sqrt{p*}^{m-1} \right)$
by Lemma \ref{lem-psign1}. Meanwhile, the number of $a \in \gf(q)^*$ such that $\tr_{q/p}(a)=0$ and $\eta(a)=-1$, is $\hat{n}_a=\frac{p^{m-1}-1}{2}$ by Lemma \ref{lem-equalities}; if we fix $a$ with $\tr_{q/p}(a)=0$ and $\eta(a)=-1$, the number of nonzero $b$ such that $\tr_{q/p}(b)=0$   and $\eta( \tr_{q/p}(a(x_{a,b})^{p^e+1}) )=-1$, is $\hat{n}_b=\frac{p-1}{2}\left( p^{m-2}+(-1)^{m-1+\frac{q-1}{2}} \sqrt{p*}^{m-1} \right)$  by Lemma \ref{lem-psign1}. Hence,
$$A_w=\bar{n}_a \bar{n}_b+\hat{n}_a \hat{n}_b= \frac{p^{m-1}-1}{2}\cdot (p-1)\left (p^{m-2}+ (-1)^{m-1+\frac{q-1}{2}} \cdot \sqrt{p^*}^{m-1} \right ).$$
The frequencies of other nonzero weights can be similarly derived. This completes the proof of the weight distribution of Table in  \ref{tab-51} for the case $m$   odd and $p\equiv 1~\bmod~4$.

The proofs of the other three cases are similar. The desired conclusions follow from Equation (\ref{eq-word}), Lemmas \ref{lem-NN0}, \ref{lem-equalities} and \ref{lem-psign1}. This completes the proof.
\end{proof}

\section{Concluding remarks}\label{sec-summary}

In this paper,  we mainly investigated some shortened codes of linear codes from PN and APN functions and determined their parameters. The obtained codes have a few weights and many of these codes are optimal or almost optimal. Specifically, the main contributions are summarized below.
\begin{itemize}
  \item For any binary linear code $\C$ with length $q=2^m$ and the weight distribution in Table \ref{tab-cf},
  we determined the weight distributions of the shortened codes $\C_T$ for $\#T \in \{1, 2,3\}$ (see Theorem \ref{main-design1}) and gave a general result on the shortened codes $\C_{T}$ with $\#T=4$ in Theorem \ref{main-31}. Meanwhile, when $m$ is odd, the parameters of the shortened codes $\C_{T}$ of a class of binary linear codes from APN functions were determined in Theorem \ref{main-odd4}.

  \item For any binary linear code $\C$ with length $q=2^m$ and the weight distribution in Table \ref{tab-cf1}, we
  settled the weight distributions of the shortened codes $\C_T$ with $\#T \in \{1,2\}$ (see Theorem \ref{main-design2})
  and developed a general result on the shortened codes $\C_{T}$ with $\#T=3$ in Theorem \ref{main-even32}. Further, the parameters of the shortened codes $\C_{T}$  from certain APN functions were determined for $\#T=3$ and $\#T=4$  in Theorems \ref{main-even33} and \ref{main-even41}.
  \item Two classes of $p$-ary shortened codes $\C_{T}$ from PN functions were presented and their parameters were also determined in Theorems \ref{main-51} and \ref{main-52}, where $p$ is an odd prime.
\end{itemize}
Furthermore, the parameters of the shortened codes look new.

In addition to the works in \cite{LDT20} and this paper, other linear codes with good parameters may be produced with the shortening technique. However, it seems hard to determine the weight distributions of shortened and punctured codes in general. The reader is cordially invited to join the adventure in this direction.

\end{document}